\tikzset{>={Latex[width=3mm,length=3mm]}}
    \newtheorem{theorem}{Theorem}
\newtheorem{lemma}[theorem]{Lemma}
\newtheorem{claim}[theorem]{Claim}
\newtheorem{proposition}[theorem]{Proposition}
\newtheorem{corollary}[theorem]{Corollary}
\newtheorem{assumption}{Assumption}
\newtheorem{condition}{Condition}
\newtheorem{definition}{Definition}[section]
\newenvironment{proof}[1][Proof]{\begin{trivlist}
\item[\hskip \labelsep {\bfseries #1}]}{\end{trivlist}}
\newcommand{\qed}{\nobreak \ifvmode \relax \else
      \ifdim\lastskip<1.5em \hskip-\lastskip
      \hskip1.5em plus0em minus0.5em \fi \nobreak
      \vrule height0.75em width0.5em depth0.25em\fi}
\def\0{{\mathbb 0}}
\title{Inference from Selectively Disclosed Data}
\date{\AdvanceDate[-1]\today}
\author{Ying Gao\thanks{Department of Economics, MIT. Email: gaoy@mit.edu. I thank Drew Fudenberg and Stephen Morris for their generous advice and guidance, and Ian Ball, In-Koo Cho, Laura Doval, Wioletta Dzuida, Robert Gibbons, Ellen Muir, Alexander Wolitzky, and audiences at the Stony Brook International Conference in Game Theory, Econometric Society Summer School in Game Theory, European Summer Meetings of the Econometric Society, MIT Organizational Economics Seminar, and MIT Theory Lunch for helpful comments.}}
\begin{document}
\onehalfspacing
\begin{titlepage}
    \maketitle
    
    \begin{abstract}
    We consider the disclosure problem of a sender with a large data set of hard evidence who wants to persuade a receiver to take higher actions. Because the receiver will make inferences based on the distribution of the data they see, the sender has an incentive to drop observations to mimic the distributions that would be observed under better states. We predict which observations the sender discloses using a model that approximates large datasets with a continuum of data. It is optimal for the sender to play an \emph{imitation strategy}, under which they submit evidence that imitates the natural distribution under some more desirable target state. We characterize the partial-pooling outcomes under these imitation strategies, and show that they are supported by data on the outcomes that maximally distinguish higher states. Relative to full information, the equilibrium with voluntary disclosure reduces the welfare of senders with little data or a favorable state, who fully disclose their data but suffer the receiver's skepticism, and benefits senders with access to large datasets, who can profitably drop observations under low states.
    \end{abstract}
    \setcounter{page}{0}
    \thispagestyle{empty}
\end{titlepage}

\section{Introduction}
Many decisions -- including technology adoption, regulatory approval, and research grantmaking -- are  based on self-disclosed data. The datasets used can often be very large, on the order of tens of thousands of trials for drug approval, and often hundreds of thousands of datapoints about locations and sales in merger cases. In and of themselves, big datasets may paint an accurate picture of reality, but the sender can disclose them strategically: it is easier to verify that the data submitted are real than that they are complete, and even in the presence of mandatory disclosure rules, deciding which observations are admissible to include in the dataset is largely at the sender's discretion. 

We want to understand the role data play in strategic communication between the sender and the decision-maker when receivers have uncertainty about the underlying dataset from which the sender extracted the submitted data. We consider the case of a sender with state-independent motives to persuade the receiver towards a particular action, and a receiver who observes a dataset the sender discloses, but interprets it with partial skepticism that the data are incomplete. Equilibrium play between the sender and receiver involves the sender submitting data as ``proof" that the receiver should take a favorable action, and the receiver evaluating how persuasive the proof is depending on how likely it is sent by a sender with less persuasive data who has trimmed some discouraging observations. This can be modeled under the framework of an evidence game in which senders that have access to datasets with weakly more observations of each outcome can always mimic senders with fewer observations. A special case, in which senders either have or do not have access to a single data point,  with probability known to the receiver, is already well-understood (\citealt{Dye85}), and demonstrates that senders can manipulate the receiver by disclosing nothing when the evidence is sufficiently poor.

Our primary innovation is to characterize disclosure in the opposite extreme, when datasets contain many observations. We propose a continuous-data model of the  asymptotic distribution over potential datasets of the sender that depends on two things: the true state of the world that generates the data, and a random variable that describes the amount of data the sender collects. The continuum assumption captures the fact that empirical distributions are approximately deterministic in the limit with large numbers, and allows us to eliminate uncertainty over the randomness of draws, which makes the model more tractable than  directly modeling large, finite $N$. Instead, we show that the outcome we characterize in the continuous model describes the limit outcome of communication in finite-data games as $N \rightarrow \infty$.

In addition to an extensive list of observations, a second characteristic feature of ``big" data is a large outcome space. This motivates the novel use of a framework that encompasses general statistical settings, including those in which outcome and state spaces are large and the relationship between them complex. In particular, we place essentially no restrictions on the state-contingent data distribution. In general, unlike in a ``good news-bad news" model of data, the ranking of states and the shape of their data-generating distributions endogenously affects the interpretation of different outcomes, and context determines whether data take on a positive or negative connotation. 

Indeed, our first main result, Prop. \ref{prop:sufficiency}, is a sufficiency result that says that in the receiver-optimal partial pooling equilibrium outcome, the state-contingent experimental outcome distribution affects the information transmitted only through a handful of key features: what matters are the observations of outcomes with the greatest likelihood ratio under a better vs. a worse state. Strikingly, since the distribution of data that distinguishes one state from another depends solely on the relative probabilities of likelihood ratio-maximizing outcomes, a receiver who wants to distinguish a relatively small number of states with many observations of high-dimensional data can do just as well restricting the dataset to only retain information about these outcomes. When state-contingent distributions of experimental outcomes satisfy the monotone likelihood ratio property (MLRP), we return to the case of only one ``good news" outcome, and distinguishing it from other outcomes is sufficient to support receiver-optimal communication.

Our second result, Theorem \ref{thm:imitate}, characterizes an ``imitation" equilibrium implementation of the receiver-optimal equilibrium outcome, in which senders always show the receiver a dataset that can correspond to a naturally-generated dataset, so that on path, the receiver always places positive probability on the event that the sender is sending all their data. However, the receiver also infers from some datasets that the sender has with positive probability observed data corresponding to a different state than the revealed data suggest, but has dropped observations in order to imitate a more favorable distribution. When MLRP fails, it is important for the imitating sender to send a large-enough mass of realizations of a certain outcome, but not too much. The resulting outcome benefits senders under low states with more data at the expense of senders with less data in high states, since the former pool with the latter. The extent of pooling depends on the receiver's uncertainty about the sender's data collection capabilities: the greater the variance in the receiver's belief about how much data the sender starts out with, the more senders can profitably imitate other senders, with outcomes converging to the full-information one as uncertainty vanishes. 

The other contribution is an algorithm to construct the limit game equilibrium outcomes that follows a top-down logic: senders with more data receive weakly greater payoffs, and we can construct the payoff frontiers of the continuous payoff function by specifying the burden of proof, or how much data of a given state's distribution a sender needs, to induce a particular belief in the receiver. The algorithm is applicable to any number of states and for distributions with finite, discrete support, and we illustrate it with representative 2 and 3 state examples.

\subsection{Related literature}
Strategic disclosure has been studied since the work of \citet{Grossman81} and \citet{Milgrom81}, which showed that full disclosure is the unique outcome when receivers know that the sender wishes to prove the value of a good is high using verifiable information that they could choose to disclose. The assumption that receivers know the sender is informed is crucial to this benchmark, as \citet{Dye85} and \citet{JungKwon88} show. They consider a case in which the sender has access to a single, real-valued piece of evidence with interior probability $p \in (0,1)$, and shows that only senders for whom the evidence exceeds a threshold will choose to disclose it, with the rest withholding it in order to pool with those senders who lack evidence altogether. \citeauthor{Shin94} (1994, 2003) shows that in the case where senders have an uncertain endowment of good news and bad news, the fact that senders withhold bad enough evidence implies a ``sanitation equilibrium", in which all bad news is disposed of.

We extend these results by considering evidence structures with large, multidimensonal datasets. In our data-based setting, evidence is neither exogenously good nor bad, but the receiver draws inferences statistically, based on knowledge of the relationship between relevant state-parameters and the distributions of data they generate. The setting we consider  encompasses the settings above and captures a special case of more abstract evidence games of the type considered by \citet{Green86} and \citet{Hart17}. The main focus in those settings has been on receiver-optimal mechanisms to induce beneficial disclosures from the sender. \citet{Hart17} in particular is foundational to our equilibrium selection criterion. Their observation that the optimal mechanism, the receiver-optimal equilibrium, and the unique truth-leaning equilibrium all yield the same outcome generalizes straightforwardly to our setting.\footnote{The optimal mechanism equivalence result has been noted by others, in particular \citet{Glaezer06}, \citet{Sher11}, and \citet{BenPorath19}, who show that the fact that commitment is not necessary for the optimum is robust to other settings, in particular with binary actions and multiple senders with type-dependent preferences.}

\citet{Rappoport22} and \citet{Jiang22} use an iterative algorithm to solve for truth-leaning  equilibrium outcomes in finite evidence games, but it is computationally demanding to use it in games with large type spaces, and therefore infeasible to directly compute the large-$N$ limit of outcomes in finite-data games. Our approach is instead to use a continuous-data approximation to solve for asymptotic outcomes without explicitly computing outcomes of finite-data games, and to show that it is exactly the big-data limit outcome.
Only one other paper that we know of, \citet{Dzuida11}, uses a continuous measure of evidence to solve for communication with verifiable evidence. The model considers only evidence with a continuum of states but a simple ``good news-bad news" outcome structure, and assumes there is a positive probability of a behavioral, honest type of the sender. The existence of the honest type, along with the assumption of continuity in outcomes, also selects the most plausible equilibrium in a similar fashion to truth-leaning. The honest type also drives an observation that providing interior amounts of negative evidence can be optimal in otherwise sanitation-like equilibria, because honest types will send some negative evidence; however, when the probability of honest types vanishes, so does this behavior. We observe a similar finding but for two different reasons. First, when the state space is finite, giving fewer observations of ``bad" outcomes may not change the receiver's belief conditional on other outcomes, so retaining them can be lossless to the sender. Secondly, with more than two states or two outcomes, there is also no single ``good" outcome, and sending interior amounts of some outcomes may rule in just the right set of rational types, while sending either none or as much as possible of an outcome might both be strictly worse.

We also relate to a broader literature about the optimal collection and disclosure of evidence, that considers costly (\citet{Migrow22}) and dynamic evidence acquisition (\citealt{Felgenhauer14},  \citealt{Henry19}), sender-optimal disclosure mechanisms (\citealt{Haghtalab22}), and discretionary disclosure after test or information design (\citealt{Shishkin22}, \citealt{Dasgupta22}). Several papers use a restricted notion of evidence but are also explicitly concerned with the effect of allowing sample selection: \citet{Fishman90} and \citet{DiTillio21} study the case in which only a subset of observations are disclosed, and give conditions under which it is better that an informant have discretion over which data are selected. Finally, work in econometrics by \citet{Simonsohn14}, \citet{Andrews19} and others studies the bias that arises from exogenously selective reporting, and describes inference procedures that correct for it.

\section{Model}
\noindent
\textbf{States and payoffs.}
There is a sender ($S$), who wishes to communicate to a receiver ($R$) about an unknown state of the world, $\theta  \in \Theta = \{\theta_1, \ldots, \theta_J\}$. The sender and receiver share a common prior $\beta_0(\cdot)$ over $\Theta$. We assume that the receiver takes an action $a_r \in \mathbb{R}$ and that $\theta_1, \ldots, \theta_J$ are real numbers ordered with $\theta_j \le \theta_{j+1}$, representing the optimal action for the receiver under each state, if it was known with certainty. The sender's payoff is simply (a monotone function of) $a_r$;\footnote{Because the receiver will always play a pure strategy, the sender's problem is unchanged if their payoffs are rescaled through a monotone mapping.} in short, regardless of their type, they want to induce the receiver to take the highest action possible.

Finally, we assume the receiver has an expected utility that is differentiable and single-peaked at the action that matches their expectation of the value of $\theta$, that is, that for any belief $\beta \in \Delta \Theta$, the receiver's expected payoff $\mathbb{E}_\beta[u_r(a)]$ is single-peaked at $a_r(\beta) = \mathbb{E}_\beta[\theta]$.\footnote{The assumption of single-peakedness is necessary to identify the receiver-optimal equilibrium and the receiver-optimal mechanism.} We work with the sender's indirect utility as a function of the receiver's beliefs, which induces them to maximize the receiver's posterior expectation of $\theta$:
\begin{equation}
u_s(\beta) = \mathbb{E}_\beta[\theta].
\end{equation}
For example, when the receiver is a policymaker, states can represent the true optimal policy. While the policymaker might be uncertain, they wish to enact a policy that matches the optimal policy in expectation, while the sender wishes for them to take as high an action as possible.

\noindent
\textbf{Evidence.}
The private information of the sender comes in the form of hard evidence about the state of the world. In particular, the sender has access to a dataset of observations drawn from a finite set of outcomes, $\mathcal{D} = \{1,\ldots, D\}$. The underlying data-generating distribution is state-contingent: under state $\theta_j$, the observations are i.i.d. draws from distribution $f_j$.

We model the amount of data the sender has access to as a mass, $\mu \in [0,1]$, that represents the fraction of total potential data that the sender can access, and has a continuous distribution, $g$, that is state-independent\footnote{For simplicity of exposition, we focus on the case in which their belief about $\mu$ conditional on $\theta$ is given by a probability density $g$ that is independent of $\theta$, although most results hold identically for cases in which the distribution of $\mu$ is state-specific.}, supported on $[0, 1]$ with $g(1) = 0$, and infinitely left-differentiable\footnote{The assumption that $g$ has a vanishing right tail ensures that it is continuous on $\mathbb{R}^+$ while being supported on $[0,1]$, and simplifies the equilibrium construction: specifically, it ensures that the equilibrium payoffs are continuous in $\mu$.}.
The continuum assumption models big datasets in which the large number of draws essentially removes all uncertainty about the impact of randomly realized outcomes on the sender's dataset: conditional on state $\theta_j$, the empirical distribution of data the sender observes is certain to be $f_j$, and $\mu$ does not affect the distribution of their evidence, only the amount of it. In other words, with probability $1$, a sender with a mass $\mu$ of data under state $\theta_j$ observes the dataset $t = \mu f_j$. Any nonzero measure of data fully informs the sender of the state, and the set of possible complete datasets and types of the sender is $\mathcal{T} = [0,1] \times \Theta$.

The receiver, on the other hand, is uninformed about how much data the sender has. 
Their prior belief about the sender's type is given by the density
\begin{equation}
q(\mu f_j) = \beta_0(\theta_j) g(\mu).
\end{equation}

\noindent
\textbf{Messaging and inference.} Senders can choose a subset of observations from their dataset to submit to the receiver. We assume total flexibility in the choice of subset:

\begin{assumption}
The sender can send any message $m \in \mathcal{M} = [0,1] \times \Delta \mathcal{D}$ that is a \emph{subset} of their dataset ($m \tilde \subseteq \mu f_j$), where
\[
m \tilde \subseteq \mu f_j \ \Leftrightarrow \ m(d) \le  \mu f_j(d) \ \  \forall d \in \mathcal{D}.
\]
\end{assumption}

\begin{figure}
    \centering
    \begin{subfigure}{.5\textwidth}
    \includegraphics[scale = 0.4]{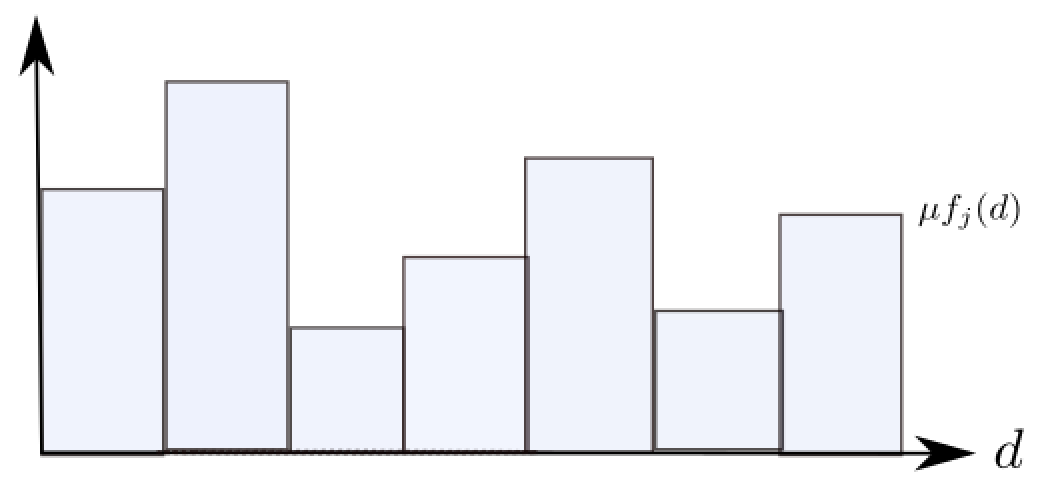}
    \end{subfigure}%
    \begin{subfigure}{.5\textwidth}
    \includegraphics[scale = 0.4]{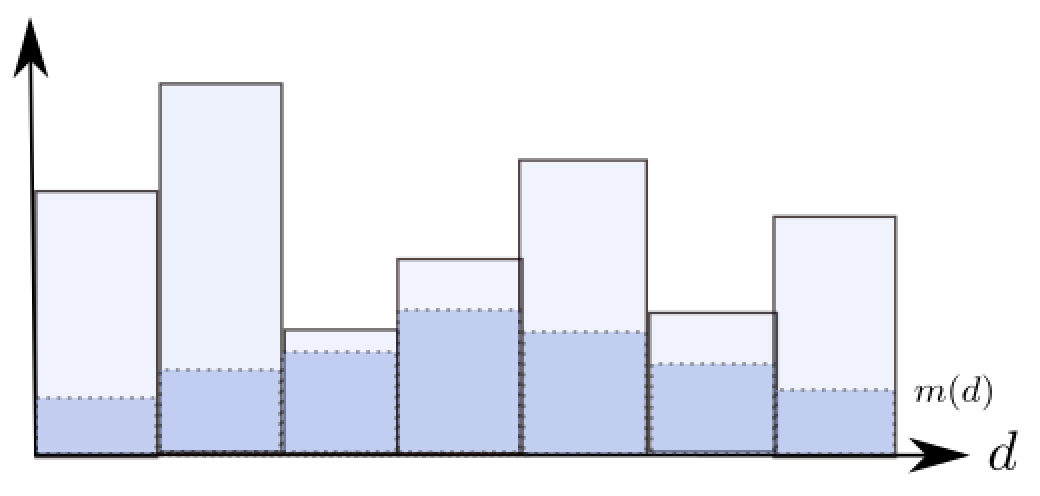}
    \end{subfigure}
    \label{fig:feasible_type_msg}
    \caption{A feasible type and a feasible message.}
\end{figure}

That is, a sender can drop an arbitrary mass of observations from their data, and then show the remaining ones to the receiver. By dropping observations, they can arbitrarily alter the relative frequencies of each outcome in the submitted dataset in order to imitate any distribution. However, this is costly in that it reduces the size of the submitted dataset, which is observable. 

We have that $\mathcal{M} \supset \mathcal{T}$: the message space contains the set of all possible complete datasets, but also a $D$-dimensional set of other datasets that could be disclosed to the receiver after excluding part of their dataset. For any set of messages $M$, define the upper set $U(M)$ to be the set of types that can send a message in $M$, and for any set of types $T$, define the lower set $L(T)$ as the set of messages that some $t \in T$ can send.

Call the disclosure game with these parameters $\mathcal{G}(\Theta, \mathcal{D}, \beta_0, \{f_j\}_{j=1}^J, G)$. Upon observing the sender's message, the receiver updates their belief about the sender's type to $q(t|m)$, and then forms a new belief about the state,
\begin{equation} \label{eq:state_beliefs}
    \beta(\theta_j|m) = \frac{\sum_{j = 1}^J \int_{\mu = 0}^1 q(\mu f_j|m)\theta_j}{\sum_{j = 1}^J \int_{\mu = 0}^1 q(\mu f_j|m)}.
\end{equation}

\subsection{Equilibrium}
The sender plays a messaging strategy $\sigma^*: \mathcal{T} \rightarrow \Delta \mathcal{M}$, knowing which the receiver infers the content of message they receive.
As usual, the equilibrium we consider will be a Perfect Bayesian Equilibrium (\citealt{FudenbergTirole}), that is, $\beta^*(\cdot|m)$ must be consistent with the sender's strategy $\sigma^*$, and the sender must optimize, so
$\sigma^*(m|t) > 0$ only if $m \in \arg \max_{m' \in L(t)} \mathbb{E}_{\beta(\cdot|m')}[\theta]$.

Call the map from types to payoffs, $u_{\sigma^*}(t)$, the \emph{outcome} of the equilibrium.\footnote{This is a departure from the usual definition of an outcome of an extensive-form game, but consistent with the definition in \citet{Hart17} and \citet{Rappoport22}. It describes the action the receiver plays after communicating with each type, and so describes the consequences of communication in the game.} In the perfectly separating outcome, the sender obtains a payoff of $\theta$. As in \citet{Milgrom81}, \citet{Grossman81}, and \citet{Dye85}, when $g$ is a degenerate distribution such that $\mu$ is known to the receiver, then all attempts to mislead the receiver unravel, and the fully separating outcome obtains in every PBE. When $g$ and all $f_j$ have full support, there is partial pooling in every PBE. However, PBE are often not unique, and in this case, there may be multiple $\beta^*$, differing on off-path messages, that are consistent with $\sigma^*$, and the game generically has multiple, non-payoff-equivalent PBE outcomes. Any message that can be played by some type of sender under a state $\theta_j \ge \mathbb{E}_{\beta_0}[\theta]$ is played on-path in some PBE.

Intuition suggests that the game is fundamentally one of imitation: senders tailor their data to increase the receiver's belief that the state is a higher one, and they can only do so by imitating the datasets submitted by higher-state types, who themselves may be imitating others or trying to distinguish themselves as well as possible from lower-state types. One way to imitate a higher-state type of sender is to try to prove you have all the data that they would, and no more -- that is, to imitate their complete dataset. We define an \emph{imitation equilibrium} to capture the idea that sender masquerades as other type by imitating their full datasets.

\begin{definition} \label{def:imitation_eq}
    $(\sigma^*, \beta^*)$ is an imitation equilibrium if it is an equilibrium, and under $\sigma^*$,
\begin{enumerate}
    \item[a.] Every on-path message is in $\mathcal{T}$,
    \item[b.] Type $\mu f_j$ plays $m \neq \mu f_j$ if and only if $\theta_j < \max_{m' \in L(t)} \mathbb{E}_{\beta^*(\cdot|m)}[\theta]$, and otherwise reports their full dataset.
\end{enumerate}
\end{definition}
In other words, with an imitation messaging strategy every type of the sender either fully reveals their data or imitates another type's full dataset, and they only consider the latter if it could give them a better payoff than letting the receiver be fully informed of the state. 

Why do we focus on these equilibria? Imitation equilibria are \emph{truth-leaning}, as first defined by \citet{Hart17} in the context of general evidence games with finite types. The idea applies identically in this setting. Formally, given a base game $\mathcal{G}$, for $\epsilon = (\epsilon_t, \epsilon_{t|t})_{t \in \mathcal{T}}$, let a game $\mathcal{G}_{\epsilon}$ be the game with an identical type set and type distribution, but with two differences. First, type $t$'s payoffs to playing $t$ are perturbed by $\epsilon_t$, so that $t$'s payoff to playing $t$ is $\mathbb{E}_{\beta(\cdot|t)}[\theta] + \epsilon_t$. Secondly, type $t$ plays $t$ with at least probability $\epsilon_{t|t}$ -- i.e. with probability $\epsilon_{t|t}$ a sender with dataset $t$ is a commitment type that plays their full dataset regardless of whether doing so is optimal, while with probability $1-\epsilon_{t|t}$ type $t$ is strategic. A truth-leaning equilibrium is an equilibrium of the base game that can be obtained as a limit of equilibria of $\epsilon$-perturbed games as $\epsilon \rightarrow 0$.

While truth-leaning equilibrium strategies capture a sender’s slight bias towards truth-telling, the truth-leaning equilibrium outcome has desirable properties in its own right. When the receiver’s expected payoffs are single-peaked in their action, the truth-leaning equilibrium outcome is also receiver-optimal, and is the outcome of the optimal mechanism when the receiver can commit to a single action as a response to each message. This is well-known in the finite case studied by \citet{Hart17}, and continues to be true in the continuous model that we study. It is also the only equilibrium outcome robust to a slightly stronger version of a credible announcement (\citealt{Matthews91}). We say that under equilibrium $\sigma^*$ a coalition $T$ of senders can benefit from an \emph{inclusive} credible announcement if there is a set of messages such that the coalition is comprised of every sender that 1) finds some message in the set feasible and 2) weakly benefits from the receiver updating that their type is in $T$ from the prior, relative to the receiver's equilibrium inference; and there is at least one sender in the coalition that strictly benefits.\footnote{The departure from the usual credible announcement is that the coalition must also contain all senders who are indifferent between participating in the announcement and their equilibrium payoff.} Robustness to such announcements means that the equilibrium survives even if senders are able to override the receiver's beliefs by proposing sensible reinterpretations of messages, and can coordinate to do so; it rules out, for example, play that is ``stuck" in a bad equilibrium due to immalleable off-path beliefs. For a deeper discussion of these refinements, see \citet{Hart17} and Appendix \ref{app:refinement}. 

\begin{claim}
    Imitation equilibrium messaging strategies are the truth-leaning equilibrium messaging strategies of $\mathcal{G}$. Imitation equilibrium outcomes are:
    \begin{itemize}
        \item Receiver-optimal among equilibria and pure-strategy mechanisms;
        \item The unique inclusive announcement-proof equilibrium outcomes.
    \end{itemize}
\end{claim}

\subsection{Examples}

\subsubsection{A 2-state prediction problem} \label{ex:2state}

\noindent
A sender wishes to provide evidence to prove the quality of a prediction algorithm that aims to classify whether a future event is likely or unlikely. The quality of the algorithm is either high or low ($\Theta = \{\theta_L, \theta_H$\}), with $\theta_L = 0$ and $\theta_H = 1$.
Suppose that there are $4$ possible outcomes, $\mathcal{D} = \{1,2,3,4\}$ with the following distribution of outcomes per state:\footnote{For example, the problem could be predicting whether it will rain, and outcomes $1, 2, 3, 4$ could be (predict no rain, no rain), (predict rain, rain), (predict rain, no rain) and (predict no rain, rain), respectively. We can think of the high-quality algorithm as being able to more accurately make the right call when it will not rain, while the low-quality algorithm often predicts rain even when it will not rain.}

\begin{table}[H]
    \centering
    \begin{tabular}{c|cccc}
         j & $f_j(1)$ & $f_j(2)$ & $f_j(3)$ & $f_j(4)$ \\
         \hline
        H &  0.6 & 0.2 & 0.1 & 0.1 \\
        L &  0.4 & 0.2 & 0.3 & 0.1
    \end{tabular}
    \caption{The generating distribution of outcomes under states $\theta_H$ and $\theta_L$.}
    \label{tab:ex_2_state}
\end{table}

Imitation implies that every on-path message either contains data distributed like $f_H$ or like $f_L$, which the receiver can interpret as a claim that ``the state is $H$" or ``the state is $L$", respectively. But since the sender strictly prefers the receiver to believe the state is $H$ with higher probability, there is no reason to imitate $f_L$. Indeed, part \ref{def:imitation_eq}(b) of the definition of an imitation equilibrium ensures that the only on-path messages take the form $\mu f_H$, since no posterior belief of the receiver is worse for the sender than full certainty that $\theta = \theta_L$.

Additionally, the sender chooses an amount of data to send, which the receiver can interpret as an amount of support to back up their claim. The sender's true dataset determines whether they are able to submit more or less data that fits the distribution, and it is optimal for the receiver distinguish them along this margin to encourage partial separation. When evidence is generated as in Table \ref{tab:ex_2_state}, the sender can send $m = \mu f_H$ if and only if the true data are $\mu' f_H$ with $\mu' \ge \mu$, or $\mu' f_L$ with $\mu' \ge \frac{3}{2}\mu$. The \emph{distinguishability factor} of $\frac{3}{2}$ reflects the relative advantage to a sender under $\theta_H$ of imitating $f_H$, and comes from the fact that in order to be able to submit enough observations of outcome $1$ to imitate $\mu f_H$, a sender under $\theta_L$ must start with $\frac{3}{2}$ as much data.

As a naive first guess, suppose that the sender's strategy is to always send the maximum possible amount of data that is distributed like $f_H$.
\begin{equation} \label{eq:naive_guess}
m_{max}(\mu f_H) = \mu f_H, \ \ \ m_{max}(\mu f_L) = \frac{2}{3}\mu f_L.
\end{equation}
Consider the uniform prior $\beta_0(\theta_H) = \frac{1}{2}$ and a data-mass distribution that is ``triangular",
\[
g(\mu) = 2 - 4|x-1/2|.
\]
The receiver's inference upon receiving a message $m_{max} = \mu f_H$, plotted by the solid line in Figure \ref{fig:binary_ex_2}(a), is
\[
\rho_{\max}(\mu) = \begin{cases}
1, \ \ \ \ \ \ \ \  \ \mu \ge 2/3\\
\frac{4-4\mu}{10 - 13 \mu}, \ \ \ \mu \in [1/2,2/3)\\
\frac{4\mu}{6 - 5\mu}, \ \  \ \ \ \mu \in [1/3,1/2)\\
\frac{4}{13}, \ \ \ \ \ \  \ \ \mu < 1/3.
\end{cases}
\]
To visualize how the receiver constructs the posterior inference, observe that the density of senders who send a message $\mu f_H$ for a $\mu$ for whom the true state is $\theta_H$ and $\theta_M$ are $g(\mu)$ and $\frac{3g(3\mu/2)}{2}$, which are plotted as two dotted lines. Their ratio is the likelihood ratio of the high vs. the low state given message $\mu f_H$.
\begin{figure}
    \centering
    \begin{subfigure}[t]{0.5\textwidth}
        \includegraphics[scale=0.4]{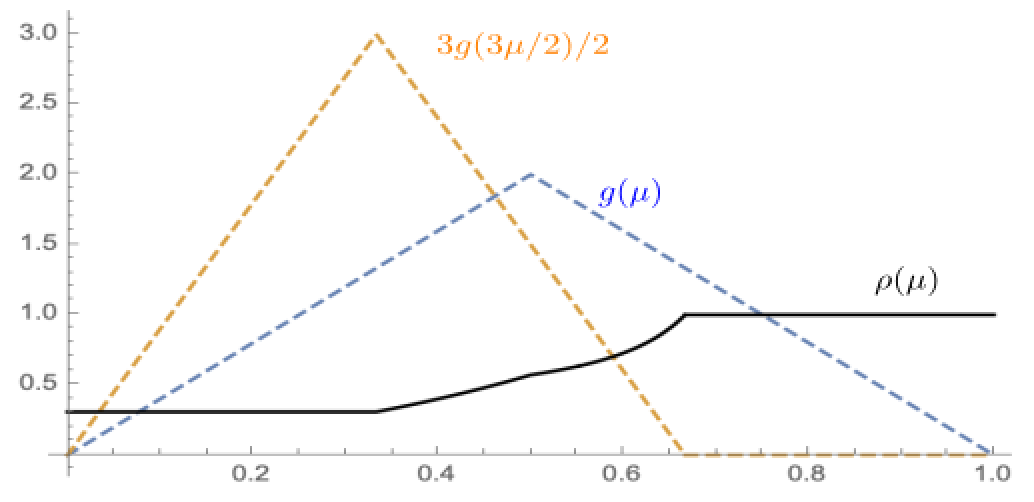}
        \caption{Posterior beliefs when $g$ is triangular}
    \end{subfigure}%
    \begin{subfigure}[t]{0.5\textwidth}
        \includegraphics[scale=0.4]{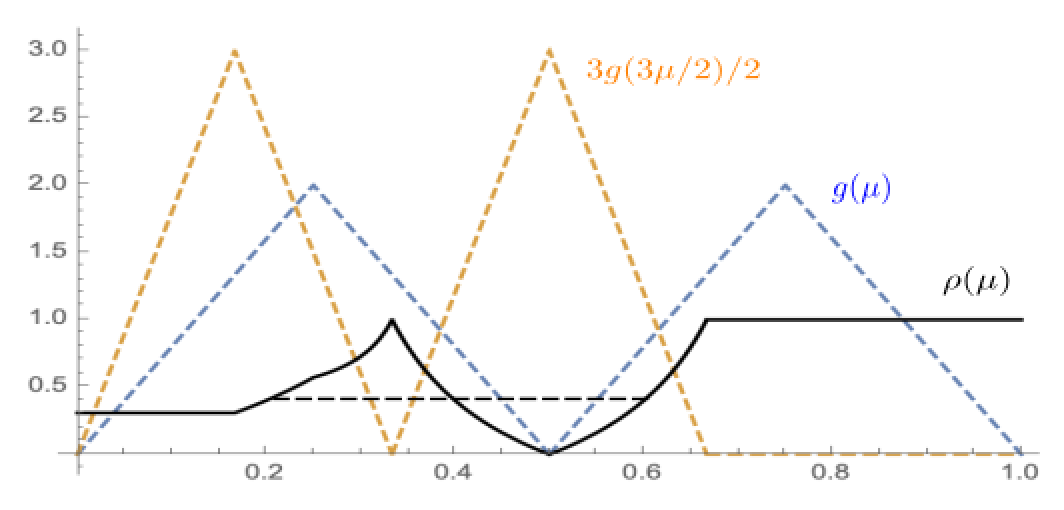}
        \caption{Posterior beliefs when $g$ is double-triangular}
    \end{subfigure}
    \caption{Inferences from message $m = \mu f_H$ in the binary-state example.}
    \label{fig:binary_ex_2}
\end{figure}

\noindent
\textbf{Observation 1.} \label{obs:only_r} \textit{$\rho_{\max}$ depends only on $\beta_0$, $g$, and the distinguishability factor.} 

In other words, the distinguishability of $f_H$ from $f_L$ is a sufficient statistic for both distributions that captures their implications for inferences under the naive strategy. In fact, we can verify that the naive messaging strategy in eq. \ref{eq:naive_guess} supports an equilibrium, under the assumption that any off-path messages feasible for some low-state type of the sender are evidence of the low state. More generally, the naive strategy is the unique imitation equilibrium strategy whenever it induces monotone inferences from the receiver. 

In some cases, $\rho_{max}(\mu)$ is nonmonotone, such as when $\mu$ takes the ``double triangular" distribution
\[
g(\mu) = \begin{cases}
2 - 8|x-1/4|,  \ \ \ \ \ x\in [0,1/2]\\
2 - 8|x-3/4|, \ \ \ \ \ x \in (1/2,1].
\end{cases}
\]
If all types of the sender send the maximal mass of data imitating $f_H$, then the message $1/2 f_H$ makes the receiver more pessimistic than the message $1/3 f_H$, and incentive compatibility fails because a sender who was to send the former would choose to send the latter instead. This is easily fixed, however, if, within a pooling interval, all types of the sender still imitate $f_H$, but send less than the maximal mass. The dashed line in Figure \ref{fig:binary_ex_2}(b) shows that the receiver's inferences given all messages in an interval can be equalized this way, so that the unique equilibrium inference is instead an ironed version of $\rho_{\max}(\mu)$.\footnote{The ironing process can be described as follows. If all types that would send $m = \mu f_H$ for some $\mu \in [\underline{\mu}, \bar \mu]$ were pooled, the receiver's inference given the pool would be 
\[
p(\underline{\mu}, \bar \mu) = \frac{\int_{\underline{\mu}}^{\bar \mu} g(\mu) d \mu}{\int_{\underline{\mu}}^{\bar \mu} \left(g(\mu) + \frac{3g(3\mu/2)}{2}\right) d \mu}.
\]
Given some $\mu^*$ at which $\rho_{\max}(\mu)$ is decreasing, we can find $\underline{\mu} < \mu^* < \bar{\mu}$, such that either $\rho_{\max}(\mu)$ is increasing at both $\underline{\mu}$ and $\bar{\mu}$, and $\rho(\underline{\mu}) = \rho(\bar \mu) = p(\underline{\mu}, \bar \mu)$;
or $\underline{\mu} = 0$ and $\rho_{\max}(\mu)$ is increasing at $\bar{\mu}$ with $\rho(\bar \mu) = p(\underline{\mu}, \bar \mu)$; or, $\bar \mu = 0$ and $\rho_{\max}(\mu)$ is increasing at $\underline{\mu}$ with $\rho(\underline{\mu}) = p(\underline{\mu}, \bar \mu)$.
There is a pair $(\underline{\mu}, \bar \mu)$ satisfying these criteria that are closest to $\mu^*$, and they are the endpoints of the ironing interval.
}

\subsubsection{A 3-state extension} \label{ex:3state}
Now suppose there is a 3rd possible quality of the prediction model, represented by state $\theta_M$. The medium-quality model yields a different distribution of predictions; to summarize, the distributions of the same $4$ outcomes under all states are given by Table \ref{tab:ex_2_3state}.\footnote{In the example weather-prediction application, the state-$M$ algorithm is better than the state-$H$ algorithm at calling the presence of rain, but worse at identifying when it will not rain. It is correct less often than the state-$H$ algorithm, but more often than the state-$L$ algorithm.}
\begin{table}[H]
    \centering
    \begin{tabular}{c|cccc}
         & $f_j(1)$ & $f_j(2)$ & $f_j(3)$ & $f_j(4)$ \\
         \hline
        H & 0.6 & 0.2 & 0.1 & 0.1 \\
        M & 0.4 & 0.25 & 0.3 & 0.05 \\
        L & 0.4 & 0.2 & 0.3 & 0.1
    \end{tabular}
    \caption{Data-generating distributions under states $\theta_H$, $\theta_M$ and $\theta_L$.}
    \label{tab:ex_2_3state}
\end{table}

Consider first the problem of a sender who knows that $\theta = \theta_L$. There are now $2$ distributions that they can imitate: $f_M$ and $f_H$. On the other hand, a sender for whom $\theta = \theta_M$ may wish to imitate is $f_H$, but never $f_L$. It takes at least $\frac{5}{4}\mu f_L$ and $\frac{3}{2}\mu f_L$ to imitate $\mu f_M$ and $\mu f_H$, respectively, and $2 \mu f_M$ to imitate $\mu f_H$. We can now keep track of three distinguishability factors, $r_L(M) = \frac{5}{4}$, $r_L(H) = \frac{3}{2}$, and $r_M(H) = 2$. 

Relative to the binary-state case, solving for the equilibrium when $|\Theta| \ge 3$ involves an extra step: understanding which state a sender will choose to target in imitation. Nevertheless, construction can proceed from the top down. First observe that types $\mu f_H$ with $\mu > \frac{1}{r_L(H)}$ can separate and obtain a payoff of $\theta_H$. We then ask which types of senders obtain a payoff $v \in (\theta_M, \theta_H)$. For this restricted set of payoff frontiers, it suffices to consider imitating $f_H$ only, since no message imitating $f_M$ can yield a payoff greater than $\theta_M$. Similarly to the binary-state case, in this regime the receiver can conjecture that the sender ``imitates as much of $f_H$ as possible", and restore monotonicity if needed by ironing. For payoff frontiers corresponding to $v < \theta_M$, one of two things is possible. If the state is $\theta_M$ and the sender has enough data to separate from all other types that cannot obtain $v > \theta_M$ by imitating $f_H$, then they play their full dataset and separate. Otherwise, unless the state is $\theta_L$, the sender plays their full dataset, but their full dataset is imitated by some type for whom the state is low, and who plays a strategy that mixes between imitating $f_H$ and $f_L$. Figure \ref{fig:3state_new} summarizes how the three distinguishability factors $r_L(M), r_L(H)$, and $r_M(H)$ determine the equilibrium: it projects all types onto a space that summarizes how imitable $f_H$ and $f_M$ are, as the vertical and horizontal dimensions, and shows their imitation strategies and payoffs in equilibrium.

\begin{figure}
    \centering
    \includegraphics[scale=0.6]{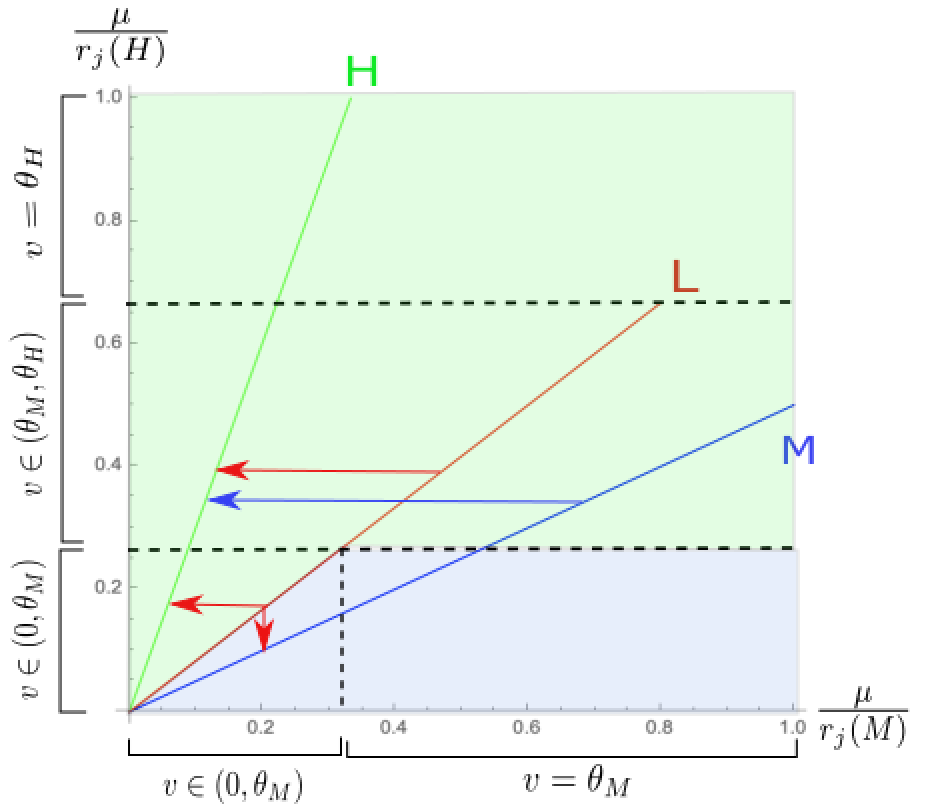}
    \caption{Under states $\theta_L$, $\theta_H$, and $\theta_M$, the sender either imitates $f_H$ (types in green region) or $f_M$ (types in blue region), or mixes (boundary).}
    \label{fig:3state_new}
\end{figure}
A novelty of the payoff structure with $3$, and indeed more, states is that the sender will separate and fully inform the receiver of the state only if they possess an intermediate amount of data -- ignoring the best and worst state, under $\theta_M$ there is a temptation to drop evidence with too much data, and an inability to distinguish oneself from imitators when too little data is acquired. The generality of multiple states also has other implications.

\noindent
\textbf{Observation 2.} \textit{The multi-state case has features that do not occur when $|\Theta| = 2$ or $|\mathcal{D}| = 2$:
\begin{itemize}
    \item Sending an interior mass of observations of some outcomes may be strictly optimal.
    \item Fixing the amount of data, the sender can receive greater payoffs under a state that is worse under full information.
\end{itemize}}
As an example of the first point, consider type $\mu f_L$ imitating $\frac{2}{3} \mu f_H$ by sending a mass $\frac{1}{15} \mu$ of observations of outcome $4$. Sending a greater mass would rule out the type $\frac{2}{3} \mu f_H$ that it wants to imitate, but sending less would rule in types like $(\frac{2}{15} - \epsilon)\mu f_M$, which would worsen the receiver's inference from the message. To demonstrate the second point, observe that the type $\mu f_L$ obtains a greater payoff than the type $\mu f_M$ when $\mu = 1$, because the former can imitate $\frac{2}{3} \mu f_H$, while the latter can only imitate $\frac{1}{2}\mu f_H$. 

\section{Construction and characterization}
This section characterizes the imitation equilibrium, constructs it, and shows that it is essentially unique. The imitation equilibrium is distinguished among equilibria by the fact that in it, worse types imitate better types (condition \ref{def:imitation_eq}b). This is directly reflected in the structure of the receiver's beliefs once they receive an on-path message $m$: the best case for any message is that the receiver takes it literally to be the sender's full dataset, while any skepticism that this is true negatively affects their inferences. Any off-path dataset $m \in \mathcal{T}$ might as well be taken literally,
\begin{equation} \label{eq:truth_leaning_beliefs}
q^*(\cdot|m) = \mathbbm{1}_{m}  \ \text{for all off path } m \in \mathcal{T},
\end{equation}
and is off path not because the receiver's inferences are ``artificially depressed" but because imitating some other dataset is strictly preferred for the type $t = m$. Therefore, the sender benefits from selective disclosure if and only if they lie -- there are no imitation equilibria that increase the payoff of truthful senders relative to their payoff when the receiver is fully informed. On the other hand, truthful senders can suffer -- since other senders can dishonestly imitate them, the receiver can be skeptical of their dataset even if they tell the truth.

In addition, for any dataset \emph{not} resembling some raw dataset, $m \not \in \mathcal{T}$, there are off-path beliefs
\begin{equation} \label{eq:truth_leaning_beliefs}
q^*(\cdot|m) = q^*(t|\underset{t' \supseteq m, \ t'\in \mathcal{T}}{\arg \min}  \ \mathbb{E}_{\beta(\cdot|\sigma^*(t'))}[\theta])  \ \text{for all } m \in \mathcal{M} \setminus \mathcal{T},
\end{equation}
and given these beliefs, senders never benefit from playing a dataset that the receiver knows for sure to be incomplete. Because of this, an observer of the interaction between senders and receivers would not be able to tell if senders are strategically omitting data simply by looking at the distributions of the published data -- some prior about how much data the sender ought to have is necessary to know if observations are being dropped. 

We have established that, in an imitation equilibrium, a sender's ability to positively influence the receiver depends on the extent to which they can imitate another state. In turn, this depends on the mass of their own dataset, $\mu$, and the extent to which $f_k$ can be distinguished from $f_{j}$, which is  given by
\[
r_j(k) = \max_{d \in \mathcal{D}} \frac{f_{k}(d)}{f_j(d)}.
\]
This distinguishability factor $r_j(k)$ is a measure of the comparative advantage to a sender under state $\theta_k$ to reporting a dataset distributed like $f_k$, relative to a sender under state $\theta_j$.\footnote{Equivalently, we can consider its inverse, $\frac{1}{r_j(k)}$, an \emph{imitability} factor that describes how easily $f_k$ is imitated under state $\theta_j$.} It can be interpreted to mean that ``under state $\theta_j$, a sender would need $r_j(k)$ times as much data to imitate $\mu f_k$ than under $\theta_k$". A sharp feature of the continuum model is that pairwise distinguishability comparisons fully suffice to summarize the impact of the shape of generating distributions $\{f_j\}_{j=1}^J$ on the imitation equilibrium outcome. 
\begin{proposition}[Sufficiency] \label{prop:sufficiency}
    Two games $\mathcal{G}$ and $\mathcal{G}'$ must yield the same outcome if they share the same state space $\Theta$ and priors $\beta_0$ and $G$, and for all $j$ and $k$,
    \[
    \max_{d \in \mathcal{D}} \frac{f_k(d)}{f_j(d)} \equiv r_j(k) = r_j'(k) \equiv \max_{d' \in \mathcal{D}'} \frac{f_k'(d')}{f_j'(d')}.
    \]
\end{proposition}
In other words, even if $\mathcal{D}$ is very large, $\{f_j\}_{j=1}^J$ only affect the menu of possible beneficial manipulations through a select set of summary statistics, which are each supported by a single point in $\mathcal{D}$. We will delay discussion of the comparative statics of distinguishability, as well as their implications for optimal experimental design, to section \ref{sec:exp_design}. 
Now we leverage these factors to complete our characterisation of the imitation equilibrium. All equilibrium outcomes can be described by a vector-valued function $\hat{\mathbf{u}}(\mu) = (\hat u_j(\mu))_{j=1}^J$, with $\hat \mu_j(\mu) = u_{\sigma}(\mu f_j)$. The imitation equilibrium outcome has an even simpler description: each sender's messaging problem can be simplified down to the choice of a weakly better state to imitate, $k \in \{j,\ldots, J\}$, and an amount $\mu$ of that state's distribution to send -- ``as much as possible" is always weakly optimal, though, as with ironing in example \ref{ex:2state}, may not be the only strategy played in equilibrium. Since $\hat u$ describes the payoff under every state to every $\mu$, its inverse $\hat{\boldsymbol{\mu}}$, defined as
\[
\hat \mu_j(u) = \min\{\mu:\hat u_j(\mu) \ge u\},
\]
describes a \emph{burden of proof} in order to achieve payoff $u$, and what is necessary is that a type $t$ can provide at least a measure $\hat \mu_{k}(u)$ of distribution $f_{k}$, where $\theta_{k} \ge u$. Crucially, fixing the pairwise distinguishability factors, optimality of the sender's imitation strategy amounts to saying that a sender that achieves payoff $u$ via imitation is either truthful with $\theta_j \ge u$ or  imitates another state $\theta_{k} > u$ in the set
\[
A_j(\mu) = \left\{\theta_{k}: k \in \arg \max_{k > j}\hat u_k\left(\frac{\mu}{r_j(k)}\right)\right\}.
\]

\begin{theorem}[Existence and uniqueness] \label{thm:imitate}
There there exists an essentially\footnote{$\beta^*$ is uniquely determined, and $\sigma^*$ is uniquely determined up to payoff-irrelevant mixing probabilities.} unique imitation equilibrium, implemented by a vector-valued burden of proof function $\hat{\boldsymbol{\mu}}: [0,\theta_J] \rightarrow \mathbb{R}^J$ with outcome $\hat{\mathbf{u}}$ such that
\begin{enumerate}
    \item $\hat u_j(\mu)$ is continuous and (weakly) increasing in $\mu$ for all $j$.
    \item $\sigma^*(\mu f_j)$ is supported on $
    \left\{\mu' f_{k}: \mu' = \hat \mu_{k}\left(\hat u_{k}\left(\frac{\mu}{r_j(k)}\right)\right) \text{ and } \theta_{k} \in A_j(\mu)\right\}$.
\end{enumerate}
\end{theorem}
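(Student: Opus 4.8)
The plan is to prove existence by an explicit top-down construction — the continuum analogue of the finite-type algorithm of \citet{Rappoport22} sketched earlier — and to obtain essential uniqueness by showing that the imitation-equilibrium conditions force that same construction.

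First I would carry out the structural reduction. Condition \ref{def:cts_truth_leaning_reduction}a restricts on-path messages to datasets $\mu' f_k$, and since a message must be a subset of the sender's data, type $\mu f_j$ can send $\mu' f_k$ exactly when $\mu'\le\mu/r_j(k)$, the binding coordinate being the outcome $d$ maximizing the likelihood ratio $f_k(d)/f_j(d)$. Because the imitation relation on effective masses is transitive ($r_j(k'')\le r_j(k)\,r_k(k'')$), any chain of imitations collapses to a single target state, so writing $w(m)$ for the receiver's action after $m$, the sender's problem reduces to $\hat u_j(\mu)=\max_k w\big((\mu/r_j(k))f_k\big)$, and it remains only to see that along the relevant messages $w(\mu' f_k)$ is weakly increasing in $\mu'$ — intuitively because enlarging the imitated mass shrinks the set of states that could have produced the message toward the single truthful state $\theta_k$. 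This gives the ``as much of the target distribution as possible is weakly optimal'' form of part 2, and the monotonicity of $\hat u_j$ in $\mu$ asserted in part 1 is then immediate since enlarging $\mu$ enlarges the feasible set of pairs $(k,\mu')$.

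For existence I would build $\hat u$ from the top down. The top pool is pinned down by \ref{def:cts_truth_leaning_reduction}c: a state-$J$ type is truthful whenever $\theta_J$ strictly exceeds its imitation options, so the only messages delivering $\theta_J$ are datasets $\mu' f_J$ so data-heavy that no positive measure of lower-state types can replicate them, i.e.\ $\mu'\ge 1/\min_{j\ne J}r_j(J)$; generically this threshold lies below $1$, the truthful types above it form the top pool at payoff $\theta_J$, and Bayes-consistency rules out any positive measure of lower-state types also obtaining $\theta_J$ (in the degenerate case where the threshold is $1$, the top payoff is interior and determined by the fixed-point logic below). I then delete the assigned band of types and repeat on the residual game: at each stage the next candidate payoff $v$ is characterized as a fixed point of the map sending $v$ to the receiver's posterior expectation of $\theta$ when precisely the remaining types able to ``prove'' $v$ — sending the target distribution in the amount their effective mass $\mu/r_j(k)$ allows — pool on the corresponding $f_k$-messages, and the next pool is the set of those types. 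The regularity imposed on $g$ (continuity, left-differentiability, $g(1)=0$) is used to show that these fixed points exist, are unique, and vary continuously, so the stages sweep out all of $[a,1]\times\Theta$ and the assembled $\hat u_j(\cdot)$ is continuous; this in turn makes the burden-of-proof function $\hat\mu_k(u)=\min\{\mu:\hat u_k(\mu)\ge u\}$ a genuine inverse.

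It then remains to verify that $(\sigma^*,q_{\sigma^*})$ with $\sigma^*$ as in part 2 is an imitation equilibrium and to argue uniqueness. Conditions \ref{def:cts_truth_leaning_reduction}a--c hold by construction, so the real work is the no-deviation check, which reduces to deviations to off-path datasets $\mu' f_k$: every such message that a positive measure of types could feasibly send and would strictly prefer is, by construction, on-path with value no larger than $\theta_k$, while the genuinely off-path $f_k$-messages are those with mass below the smallest value in the support of $g$, which fall under the pessimal-superset branch of $q_{\sigma^*}$ and so are unattractive. For essential uniqueness I would run the construction backwards: in any imitation equilibrium, \ref{def:cts_truth_leaning_reduction}b--c together with Bayes-consistency force the top payoff and the identity of the top pool, hence — peeling off one pool at a time — the entire outcome $\hat u$; only the mixing weights among equally attractive target states for a given type are left free, which is the stated qualification. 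I expect the main obstacle to be the continuum bookkeeping in the construction — establishing that the stagewise fixed points exist, are unique, and stitch together into a \emph{continuous} $\hat u_j$, and handling the fact that individual ``pools'' can be measure-zero so that on-path beliefs are governed by ratios of the densities of the imitating and truthful types rather than by their masses — with the off-path check a close second, delicate precisely because \ref{def:cts_truth_leaning_reduction}a forces every informative message to look like an honest dataset, so the equilibrium survives only because the target payoff $\theta_k$ is never left available off-path to a type that is not already attaining it.
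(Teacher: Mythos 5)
Your high-level architecture (reduce each type's problem to a choice of target state and mass, build the outcome top-down from the $\theta_J$ pool, then argue the construction is forced) matches the paper's, and your identification of the top frontier $(1,\ldots,1/\min_{k\neq J}r_k(J))$ is right. But there are two genuine gaps, and they sit exactly where you defer to ``continuum bookkeeping.''

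First, your monotonicity step is wrong as stated. You claim $w(\mu' f_k)$ is weakly increasing in $\mu'$ ``because enlarging the imitated mass shrinks the set of states that could have produced the message.'' The on-path posterior after $\mu' f_k$ is not determined by the \emph{set} of types capable of sending it but by the \emph{densities} of the frontier types that actually send it (the ratio $g(\tilde\mu_j)\,d\tilde\mu_j/dv$ across states), and this ratio can move the wrong way: this is precisely Example 1B, where the naive ``send the maximal subset'' payoff $\rho(\mu)$ is non-monotone for a double-peaked $g$. Monotonicity of $\hat u_k$ is not a consequence of feasibility nesting; it is an equilibrium property that the paper has to \emph{engineer}, by detecting when a positive-measure pool $\hat T^{\max}_{\hat\mu}(v)$ exists and ironing payoffs over it, with low-state senders mixing over an interval of messages (so ``as much as possible'' is only weakly optimal, and some types send strictly less with positive probability). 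Your construction, which always has the remaining types ``able to prove $v$'' send the target distribution in the full amount their effective mass allows, cannot produce these equilibria. Second, and relatedly, ``peeling off one pool at a time'' does not work in the continuum: generically each pool is a measure-zero frontier, so there is no countable sequence of deletion steps. The paper replaces the iteration by a differential characterization -- the indifference condition pinning the relative rates of change of $\hat\mu_k$ across states that share mixing types (eq.~\ref{eq:pool_burden}), plus the derivative hierarchy $\Delta_0,\Delta_1,\ldots$ that decides which states belong to the same partition element and shows the partition persists on an interval $[v-\epsilon,v]$ -- interleaved with the ironing pools. This is the bulk of the existence proof and is absent from your sketch. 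For uniqueness the same issue bites your backward induction; the paper instead proves a global comparison result (Lemma~\ref{lem:pool_payoff}: any set of types above one frontier but below another has pooled value at least $v$) and derives a direct contradiction between two putative burden-of-proof functions, which avoids any transfinite peeling argument. You would need either that lemma or a substitute for it.
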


\subsection{Construction of the equilibrium}

\begin{figure}
    \centering
    \includegraphics[scale=0.6]{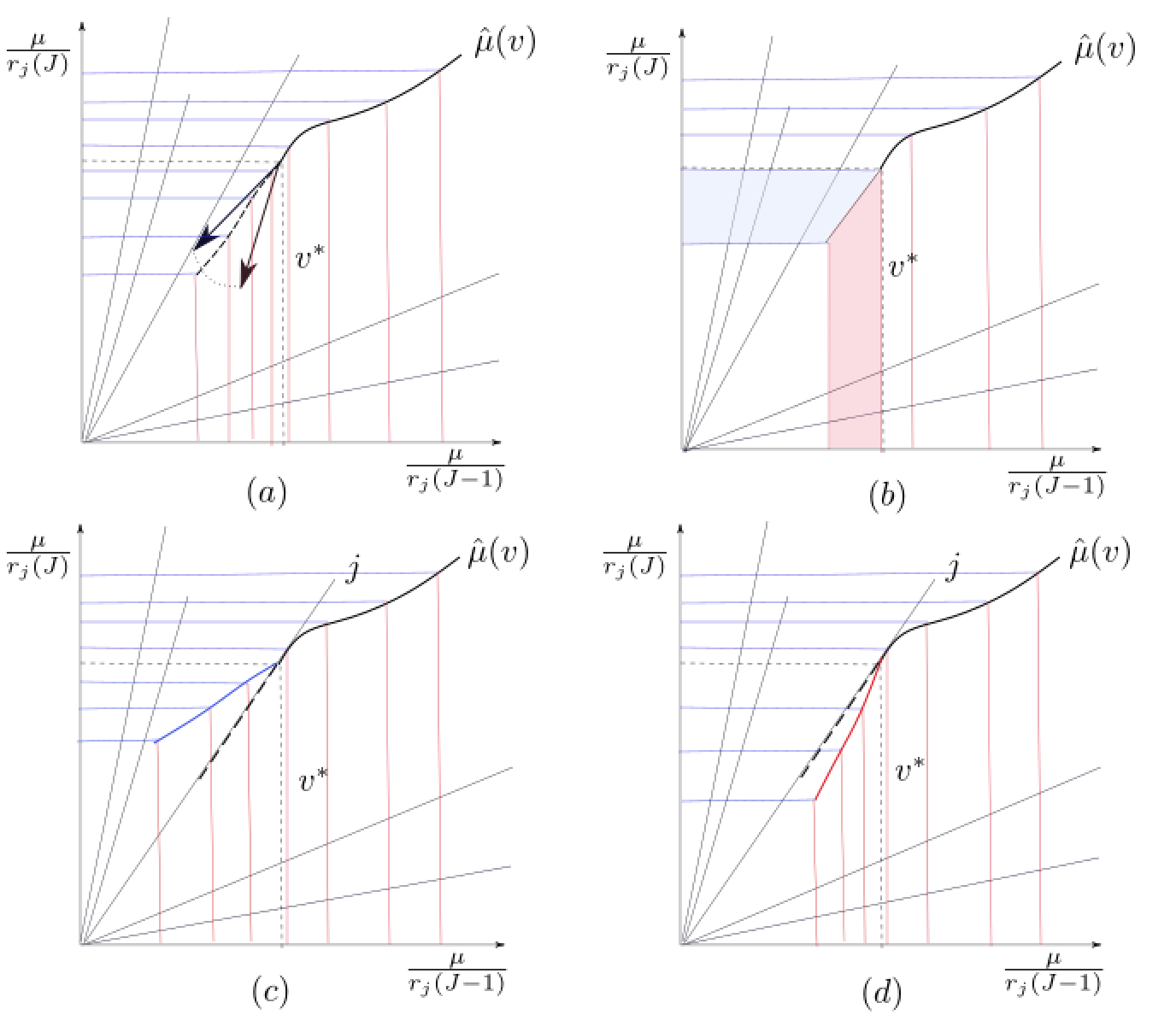}
    \caption{In equilibrium, $\hat{\boldsymbol{\mu}}(v)$ equalizes payoffs to imitating each state $\theta_k > v$. Rays represent types in $\mathcal{T}$ and red and blue lines represent payoff frontiers to those imitating $f_{J-1}$ and $f_J$, respectively.
    }
    \label{fig:construction}
\end{figure}

In the Appendix, we give the details of the step-by-step construction of $\sigma^*$ in general. But to capture the main idea, here we describe a minimal setup that illustrates the forces at play. Consider the problem of constructing $\hat{\boldsymbol{\mu}}(v)$ for $v \in [\theta_{J-1}, \theta_J]$, assuming that $\hat{\boldsymbol{\mu}}(\theta_{J-1})$ is known. Fig. \ref{fig:construction} shows that a typical type space can be projected onto 2 dimensions: one dimension describes the ability of each type to imitate $f_J$, given by $\frac{\mu}{r_j(J)}$, and the other dimension describes their ability to imitate $f_{J-1}$, given by $\frac{\mu}{r_j(J-1)}$. We can plot senders with all possible amounts of data under a given state as a ray when we describe the type space this way. Since any $v > \theta_{J-1}$ is obtained through imitating one of these two types, this description is sufficient to determine the imitation strategies used to obtain this subset of responses from the receiver.

The burden-of-proof vector lies in the same space and describes two simple things: which of the two states each type imitates, and what the highest action is that they can induce the receiver to take by doing so. A couple of observations allow us to identify the unique continuation of $\hat{\boldsymbol{\mu}}(v)$ at and to the left of any $v^*$ whenever $\hat{\boldsymbol{\mu}}(v)$ is already known for all $v > v^*$. 

Taking the higher payoff frontiers to be fixed, focus on the set of types unable to meet any component of $\hat{\boldsymbol{\mu}}(v)$ for any $v > v^*$. There may exist within this set a self-separating set of positive measure that can pool with each other to induce action $v^*$. Fig. \ref{fig:construction}(b) shows that if so, $\hat{\boldsymbol{\mu}}(v)$ is discontinuous at $v^*$, since the equilibrium construction then immediately pools these types and assigns them all a payoff of $v^*$. 
Otherwise, $\hat{\boldsymbol{\mu}}(v^*) = \lim_{\epsilon \rightarrow 0} \hat{\boldsymbol{\mu}}(v^* + \epsilon)$.

The key fact is that given $\hat{\boldsymbol{\mu}}(v^*)$, it is always possible to exactly specify $\hat{\boldsymbol{\mu}}(v)$ for $v$ in some, possibly small, nonempty interval $(v^*-\Delta, v^*)$. Consider first the case in which all types in the $v^*$-payoff frontier strictly prefer to imitate either $f_J$ or $f_{J-1}$. When $\hat \mu_j(v^*) f_j$ imitates distribution $f_{J}$, then all types $\mu f_j$ with $\mu$ close to $\hat \mu_j(v^*)$ behave likewise, and the same is true for those imitating distribution $f_{J-1}$. In other words, the payoff frontiers are locally determined because imitation strategies are fixed, up the amount of data submitted. Panel (a) of Fig. \ref{fig:construction} shows that $\hat{\boldsymbol{\mu}}(v)$ then follows along the path of equivalent payoffs from imitating either state, and is continuous, due to the continuity of $g$.

A second possibility is that for some $j$, type $\hat \mu_j(v^*) f_j$ may indeed be indifferent between imitating $f_J$ and $f_{J-1}$, and mixes between the two with interior probability. Locally, for $\mu$ close to $\mu_j(v^*)$, the types $\mu f_j$ must also be indifferent, and so for a set of values $v \approx v^*$, $\hat{\boldsymbol{\mu}}(v)$ coincides with the set of types under state $\theta_j$ that achieve the corresponding payoff. For all state-$\theta_j$ senders obtaining a payoff in this range, the mixed strategy played equalizes payoffs to imitating each of the two highest states. Fig. \ref{fig:construction}(c) shows that if $\sigma(\frac{\mu}{r_j(J)} f_J| \mu f_j)$ increases too quickly, this fails to hold, since then payoffs to imitating $\theta_J$ decrease quickly relative to those to imitating $\theta_{J-1}$, and (d) shows that payoffs to imitating $\theta_J$ decrease too quickly in the opposite case. There is, then, a unique continuation of the mixed strategy that respects the restriction on $\hat{\boldsymbol{\mu}}$, and it is continuous due to the continuity of $g$. 

When there are more than $2$ candidate states to imitate, the construction is slightly more complicated in that there may be more than one state under which types are indifferent across distributions to imitate, and a given type may be indifferent between imitating more than 2 different states. Nevertheless, the idea is the same. It is always possible to construct an interval of frontiers and their associated equilibrium strategies, given knowledge of higher-payoff frontiers. The construction technique then proceeds interval-by-interval, where we note that each interval formed in a step of the process is nonempty but may be small: it may be necessary to switch from handling the problem as in the first case to handling it as in the second case, and vice versa, multiple times as the algorithm proceeds to successively lower payoff frontiers. 

\subsection{A separation theorem}

Let us return briefly to the matter of why the imitation outcome stands out from other equilibrium outcomes. It turns out that, although we can construct the imitation equilibrium payoff frontiers iteratively, we can also characterize them each individually, and independently of the remainder of the equilibrium. Put simply, imitation equilibrium payoff frontiers universally divide the type space into a greater-value upper region and a lesser-value lower region, and they are the only frontiers to do so.

We start with some definitions.
\begin{definition}
    An \emph{upper pool} of payoff frontier $\hat{\boldsymbol{\mu}}(v)$ is a set
    \[
    \bar{T} = U(\hat{\boldsymbol{\mu}}(v)) \setminus U(M)
    \]
    for some collection of messages $M$.
\end{definition}

\begin{definition}
    A \emph{lower pool} of of payoff frontier $\hat{\boldsymbol{\mu}}(v)$ is a set
    \[
    \underline{T} =  U(M) \setminus U(\hat{\boldsymbol{\mu}}(v))
    \]
    for some collection of messages $M$.
\end{definition}
An upper pool consists of all types above the payoff frontier $\hat{\boldsymbol{\mu}}(v)$ but below some other frontier, while a lower pool consists of types below it but above another frontier.

We define the pooled value of any set of types, $u_{pool}(T)$, to be the receiver's expectation of the state given that the sender's type is in the set $T$, and state the separation theorem:
\begin{theorem}[Separation] \label{thm:separation}
	For any nonempty upper pool $\bar{T}$ and lower pool $\underline{T}$ of $\hat{\boldsymbol{\mu}}(v)$,
	\[
	u_{pool}(\bar T) \ge v > u_{pool}(\underline{T}).
	\]
\end{theorem}
In other words, upper pools are weakly improving and lower pools are strictly worsening — for any subset of $\mathcal{T}$ that is bounded by two frontiers and contains $\hat{\boldsymbol{\mu}}(v)$, the value of the part above $\hat{\boldsymbol{\mu}}(v)$ is at least $v$, while the value of the part below is less than $v$.\footnote{The former inequality is weak and the latter strict because we have defined the imitation payoff frontiers such that, when there are multiple types $\mu f_k$ that all achieve $v$, $\hat \mu_k(v)$ is the lowest such $\mu$.} 

The fact that upper pools are improving is a consequence of the conditions of imitation equilibria: the property holds because in each group of senders who send the same message under $\sigma^*$, only those with worse-than-average values can be truncated by excluding $U(M)$. On the other hand, the equilibrium we construct has worsening lower pools because in it, any potentially self-separating pool of senders below $\lim_{\epsilon \rightarrow 0}\hat{\boldsymbol{\mu}}(v+\epsilon)$ that achieves a value of at least $v$ must lie above the frontier $\hat{\boldsymbol{\mu}}(v)$.

These properties guarantee uniqueness of the imitation equilibrium outcome if we use them to compare outcomes under $\sigma^*$ and another PBE, $\sigma$. If the outcome under $\sigma$ differs from that under $\sigma^*$, then worsening lower pools under $\sigma^*$ imply that there is a frontier with a worsening upper pool under $\sigma$. Moreover, the only frontiers in $\mathcal{T}$ that satisfy either property are the frontiers of $\sigma^*$. Given any prospective frontier and its associated payoff, checking either of these properties in isolation is enough to verify that it shows up in the imitation equilibrium, and may in some cases be easier than constructing the entire imitation equilibrium outcome. 

The separation theorem is a general result — it also applies to finite evidence games, where it is related to the “downward biased” characterization of \citet{Rappoport22}. In all these cases, worsening lower pools rules out credible inclusive announcements, and improving upper pools turns out to imply that no other equilibrium is credible inclusive announcement-proof.

\section{Comparative statics} \label{sec:comp_stats}
The burden of proof characterization of equilibrium and the separation theorem characterization of payoffs imply that all imitation outcomes share some concrete features. Here, we present comparative statics of the sender's reports in $\mu$, of the sender's welfare with respect to the receiver's prior belief about $\theta$ and $\mu$, and of separation as $Var[\mu] \rightarrow 0$. We begin with a corollary to Theorem \ref{thm:imitate}.   

\begin{corollary} \label{thm:segment}
Under $\sigma^*$, there are thresholds $z^*_j > z^{**}_j$ for each state such that:
\begin{itemize}
    \item Whenever the sender's type is $\mu f_j$ with $\mu > z^*_j$, the sender masquerades as a higher type, and receives a payoff $\hat u_j(\mu) > \theta_j$. 
    \item Whenever $\mu \in (z^{**}_j, z^*_j]$, the sender is honest and the receiver knows it upon receiving the data: $\hat u_j(\mu) = \theta_j$.
    \item Whenever $\mu \le z^{**}_j$, the sender is honest, but the receiver believes they are a worse type with positive probability, and $\hat u_j(\mu) < \theta_j$.
\end{itemize}
\end{corollary}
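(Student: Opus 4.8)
The plan is to read the three regimes off the sign of $\varphi_j(\mu):=u_{\sigma^*}(\mu f_j)-\theta_j=\hat u_j(\mu)-\theta_j$, using that by part~1 of Theorem~\ref{thm:imitate} the map $\mu\mapsto\hat u_j(\mu)$ is continuous and weakly increasing on $[a,1]$; hence $\varphi_j$ is continuous and weakly increasing, so it changes sign at most twice and in the order $-,0,+$. I would then define $z^{**}_j:=\sup\{\mu:\varphi_j(\mu)<0\}$ (with the $\sup$ of the empty set read as $a$) and $z^{*}_j:=\inf\{\mu:\varphi_j(\mu)>0\}$ (with the $\inf$ of the empty set read as $1$). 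Continuity gives $\{\varphi_j>0\}=(z^{*}_j,1]$, $\{\varphi_j<0\}=[a,z^{**}_j)$ and $\{\varphi_j=0\}=[z^{**}_j,z^{*}_j]$, so up to the two boundary points (where $\varphi_j=0$ and which may be assigned to either adjacent regime) the three bullet ranges coincide with these three sets, provided $z^{*}_j>z^{**}_j$. (The bottom regime is vacuous for $j=1$, since the receiver then has no lower state to suspect.)

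Two of the three regimes follow directly from Definition~\ref{def:cts_truth_leaning_reduction}. If $\mu>z^{*}_j$ then $\theta_j<\hat u_j(\mu)=\max_{m\subseteq\mu f_j}\mathbb{E}_{\beta_{\sigma^*}(\cdot|m)}[\theta]$, so by clause (b) the true dataset $\mu f_j$ is off-path, i.e.\ the sender strictly omits observations; and by the characterization recorded just before Theorem~\ref{thm:imitate} (mirrored by the support of $\sigma^*(\mu f_j)$ in part~2), a sender obtaining payoff $u>\theta_j$ does so by sending a dataset proportional to $f_k$ for some state $\theta_k$ with $\theta_k>u>\theta_j$, so it masquerades as a strictly higher type and receives $\hat u_j(\mu)>\theta_j$. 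If instead $\mu<z^{**}_j$ then $\theta_j>\hat u_j(\mu)=\max_{m\subseteq\mu f_j}\mathbb{E}_{\beta_{\sigma^*}(\cdot|m)}[\theta]$, so by clause (c) the sender reports its full dataset; since $\sigma^*$ is a best reply, the payoff to that report equals $\hat u_j(\mu)$, i.e.\ $\mathbb{E}_{\beta_{\sigma^*}(\cdot|\mu f_j)}[\theta]=\hat u_j(\mu)<\theta_j$, and a posterior with mean below $\theta_j$ must place positive probability on some $\theta_{j'}<\theta_j$: the sender is honest but the receiver underrates the state.

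The middle regime is the crux. For $\mu\in(z^{**}_j,z^{*}_j]$ we have $\hat u_j(\mu)=\theta_j$ and neither clause of Definition~\ref{def:cts_truth_leaning_reduction} constrains play, so here I would appeal to the interval-by-interval construction of the burden of proof $\hat\mu$ that underlies Theorem~\ref{thm:imitate}. Two facts from that construction are needed. First, $z^{**}_j$ is precisely the least mass at which type $\mu f_j$ has a feasible report whose equilibrium posterior is degenerate at $\theta_j$, and at every $\mu$ in this range the honest report $\mu f_j$ is itself such a report (no type of a different state plays $\mu f_j$ on path); granting this, honesty attains the maximal available payoff $\theta_j$, so up to payoff-irrelevant randomization the sender reports truthfully and $\beta_{\sigma^*}(\cdot|\mu f_j)=\mathbbm 1_{\theta_j}$. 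Second, $z^{*}_j>z^{**}_j$: this is where the hypothesis that $g$ is supported on $[a,1]$ with $a>0$ is used, since for $\mu$ slightly above $z^{**}_j$ any imitation of a state $\theta_k>\theta_j$ requires presenting a dataset of mass at most $\mu/r_j(k)$ distributed as $f_k$, which for small $\mu$ falls below $a$ and hence is produced by no genuine state-$k$ type, so such a message pools only with types of states $\le\theta_j$ and cannot lift the posterior mean strictly above $\theta_j$ --- leaving $\varphi_j\equiv 0$ on a right-neighborhood of $z^{**}_j$. I expect the bookkeeping behind these two facts --- essentially re-reading the equilibrium construction to pin down $z^{**}_j$ and to rule out a knife-edge $z^{*}_j=z^{**}_j$ --- to be the only real work; everything else is the translation through Definition~\ref{def:cts_truth_leaning_reduction} carried out above.
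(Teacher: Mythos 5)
The paper gives no explicit proof of this corollary, and your overall skeleton --- read the three regimes off the sign of a continuous, weakly increasing payoff function and dispatch the outer two regimes via clauses (b) and (c) of Definition \ref{def:cts_truth_leaning_reduction} plus the ``imitate only strictly higher states'' property recorded before Theorem \ref{thm:imitate} --- is the right one and is evidently what the paper intends. Two points, however, are genuine gaps rather than bookkeeping. First, a definitional slip that breaks the top regime as written: in the appendix $\hat u_j(\mu)$ is the payoff to \emph{sending the message} $\mu f_j$, and its range is explicitly $[0,\theta_j]$ (no higher-state type ever targets $\theta_j$), so your $\varphi_j(\mu)=\hat u_j(\mu)-\theta_j$ is never positive and $\{\varphi_j>0\}$ is empty. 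The object you need is the type's payoff $u_{\sigma^*}(\mu f_j)=\max_k \hat u_k\bigl(\mu/r_j(k)\bigr)$, which is still continuous and weakly increasing (a finite max of such functions), so the sign argument survives once you substitute it; but as written the first bullet does not follow.

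Second, and more seriously, your argument for the strict separation $z^*_j>z^{**}_j$ does not go through. It only addresses the case $z^{**}_j/r_j(k)<a$ for all $k>j$, but $z^{**}_j$ is an equilibrium object that need not be small (for intermediate states it is pushed up by lower-state imitators). Even where it applies, the conclusion ``such a message pools only with types of states $\le\theta_j$'' is unwarranted: a message $\mu' f_k$ with $\mu'<a$ is still in $\mathcal{T}_\infty$, so if it is off-path the specified beliefs are \emph{truthful} and it yields $\theta_k>\theta_j$ (the opposite of what you want), and if it is on-path it may be played by imitators from intermediate states $\theta_{j'}\in(\theta_j,\theta_k)$, which can lift the posterior mean above $\theta_j$. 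Note also that the paper never assumes $a>0$. Indeed, the paper's own variant of Example 2 (where the ability to imitate $f_H$ is a sufficient statistic for the ability to imitate $f_M$, and ``there are no longer medium-state senders who separate themselves entirely'') is exactly a situation in which the honest-and-believed middle regime threatens to collapse to a point, so establishing $z^*_j>z^{**}_j$ requires an argument tied to the interval-by-interval construction (the positive-measure self-separating pool at value $\theta_j$ produced when the frontier reaches $v=\theta_j$), not a local perturbation near $z^{**}_j$. You correctly identify this as ``the only real work,'' but the work is not done.
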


We can think of senders with $\mu > z^*_j$ as high-data senders, with enough data to benefit from manipulating their data against the receiver's uncertainty about their data endowment. The costs of voluntary disclosure are borne by low-data senders, those with $\mu < z^{**}_j$, who the receiver is skeptical of even when they are truthful. These thresholds vary by $j$, and in particular, $z^*_1 = 0$ and $z^{*}_J = 1$. However, they need not be monotone in $j$.

The potential presence of an intermediate, full-information interval between disjoint upper and lower partial-pooling intervals when we fix $\theta$ and vary $\mu$ is a novel feature of these equilibria that occurs when there are multiple imitated states with different distinguishing outcomes. It is a consequence of the fact that it requires a strictly greater amount of data to benefit from imitating a different state than it does to send one's full dataset and discourage all imitators. The structure of pooling and separation contrasts with strategies in binary-state models of voluntary disclosure, or in models with ordered outcomes. In those cases, full separation only occurs at the very top, that is, for types with a maximal state and a maximal amount of evidence (see, for example, \citet{Dye85} and \citet{Dzuida11}). We show that this doesn't have to be true in general: although they remain able to separate, types with the most evidence are often more tempted to pool with others.

We have shown that even the receiver-optimal equilibrium must contain partial pooling. The basic reason for this is that low-data senders under high states can never separate themselves from high-data senders under low states: the most they can do to distinguish themselves is send their full dataset, but even so, the higher-data sender can imitate them. The extent of pooling in general depends on two things, the structure of the data, which is reflected through $\{r_j(k)\}_{j < k}$, and uncertainty about $\mu$, which is reflected in $g$. We discuss the former in the next section, and focus here on the receiver's beliefs. In the absence of uncertainty about $\mu$ -- that is, if $\mu$ is commonly known to the sender and the receiver -- the disclosure game is a case of the games studied by \citet{Grossman81} and \citet{Milgrom81}, in which unraveling occurs. The distribution of $\mu$ in our model, while nondegenerate, can be arbitrarily close to a point mass, and outcomes vary towards the full-information outcome continuously as the receiver's uncertainty about $\mu$ vanishes.
\begin{claim} \label{claim:var}
As $Var[\mu] \rightarrow 0$, $\hat u_j(\mu) \rightarrow \theta_j$ for all $\mu f_j \in \mathcal{T}$.
\end{claim}

Outcomes also vary monotonically towards the full-information outcome with increasing certainty about the state. When the receiver's belief about the ex-ante probability of a given state $\theta_j$ increases relative to others, the receiver's skepticism weakly increases for all messages that yield a higher payoff to the sender than full certainty of that state. The reverse is true of all messages that yield a lower payoff than $\theta_j$. An increase in the probability of $\theta_j$ therefore ``pulls" the receiver's action towards $\theta_j$ given any message, which has the consequence of decreasing ex-post payoffs for all types of the sender that would originally have achieved $\hat u_j(\mu) \ge \theta_j$, and increasing them if originally, $\hat u_j(\mu) \le \theta_j$.

To formalize this, let $\mathcal{G}$ be a disclosure game with prior $\beta_0$ about $\theta$ and $\mathcal{G}'$ be a game that is identical except for the prior $\beta_0'$ which differs from $\beta_0$, with $\beta_0'(\theta_j) > \beta_0(\theta_j)$ and $\frac{\beta_0(\theta_k)}{\beta_0(\theta_{k'})} = \frac{\beta_0'(\theta_k)}{\beta_0'(\theta_{k'})}$ for all other $k, k'$. 
\begin{claim} \label{claim:beta}
    Suppose that $\hat{\mathbf{u}}, \hat{\mathbf{u}}'$ are imitation equilibrium outcomes of $\mathcal{G}$ and $\mathcal{G}'$, respectively. Then $\hat u_{j'}(\mu) \ge \hat u_{j'}(\mu)$ whenever $\hat u_{j'}(\mu) \ge \theta_j$, and $\hat u_{j'}(\mu) \le \hat u_{j'}(\mu)$ whenever $\hat u_{j'}(\mu) \le \theta_j$.
\end{claim}

Finally, we point out that first-order shifts in the receiver's beliefs have a monotone impact on the sender's welfare. Simply put, every type of the sender benefits from a first-order shift in the receiver's belief about the state, and suffers from a first-order shift in their belief about the data-mass distribution. Intuitively, an upwards shift in the prior distribution in $\theta$ makes the receiver more willing to believe a claim that the state is high; this unambiguously benefits the sender both conditional their realized dataset, and ex-ante. On the other hand, when the receiver expects $\mu$ to be greater, they are more \emph{skeptical}: they infer a greater likelihood that a given message may have been selected from a larger dataset.\footnote{\citet{Rappoport22}'s result can be used to show that the latter holds in finite-data games viewed as an instance of an abstract evidence game, and a similar argument shows that this is directly true in the continuum.} 

\begin{claim}
    If two disclosure games $\mathcal{G}$ and $\mathcal{G}'$ are identical except for priors $\beta_0 \le_{FOSD} \beta_0'$ and $g \ge_{FOSD} g'$, then
    \[
    \hat u_{j}(\mu) \le \hat u_j'(\mu) \ \ \forall \mu f_j \in \mathcal{T}.
    \]
\end{claim}

\section{Experimental design} \label{sec:exp_design}
Our results highlight that the quality of the information the receiver obtains depends on how the data-generating process distinguishes states. This section focuses on interventions that aim to maximize distinguishability, and proposes a framework for optimally designing experiments to allow the receiver to extract payoff-relevant information from the sender through voluntary disclosure. In our model, an \emph{experiment} is the data-generating process that provides the sender with their raw dataset, and is captured by a tuple $\mathcal{E} = (\mathcal{D}, \{f_j\}_{j=1}^J)$ consisting of the space of reported outcomes and the generating distribution of data over them. We assume that the remaining primitives of the game -- state space, payoffs, and priors -- are fixed, and consider the effect of varying the experiment that the sender observes. 

A key fact is that whenever an experiment makes states pairwise more distinguishable, the receiver's welfare improves. Intuitively, increasing distinguishability allows higher-state types to separate themselves more effectively from lower-state types who would imitate them. The resulting equilibrium does not better separate every type from every other type -- indeed there are types that would play different messages under one experiment that would play the same message in the other, in both directions -- but, given the receiver's single-peaked expected utility, the more distinguishing experiment always makes the receiver better able to target the optimal action.\footnote{The proof that distinguishability improves payoffs uses the fact that a mechanism designer that takes a sender's submitted dataset as a report is weakly more constrained by a sender's ability to deviate to sending a false dataset if the experiment has poor distinguishability. If the receiver does not have single-peaked preferences, then the imitation equilibrium outcome and the outcome of the optimal mechanism do not necessarily coincide, and increasing distinguishability may force the receiver to take a higher action after observing a message that few low-state types can imitate, when they would instead like to commit to responding to it with a lower action.}
\begin{proposition}[Improvement]
    Suppose two experiments $\mathcal{E}$ and $\mathcal{E}'$ yield imitation equilibrium actions $a$ and $a'$, respectively. 
    \begin{itemize}
        \item If $r_j'(k) \ge {r_j}(k)$ for all $k > j$, then $\mathbb{E}_{a, \theta}[u_r(a)] \le \mathbb{E}_{a’,\theta}[u_r(a’)]$.
        \item If, in addition, $r_j'(k) > r_j(k)$ for some $j, k$ such that there is some $\mu f_j$ imitating $\hat \mu f_k$ under the imitation equilibrium with experiment $\mathcal{E}$, then $\mathbb{E}_{a, \theta}[u_r(a)] < \mathbb{E}_{a’,\theta}[u_r(a’)]$.
    \end{itemize}
\end{proposition}
In fact, by making a state arbitrarily distinguishable from others, we can guarantee that a sender under that state elicits at least their full-information action with high probability: for every $\delta > 0$, there is $R < \infty$ such that $Pr(\hat u_k(\mu) < \theta_k) < \delta$ as long as $r_j(k) > R$ for all $j < k$, where the likelihood is taken over realizations of $\mu$. In the limit as all states become highly distinguishable, the receiver also approximately attains their full information payoff.

One way to better distinguish two states is to undertake a more detailed experiment. Without changing the experimental technology -- that is, the underlying likelihood of events under different states -- a researcher could investigate and record a more detailed set of outcomes in order to obtain finer data. To formalize this, suppose there is an existing outcome space $\mathcal{D}$, and consider a notion of a more elaborate outcome space $\mathcal{D}'$ that the researcher can obtain by splintering an existing outcome into multiple sub-outcomes to track. 
\begin{definition}
If there are two experiments $\mathcal{E} = (\mathcal{D}, \{f_j\}_{j=1}^J)$ and $\mathcal{E}' = (\mathcal{D}', \{f_j'\}_{j=1}^J)$ and a partition $\mathcal{P} = \{P_{d}\}_{d \in \mathcal{D}}$ of $\mathcal{D}'$ such that
\[
\sum_{d' \in P_{d}} f_j'(d') = {f_j}(d)
\]
for all $d$ in $\mathcal{D}$, then $\mathcal{E}'$ \emph{splinters} the outcome space of $\mathcal{E}$ and $\mathcal{E}$ \emph{merges} the outcome space of $\mathcal{E}'$.
\end{definition}
Immediately, we observe that for all $\theta_j$ and $\theta_k$,
\[
\max_{d' \in P_d} \frac{f_k'(d')}{f_j'(d')} \ge \frac{f_k(d)}{f_j(d)},
\]
and so $r_j(k) \ge r_j’(k)$ whenever $\mathcal{D}'$ splinters $\mathcal{D}$.
\begin{claim}
Splintering the outcome space weakly improves the receiver's expected payoff.
\end{claim}

In some cases, there is a most elaborate possible experiment $\mathcal{E}^*$, i.e., one that is a splintering of every other possible experiment. Suppose that costs and constraints on gathering, storing, and transmitting data are negligible. Then it is optimal for a designer who acts on behalf of the receiver to choose the most elaborate possible experiment. If instead the sender chooses the experiment, then the receiver should, if possible, incentivize the sender to choose the most detailed experiment by committing to accept nothing else. Since they follow a simple rule of thumb, these recommendations don't require detailed knowledge of the true data-generating process, and would be easy for even an uninformed designer to implement.

On the other hand, in practice there is often no binding limit to the number of ways that an experiment can be refined and complicated, at ever increasing cost. Despite the fact it never hurts, further splintering a dataset does not always strictly improve distinguishability. With precise information about the data-generating process, Proposition \ref{prop:sufficiency} allows us to identify instances when it is without loss to the receiver to merge outcomes relative to $\mathcal{E}^*$. 

\begin{proposition}[Merging]
    Suppose that $\mathcal{E}^* = (\mathcal{D}^*, \{f_j^*\}_{j=1}^J)$. Let $S^* = \bigcup_{j < k} \arg \max_{d} \frac{f_k^*(d)}{f_j^*(d)}$. Then merging all outcomes in $\mathcal{D}^* \setminus S^*$ does not change the imitation equilibrium outcome.
\end{proposition}

The set $S^*$ consists of all outcomes that maximally distinguish one state from another, and that merging other outcomes is without loss follows from the fact that $S^*$ is sufficient to maximize every distinguishability factor in $\{r_j(k)\}_{j < k}$. We are left, generically, with a minimal experiment that suffices to reveal as much payoff-relevant information as possible to the receiver robustly over all possible priors. 

\begin{claim}[Minimality]
    Fix $u_r$, $\Theta$, and $g$, suppose that $\mathcal{E}$ is obtained from $\mathcal{E}^*$ by merging $S^*$ and that $\mathcal{E}'$ merges some outcomes in $\mathcal{E}$, and suppose that $\arg \max_{d} \frac{f_k^*(d)}{f_j^*(d)}$ is unique for all $j < k$.
    
    \noindent
    Then there exists $\beta_0$ such that the receiver is strictly better off with $\mathcal{E}$ than with $\mathcal{E}'$.
\end{claim}

This simplification of the experiment can be quite drastic, and in some familiar cases, including the case of a binary state space or an outcome space ordered by the monotone likelihood ratio property (MLRP), $S^*$ is a singleton with only one ``good news" outcome that maximally distinguishes higher states from lower ones, while all other outcomes in $\mathcal{D}^*$ can be merged and essentially ignored.\footnote{The imitation equilibrium in these cases has the same outcome as a sanitation equilibrium (\citet{Shin03}) in which the sender only reports observations of the outcome in $S^*$, and omits all others; however, it differs in that imitating senders generally report a positive mass of observations of these outcomes anyways, with no impact on the receiver's inferences.} Formally, we say that $\mathcal{D}^*$ satisfies MLRP with respect to $\{f_j^*\}_{j=1}^J$ if, for any $j < k$ and $d < d'$,
\[
\frac{f_k^*(d')}{f_j^*(d')} > \frac{f_k^*(d)}{f_j^*(d)}.
\]
It is straightforward to see that MLRP implies that $S^*$ comprises of the single maximal element in $\mathcal{D}^*$.

Even when $J > 2$, some degree of dimensionality reduction is often possible, especially if $J << |\mathcal{D}^*|$. In general, $|S^*| \le \frac{J(J-1)}{2}$. The 3-state example \ref{ex:3state} gives an instance in which this bound is tight because the maximizer, $\arg \max_d \frac{f_k(d)}{f_j(d)}$, is unique for all pairs $j < k$.

\begin{corollary}
    The minimal optimal experiment tracks at most $\frac{J(J-1)}{2} + 1$ outcomes, and furthermore, if $\mathcal{D}^*$ satisfies MLRP with respect to $\{f_j^*\}_{j=1}^J$, then a binary outcome space suffices.
\end{corollary}

\section{Relationship to finite data}
In the big picture, the purpose of modeling communication in a stylized, continuous-data disclosure game is to understand how senders will volunteer data in real-world disclosure settings, in which datasets are always finite. The comparative statics of section \ref{sec:comp_stats} and the experimental design implications of the previous section depend on the fact that datasets are well-described by $\mu$ and $f_j$, which is exactly true only in the continuum, but nearly true with large $N$ in such a way that those results approximately carry over. This section makes precise the finite-data settings that we aim to approximate, and describes how the continuous-data model captures their regularities in the limit.

We model a sender who has access to a finite dataset of $n$ i.i.d. observations drawn from $\mathcal{D}$ according to the state-contingent distribution $f_j$. The size of the sender's dataset is upper-bounded by $N$, but the sender may have access to $n <N$ observations as well, and the receiver is uninformed about how much data the sender has. Nature's sequence of moves in drawing the sender's dataset is: 1) draw the state, $\theta_j$, according to prior $\beta_0$; 2) draw the number of observations, $n$, from distribution $G_N$; 3) for each of the $n$ datapoints, draw their realized value i.i.d. from $f_\theta$. Call the disclosure game with these parameters $\mathcal{G}_N(\Theta, \mathcal{D}, \beta_0, \{f_j\}_{j=1}^J, G)$. The data mass distributions $G_N(\cdot)$ capture the receiver's uncertainty about how much raw evidence the sender has, prior to selecting observations to reveal: for example, there may be uncertainty about the number of total trials in an experiment, or the number of trials out of $N$ attempted that survived the entire trial period.

The sender's dataset is the empirical probability mass function $t = \frac{1}{N}(t_1, \ldots, t_D)$, where $t_d$ is the number of observations of outcome $d$ and $n(t) = \sum_{d=1}^D t_d$ is the number of observations they get. They are able to send any subset of their dataset as a message to the receiver, where
\[
m \tilde \subseteq t \ \Leftrightarrow \ m_d \le t_d \ \forall d \in \mathcal{D}.
\]
In summary, the type space is $\mathcal{T}_N = \bigcup_{n = 0}^N \mathcal{D}^n$, with type distribution
\[
     q_N(t) = \frac{n(t)!}{\Pi_{d=1}^D t_d!} \sum_{j'}\beta_0(\theta_{j'}) g_N(n(t))\Pi_{d=1}^D f_{j'}(d)^{t_d},
\]
and the message space $\mathcal{M}_N$ is identical to the space of types.

When datasets are finite, the sender's dataset does not perfectly inform them about the state: when $f_j$ all have full support, any state is possible after observing any dataset. The likelihood of $\theta_j$ given that the raw dataset is $t$ is
\[
\pi_N(\theta_j|t) = \frac{\beta_0(\theta_j) g_N(n(t))\Pi_{d=1}^D f_j(d)^{t_d}}{\sum_{j'}\beta_0(\theta_{j'}) g_N(n(t)) \Pi_{d=1}^D f_{j'}(d)^{t_d}},
\]
and so, when the receiver observes a message and updates their belief about the sender's type to $q_N(t|m)$, their posterior about the state updates to
\begin{equation} \label{eq:state_beliefs}
    \beta(\theta_j|m) = \frac{\sum_{t \in \mathcal{T}_N}q_N(t|m)\pi_N(\theta_j|t)}{\sum_{t \in \mathcal{T}_N}q_N(t|m)}.
\end{equation}

We highlight that the distribution of datasets in the finite-data setting converges to the distribution of datasets in a continuous-data model. In particular, $g(\mu)$ represents the likelihood of obtaining a fraction $\mu$ of total potential data under state $j$, and analogously, $\frac{n}{N}$ is the fraction of total data available to the sender in the finite-data game. We can study a sequence of games such that as $N$ increases, $N G_N(\frac{n}{N}) \rightarrow_{unif.} G(\mu)$, and note that if so, the type distributions also converge uniformly: $q_N \rightarrow_{unif.} q$. 
\begin{definition}
    $\mathcal{G}(\Theta, \mathcal{D}, \beta_0, \{f_j\}_{j=1}^J, G)$ is the \emph{limit game} for a sequence of finite-data games 
    \newline
    $\{\mathcal{G}(\Theta, \mathcal{D}, \beta_0, \{f_j\}_{j=1}^J, G_{N})\}_{N=1}^\infty$ if $N G_N(\frac{n}{N}) \rightarrow_{unif.} G(\mu)$.
\end{definition}
Despite the fact that the type distributions converge, the type space $\mathcal{T}_N$ is drastically different from $\mathcal{T}$: in particular, $\mathcal{T}_N \sim \mathcal{M}_N$ and both approximately span a $D$-dimensional space of datasets for large $N$, while $\mathcal{T}$ is only $2$-dimensional, as every dataset is described by $\mu$ and $\theta$. While datasets far away from $\mathcal{T}$, that have distributions unlike the data-generating distribution in any state, become vanishingly unlikely as $N$ grows large, they are never impossible except in the limit; this is why the continuum model is much easier to work with.

It remains possible to describe an imitation equilibrium and a truth-leaning equilibrium in the finite-data setting. The finite-data model is a special case of the evidence model in \citet{Hart17} and \citet{Rappoport22}. The former shows that truth-leaning equilibria exist and are unique and receiver-optimal in the finite-type setting, and also that they are always outcome-equivalent to imitation equilibria, although it does not guarantee that the strategies are equivalent. The latter includes an iterative algorithm to compute these equilibria; the number of steps is, however, exponential in $|\mathcal{T}_N|$, and as far as we can tell, there is no obvious way to obtain a significantly more efficient closed-form solution.

We can instead establish that the imitation equilibrium of the continuous-data model gives a perfect approximation to the limit outcome of communication in truth-leaning equilibria of finite-data games as $N$ and $\{\mathcal{G}^j_N\}_{j=1}^J$ converge.\footnote{We state the definition of convergence and the theorem below in terms of $N = 1, 2, \ldots$ rather than an arbitrary sequence of dataset sizes $N_1, N_2, \ldots$ only for the sake of notational brevity. The theorem applies just as well to any sequence of games $\{\mathcal{G}_{N_i}\}_{i=1}^\infty$ of increasing dataset size with uniformly convergent data distributions, since any such sequence is a subsequence of a convergent sequence of games $\{\mathcal{G}_{N}\}_{N=1}^\infty$.} To make the comparison, the notion of an outcome should be extended across type spaces. There is a global data space $[0,1] \times \Delta \mathcal{D}$, invariant to $N$, that contains $\mathcal{T}_1, \ldots$ and $\mathcal{T}$ as long as they all share a space of observations. Recall that $u_{\sigma^*}(t)$, the outcome of the game for type $t$, is their payoff from the best feasible message given equilibrium beliefs. If $t \in [0,1] \times \Delta \mathcal{D}$, it need not also be in the literal type set for the outcome to be well-defined, since we can already infer whether $t$ can feasibly send a message from the subset relation on $[0,1] \times \Delta \mathcal{D}$. The outcome to the hypothetical type can be understood as a thought experiment: ``if the receiver believes we are playing a game with equilibrium $\sigma^*$ or $\sigma^*_N$, and my dataset is $t$, what is the best payoff I can attain, even if $t$ is inconsistent with the receiver's perceived game?" 

\begin{definition}
 A sequence of equilibria $(\sigma_1, \sigma_2, \ldots)$ of games $\{\mathcal{G}^j_{N}(\Theta, \mathcal{D}, \beta_, \{f_j\}_{j=1}^J, G_N\}_{N=1}^\infty$ has outcomes that \emph{converge} to the outcome of an equilibrium $\sigma$ of the limit infinite-data game 
 \newline
 $\mathcal{G}(\Theta, \mathcal{D}, \beta_, \{f_j\}_{j=1}^J, G)$ if the payoffs $u_{\sigma^*_N}(t)$ converge uniformly to $u_{\sigma^*}(t)$ over types in $\mathcal{T}$.
\end{definition}

\begin{theorem} \label{thm:finitelimit}
If $\mathcal{G}$ is the limit game for finite-data games $\mathcal{G}_1, \mathcal{G}_2, \ldots$ with $N = 1, 2, \ldots$ respectively, then the truth-leaning equilibrium outcomes in $\mathcal{G}_1, \mathcal{G}_2, \ldots$ converge to the imitation equilibrium outcome of $\mathcal{G}$.
\end{theorem}
Outcome convergence shows that it’s reasonable to use the limit game to describe the distribution of actions the receiver takes after the sender discloses a large dataset, as well as the mapping from the truth to the receiver’s inferences. At a high level, the proof follows from the convergence of type distributions $\mathcal{T}_N$ to $\mathcal{T}$, and from the separation theorem, which holds as well in truth-leaning equilibria of finite-data games. Appendix \ref{app:limitpf} gives the formal argument and shows that the limit equivalence result partially extends to strategies, in addition to outcomes. 

In addition, outcome convergence shows that previous sections' results on comparative statics and experimental design hold approximately for large finite datasets. When the number of observations is finite, splintering the data always leads to a strict improvement in the receiver's welfare, even when the outcome space already contains $S^*$ and thus distinguishes the states as well as possible. However, in this case, the magnitude of the improvement vanishes and is negligible for large $N$. While merging non-distinguishing outcomes is only sharply optimal in the continuum, the convergence result guarantees us that it remains an actionable recommendation, yielding, in practice, nearly-optimal information to the receiver with minimally cumbersome datasets.

\section{Conclusion}
Inference under selective disclosure depends on an understanding of the underlying evidence and the sender's strategy. We have shown that an optimal strategy for the sender approximates a simple procedure: claim a possibly inflated state, and provide a large-enough body of evidence that supports it by mimicking the expected distribution under it. A receiver's inability to verify whether data were omitted or, indeed, how much data the sender observed leads to a muddling of information that reported datasets convey, which can be partially offset if the evidence is of good quality in the sense that its most informative outcomes distinguish one state from another state well. 

We have treated the extent of data acquired as exogenous, and so there is an open question about how the incentives for a sender to accumulate data to persuade a receiver would interact with an endogenous choice to acquire data at cost. Having more data benefits senders strategically regardless of its informational value, suggesting that data could be systematically over-collected for persuasion purposes precisely when they are cheap and plentiful. Separately, we believe that the need to prove the realized state of the world through an imperfect voluntary disclosure process will shape the incentives of an agent who can affect the state via their own actions, and taking into account the visibility of certain improvements over others can help determine where they target their effort. Finally, there is not much empirical work about how the disclosures people make relate to the evidence they have at their disposal when receivers can observe the evidence itself, rather than a summary statistic. More work in this direction would complement the theoretical analysis here.

\bibliography{bibliography}

\appendix

\section{Construction and uniqueness of the imitation equilibrium}

We will prove that Theorem \ref{thm:imitate} holds in a more general case with potentially state-contingent, rather than state-independent, data-mass distributions. Describe a game in this general setting by $\mathcal{G}(\Theta, \mathcal{D}, \beta_0, \{f_j\}_{j=1}^J, d\{G^j\}_{j=1}^J)$ where $G^j$ describes the distribution of $\mu$ under state $j$. The model we describe in the main text corresponds to the case in which $G^j = G$ for all $j$.

\begin{theorem} \label{thm:imitate2}
Suppose that $g^1, \ldots, g^J$, the densities of $\mu$ under states $\theta_1, \ldots, \theta_J$, respectively, are continuous on $\mathbb{R}$ and supported on $[0,1]$. There exists a unique imitation equilibrium outcome, implemented by a vector-valued burden of proof function $\hat{\boldsymbol{\mu}}(u): [0,\theta_J] \rightarrow \mathbb{R}^J$ with inverse $\hat u_k(\mu)$ such that
\begin{enumerate}
    \item $\hat u_j(\mu)$ is continuous and (weakly) increasing in $\mu$ for all $j$.
    \item $\sigma^*(\mu f_j)$ is supported on $\{\mu' f_{k}: \mu' = \hat \mu_{k}(\hat u_{k}(\mu/ r_j(k))) \text{ and } \theta_{k} \in A_j(\mu)\}$.
\end{enumerate}
\end{theorem}

To outline the argument, we first prove the existence of a imitation equilibrium by construction. Then we prove the separation theorem, which we use to show uniqueness.

Recall that $\hat u_{k}(\mu)$ is the equilibrium payoff to sending the message $\mu f_{k}$.

We construct $\hat u_{k}(\mu)$ that is monotone increasing in $\mu$ -- this implies that it must be almost-everywhere differentiable. Since it is also continuous, it is completely determined by its derivative over the points at which the derivative exists. To avoid confusion, we focus on the left derivative of $\hat u_{k}$, which we denote by $\hat u_{k}^{-}$ and, analogously to the top-down construction of the finite-data equilibrium, we construct the payoff function starting from the top down, starting from the frontier $v = \theta_J$.

Recall that $r_j(k) = \max_{d \in \mathcal{D}} \frac{f_k(d)}{f_j(d)}$ is the ratio of the amount of data necessary to imitate a certain amount of $f_k$ under state $j$ to the amount necessary under state $k$, and
\[
A_j(\mu) = \left\{\theta_{k}: k \in \arg \max_{k>j}\hat u_k\left(\frac{\mu}{r_j(k)}\right)\right\}.
\]
is the set of states that type $\mu f_j$ finds it weakly optimal to target given $\hat{\boldsymbol{\mu}}$. 

The range of $\hat u_k(\mu_k)$ is $[0, \theta_k]$ since no type of higher state ever targets state $\theta_k$, so payoffs to targeting $\theta_k$ cannot exceed $\theta_k$ itself.

Define
\[
S(v) = \{\theta_k: \theta_k > v\}
\]
to be the set of states under which the receiver optimally takes an action that yields the sender a payoff greater than $v$.
Then $\hat \mu_k(v) < \infty$ iff $\theta_k \in S(v)$, and since play is supported on $\{\hat \mu_k(u_k(\mu/ r_j(k))) f_k: \theta_k \in A_j(\mu)\}$ and $\sigma(\hat \mu_k(u) f_k| \hat \mu_k(u) f_k) = 1$, $S(v)$ is exactly the set of states that are targeted by some type under $\sigma$ to obtain a payoff of $v$. 

Given a burden of proof vector $\hat{\boldsymbol{\mu}}(v) = (\hat \mu_k(v))_{\theta_k \in S(v)}$, the associated \emph{frontier} consists of all types that are just able to meet some component of $\hat{\boldsymbol{\mu}}(v)$ with no slack, that is, all types $\tilde \mu_j f_j$ such that
\begin{equation} \label{eq:frontier}
r_j(k) \tilde \mu_j = \hat \mu_k(v) \text{ for some } \theta_k \in S(v), \text{ and } \not \exists \theta_{k'} \in S(v) \text{ s.t. } r_j(k') \tilde \mu_j > \hat \mu_{k'}(v). 
\end{equation}
Given a particular burden of proof function $\hat{\boldsymbol{\mu}}$, the implied frontier for payoff $v$ is $\tilde \mu(v|\hat{\boldsymbol{\mu}}) = (\tilde \mu_1, \ldots, \tilde \mu_{l-1}, \hat \mu_l(v), \ldots, \hat \mu_J(v))$ if $S(v) = \{\theta_l, \ldots, \hat \theta_J\}$ where $\tilde \mu_1, \ldots, \tilde \mu_{l-1}$ satisfy eq. \ref{frontier}. 

Let the set of states under which some type of sender obtains payoff $v$ and finds it weakly optimal to target state $\theta_k$ be
\[
\tau_{\hat{\boldsymbol{\mu}}}^{opt}(\theta_k, v) = \{\theta_j: f_k \in A_j(\tilde \mu_j(v|\hat{\boldsymbol{\mu}}))
\]
and let the set of states such that some type of sender obtains payoff $v$ by targeting a state $\theta_k$ with strictly positive probability under $\sigma$ be
\[
\tau_{\hat{\boldsymbol{\mu}}}^{supp}(\theta_k, v) = \{\theta_j: \hat \mu_k(v) f_k \in \text{supp } \sigma(\cdot | \mu f_j) \text{ for some } \mu\}.
\]
Of course, $\tau_{\hat{\boldsymbol{\mu}}}^{supp}(\theta_k, v) \subseteq \tau_{\hat{\boldsymbol{\mu}}}^{opt}(\theta_k, v)$.

For convenience of notation, we extend the definitions of these set-valued functions to any set of inputs (rather than a single input) by letting the function of the set be the union of the function applied to each individual element of the input set: thus for every set $S$ of states, $\tau_{\hat{\boldsymbol{\mu}}}^{opt}(S, v) = \bigcup_{\theta_k \in S} \tau^{opt}_{\hat{\boldsymbol{\mu}}}(\theta_k, v)$ and $\tau_{\hat{\boldsymbol{\mu}}}^{supp}(S, v) = \bigcup_{\theta_k \in S} \tau_{\hat{\boldsymbol{\mu}}}^{supp}(\theta_k, v)$, and for every set $\omega \subseteq [0,1]$, we let $A_j(\omega) = \bigcup_{\mu \in \omega} A_j(\mu)$.

Additionally, we define the expectation of the state under the (receiver's) belief that the the sender is a type that receives $v$ under $\hat{\boldsymbol{\mu}}$ and finds it weakly optimal to target a state in $S$ as follows. 
\[
V_{\hat{\boldsymbol{\mu}}}(S, \hat{\boldsymbol{\mu}}(v)) = \dfrac{\sum_{\theta_j \in \tau^{supp}_{\hat{\boldsymbol{\mu}}}(S, v)} \beta_0(\theta_j) \theta_j g^j(\tilde \mu_j[\hat{\boldsymbol{\mu}}(v)]) \frac{d \tilde \mu_j[\hat{\boldsymbol{\mu}}(v)]}{dv}}{\sum_{\theta_j \in \tau^{supp}_{\hat{\boldsymbol{\mu}}}(S, v)} \beta_0(\theta_j) g^j(\tilde \mu_j[\hat{\boldsymbol{\mu}}(v)]) \frac{d \tilde \mu_j[\hat{\boldsymbol{\mu}}(v)]}{dv}}.
\]

In contrast, the expectation of the state under the receiver's true belief over $\theta$ conditional on knowing that the sender has sent some message that yields payoff $v$ and targets a state in $S$ is
\begin{equation} \label{eq:pool_value_true}
W_{\hat{\boldsymbol{\mu}}}(S, v|\sigma) = \dfrac{\sum_{\theta_j \in \tau^{opt}_{\hat{\boldsymbol{\mu}}}(S, v)} \beta_0(\theta_j) \theta_j g^j(\tilde \mu_j[\hat{\boldsymbol{\mu}}(v)]) \frac{d \tilde \mu_j[\hat{\boldsymbol{\mu}}(v)]}{dv} \sigma(\{\hat \mu_k f_k\}_{\theta_k \in S}|\tilde \mu_j[\hat{\boldsymbol{\mu}}(v)] f_j)}{\sum_{\theta_j \in \tau^{opt}_{\hat{\boldsymbol{\mu}}}(S, v)} \beta_0(\theta_j) g^j(\tilde \mu_j[\hat{\boldsymbol{\mu}}(v)]) \frac{d \tilde \mu_j[\hat{\boldsymbol{\mu}}(v)]}{dv} \sigma(\{\hat \mu_k f_k\}_{\theta_k \in S}|\tilde \mu_j[\hat{\boldsymbol{\mu}}(v)] f_j)} = v.
\end{equation}

For any partial strategy $\hat \sigma$ that gives mixing probabilities between the messages $\hat \mu_{k_i}(v) \mathbf{f}$, the payoff $W_{\hat{\boldsymbol{\mu}}}(S, v|\hat \sigma(v)) $ is always weakly greater than $V_{\hat{\boldsymbol{\mu}}}(S,v)$.
The two are equal exactly when all types obtaining payoff $v$ that find it weakly optimal to target a state in $M$ do so with probability $1$. 

Fix a frontier $\hat{\mu}(v)$, where $\theta_{l-1} < v \le \theta_l$. It will be useful to define an undirected graph $H(v)$ on $S(v)$ by adding an edge between $\theta_k$ and $\theta_{k'}$ if and only if $\tau^{opt}_{\hat{\boldsymbol{\mu}}}(\theta_k, v) \bigcap \tau^{opt}_{\hat{\boldsymbol{\mu}}}(\theta_{k'}, v) \neq \emptyset$, that is, if there is some type that finds it optimal to target either state $\theta_k$ or state $\theta_{k'}$, and is indifferent between the two. Let $C$ be the collection of connected components of $H(v)$.

We use the following algorithm to partition $S(v)$ at a given frontier $\hat{\boldsymbol{\mu}}(v)$.

\textbf{Algorithm:} This algorithm calculates the payoffs to targeting a state in $S(v)$ at frontier $\hat{\boldsymbol{\mu}}(v)$ when all types that do not obtain higher payoffs than $v$ and who can target some $\hat \mu_k(v) f_k, \theta_k \in S(v)$ target the highest-payoff of these messages among those that they can, and assigns states $\theta_k$ to the same partition element if, across them, $\hat \mu_k(v) f_k$ must result in the same payoff, and for $\alpha$ close to $1$, $\alpha \hat \mu_k(v) f_k$ must also result in the same payoff, so that for states under which types at the frontier are indifferent between such messages, they remain so for nearby frontiers.

First, note that if $\sigma$ is such that, when there is a collection of states $\Sigma \subseteq S(v)$ such that, over an interval of payoffs, there always exists between any 2 states in $\Sigma$ a path of other states in $\Sigma$ such that there are types that mix with interior probability between any two successive states, then for all $\theta_k, \theta_{k'} \in \Sigma$,
\begin{equation} \label{eq:pool_burden}
\frac{r_j(k)}{r_j(k')} = \frac{\hat \mu_k(u)}{\hat \mu_{k'}(u)} = \dfrac{\frac{d \hat \mu_k(u)}{du}}{\frac{d \hat \mu_{k'}(u)}{du}} \ \ \left (= \dfrac{\frac{d\hat u_{k'}(\hat \mu_{k'}(u))}{d \mu}}{\frac{d \hat u_{k}(\hat \mu_k(u))}{d \mu}} \right)
\end{equation}
for all $u$ in the interval of payoffs and for all $j$ that target some state in $\Sigma$ at the frontier $\hat{\boldsymbol{\mu}}(u)$.

We define
\[
    \Delta_n(\Sigma, \hat \alpha) = \frac{d^n}{d \alpha^n} \dfrac{\sum_{\theta_j \in \tau^{supp}_{\hat{\boldsymbol{\mu}}}(\Sigma, v)} \beta_0(\theta_j) \theta_j g\left(\alpha \tilde \mu_j[\hat{\boldsymbol{\mu}}(v)]\right) \tilde \mu_j[\hat{\boldsymbol{\mu}}(v)]}{\sum_{\theta_j \in \tau^{supp}_{\hat{\boldsymbol{\mu}}}(\Sigma, v)} \beta_0(\theta_j) g\left(\alpha \tilde \mu_j[\hat{\boldsymbol{\mu}}(v)]\right) \tilde \mu_j[\hat{\boldsymbol{\mu}}(v)]} \Bigg |_{\alpha = \hat \alpha}.
\]
This is equal to the $n$th derivative of the payoff to the set of senders in states that target a state in $\Sigma$ with positive probability at frontier $\hat{\boldsymbol{\mu}}(v)$, that have an amount $\hat \alpha \tilde \mu_j[\hat{\boldsymbol{\mu}}(v)]$ of data, when we assume that eq. \ref{eq:pool_burden} holds over $\Sigma$.

Start with a collection of assigned partition elements, $\mathcal{A}_0 = \emptyset$, and a collection of sets of unassigned states, $\mathcal{C}_0 = C$. Given $\mathcal{A}_n$ and $\mathcal{C}_n$, initialize $\mathcal{A}_{n+1} = \mathcal{C}_{n+1} = \emptyset$, and, taking each set $S \in \mathcal{C}_n$ sequentially, proceed as follows:
\begin{enumerate}
    \item Take all subsets $\Sigma \subseteq S$ and calculate $\Delta_0(\Sigma, 1)$. Tiebreak any with the same value by $\Delta_1(\Sigma, 1), \Delta_2(\Sigma, 1), \ldots$, successively, and take the largest subset $\Sigma$ that is maximal. Label it with $\tau_{\hat{\boldsymbol{\mu}}}^{supp}(\Sigma, v)$, and add it to $\mathcal{A}_{n+1}$.
    
    Note that this implies that $\frac{du_k(\hat \mu_k(v))}{d \mu_k} = \frac{\Delta_1(\Sigma, 1)}{\hat \mu_k(v)}$ when equation \ref{eq:pool_burden} holds for $\theta_k, \theta_{k'} \in \Sigma$ over $[v-\epsilon, v]$, $\epsilon > 0$.
    \item Take $S \setminus \Sigma$, and let $C(S)$ be the collection of connected components of the graph on $S$ constructed analogously to $H(v)$. Add $C(S)$ to $C_{n+1}$ (i.e. augment $C_{n+1}$ as the union of itself and $C(S)$).
    \item Repeat on $\mathcal{A}_{n+1}$ and $\mathcal{C}_{n+1}$ until $\mathcal{C}_{n+1} = \emptyset$.
\end{enumerate}
Putatively, if senders of types $\alpha \tilde \mu_j[\hat{\boldsymbol{\mu}}(v)] f_j$ for some $\theta_j \in \tau_{\hat{\boldsymbol{\mu}}}^{supp}(v)$ pooled with each other, then payoffs are equal to
\[
   \hat u_k(\alpha \hat \mu_k(v)) = v_{\Sigma}(\alpha, \hat{\boldsymbol{\mu}}(v)) \equiv \dfrac{\sum_{\theta_j \in \tau^{supp}_{\hat{\boldsymbol{\mu}}}(\Sigma, v)} \beta_0(\theta_j) \theta_j g\left(\alpha \tilde \mu_j[\hat{\boldsymbol{\mu}}(v)]\right) \tilde \mu_j[\hat{\boldsymbol{\mu}}(v)]}{\sum_{\theta_j \in \tau^{supp}_{\hat{\boldsymbol{\mu}}}(\Sigma, v)} \beta_0(\theta_j) g\left(\alpha \tilde \mu_j[\hat{\boldsymbol{\mu}}(v)]\right) \tilde \mu_j[\hat{\boldsymbol{\mu}}(v)]} \Bigg |_{\alpha = \hat \alpha},
\]
which is continuous in $\alpha$ because $g$ is continuous everywhere in $(0,1]$.\footnote{It is important that $g(1) = 0$, since this ensures that $g^j$ is continuous at $\mu = 1$.} The burden-of-proof function for $\underline v \le v$ is then given by 
\[
\mu_{\Sigma}^{put}(\underline v) \equiv \{v_{\Sigma}^{-1}(\underline v, \hat{\boldsymbol{\mu}}(v)) \hat \mu_k(v) f_k\}_{\theta_k \in S(v)},
\] 
where $v_{\Sigma}^{-1}(\underline v, \hat{\boldsymbol{\mu}}(v))$ is the inverse of  $v_{\Sigma}(\cdot, \hat{\boldsymbol{\mu}}(v))$.

The reason that a partition element is a subset of targetable states in which all messages must achieve the same payoff at the  is that, since $\Sigma$ is a maximal highest-value subset over those that do not already have a higher value, it is either partitionable into smaller subsets, each of which also achieves the same value, or not; but in either case, in each minimal subset that achieves the maximal value, there is a path of messages between any two messages in the subset such that, in the targeting strategy, some type mixes with strictly positive probability between any two adjoining messages. The reason for this is that, for any smaller subset $\Sigma' \subset \hat \Sigma$, we have that $V_{\hat{\boldsymbol{\mu}}}(\Sigma', \hat{\boldsymbol{\mu}}(v)) < V_{\hat{\boldsymbol{\mu}}}(\hat \Sigma, \hat{\boldsymbol{\mu}}(v))$ if $\Sigma$ is a minimal subset that achieves the maximal value. Since the expectation of the state conditional on knowing the message played is in $\hat \Sigma$ is at least $V_{\hat{\boldsymbol{\mu}}}(\hat \Sigma, \hat{\boldsymbol{\mu}}(v))$, there must be some message that yields payoff at least $V_{\hat{\boldsymbol{\mu}}}(\Sigma, \hat{\boldsymbol{\mu}}(v))$. But since there is no message, and indeed no proper subset of messages in $\hat \Sigma$ that achieve payoff $V_{\hat{\boldsymbol{\mu}}}(\Sigma, \hat{\boldsymbol{\mu}}(v))$ if all types that can play one of them do, it must be that for any subset, there is a type that can play some message in the subset but plays a message outside the subset with positive probability.

The reason the same holds true in frontiers to the left of $\hat{\boldsymbol{\mu}}(v)$ is that, if $\Delta_0(\Sigma, 1)$ is uniquely maximal, then $\Delta_0(\Sigma, \alpha)$ is still greater than $\Delta_0(\Sigma', 1)$ for any $\Sigma'$ and $\alpha$ sufficiently close to 1. So, in any state under which senders target a state in $\Sigma$ at $\hat{\boldsymbol{\mu}}(v)$, it remains optimal for them to do so for $\alpha$ close to 1, assuming the putative payoffs above. In addition, the putative payoffs are feasible, because every subset of $\Sigma$ has lower value. If tiebroken by $\Delta_1, \Delta_2,$ and so on, then although $\Delta_0(\Sigma, 1)$ is not uniquely maximal, $\Sigma$ does maximize $\Delta(\cdot, 1)$ immediately to the left of $\hat{\boldsymbol{\mu}}(v)$. 

We will use the partition constructed by the algorithm to construct the equilibrium in chunks. For consistency, we want the following condition:

\textbf{Condition 1.} The value of each partition element constructed using the algorithm is the same, and is equal to $v$.

Under this condition, there is a partial strategy $\hat \sigma$ on each partition element such that $W_{\hat{\boldsymbol{\mu}}}(\theta_k,v|\hat \sigma)= v$ for all states $\theta_k$ in the partition element, and furthermore, there is no partial strategy on a subset of messages in that partition element such that all messages in the subset result in the same payoff that is greater than $v$.

If Condition 1 holds at $\hat{\boldsymbol{\mu}}(v)$ and $\Sigma$ is the partition constructed using the algorithm at $\hat{\boldsymbol{\mu}}(v)$, then there exists some $\epsilon > 0$ such that, for all $\underline{v} \in [v - \epsilon, v]$, Condition 1 holds for the frontier $\{v_{\Sigma}^{-1}(\underline v, \hat{\boldsymbol{\mu}}(v)) \hat \mu_k(v) f_k\}_{\theta_k \in S(v)}$. To show this, observe the following claim, which follows directly from statement of the condition and from continuity of $v_\Sigma(\alpha, \hat{\boldsymbol{\mu}}(v))$:
\begin{claim} \label{claim:cond_1}
Let the set of types that target a state in $\Sigma$ and achieve a payoff of $\underline v$ under $\mu_{\Sigma}^{put}$ be $\tau_{\Sigma}^{put}(\underline v)$.

If Condition 1 holds at $\hat{\boldsymbol{\mu}}(v)$, then if there exists no $v' \in (\underline v, v]$ such that either 
\begin{enumerate}
    \item There is a type $t \in \tau_{\Sigma}^{put}(v')$ such that $t$ can imitate a higher-value state, i.e. there exists partition element such that $\Sigma'$ $v_{\Sigma'}^{-1}(v'', \hat{\boldsymbol{\mu}}(v)) \hat \mu_k(v) f_k \tilde \subseteq t$ for some $v'' > v'$
    \item There is a partition element $\Sigma$ with a subset $\Sigma' \subseteq \Sigma$ such that $v_{\Sigma'}(v_{\Sigma}^{-1}(v', \hat{\boldsymbol{\mu}}(v)), \hat{\boldsymbol{\mu}}(v)) > v'$,
\end{enumerate}
then Condition 1 continues to hold at $\hat{\underline{v}}$.
\end{claim}
 Note that, because for any partition element $\Sigma' \neq \Sigma$ either $v_{\Sigma'}^{-1}(v, \hat{\boldsymbol{\mu}}(v)) \hat \mu_k(v) f_k \tilde {\not \subseteq} t$, or $\Delta_n(\Sigma', 1) < \Delta_n(\Sigma, 1)$ for some $n$ such that $\Delta_i(\Sigma', 1) = \Delta_i(\Sigma, 1)$ for all $i < n$, the continuity of $v_{\Sigma}(\alpha, \hat{\boldsymbol{\mu}}(v))$ implies that for $\underline v$ close to $v$ (1) cannot not hold. Again by continuity, (2) cannot hold for $\underline v$ close to $v$ because for all $\Sigma' \subseteq \Sigma$, $v_{\Sigma'}(v_{\Sigma}^{-1}(v, \hat{\boldsymbol{\mu}}(v)), \hat{\boldsymbol{\mu}}(v)) \le v$ and $\Delta_n(\Sigma', 1) < \Delta_n(\Sigma, 1)$ for some $n$ such that $\Delta_i(\Sigma', 1) = \Delta_i(\Sigma, 1)$ for all $i < n$.

We will use this to construct the equilibrium in segments over which Condition 1 holds, and re-construct partitions using the algorithm in at most countably many points at which either (1) or (2) holds. For every reasonable example we can think of, the number of such points (and thus steps in the construction) is not just countable, but finite. 

Now we turn to constructing larger pooling sets when there is a positive-measure set of types that can achieve the frontier payoff. Given that types support their play on $\{\hat \mu_k(u_k(\mu/ r_j(k))) f_k: \theta_k \in A_j(\mu)\}$, and $\hat u_k(\mu_k)$ is increasing, all types capable of sending a message in $\{\hat{\mu}_j(v) \mathbf{f}_j\}_{j=1}^J$ achieve a payoff of at least $v$. We define the set of types that are incapable of sending a message in $\{\hat{\mu}_j(v) \mathbf{f}_j\}_{j=1}^J$, but capable of sending a message in set $M$, as $T(v, M)$. We will denote the payoff to the sender of the receiver knowing they are one of a set of types that has positive probability measure under the receiver's prior as $U(T)$, and in particular,
\[
U(T(v,M)) = \frac{\sum_{j=1}^J \beta_0(\theta_j) \theta_j \max(\max_{k\ge l}(G^j(\frac{\hat \mu_k(v)}{r_j(k)})) - \min \{G^j(\mu): \exists m \in M \text{ s.t. } m \tilde \subseteq \mu f_j\}, 0)}{\sum_{j=1}^J \beta_0(\theta_j) \max(\max_{k\ge l}(G^j(\frac{\hat \mu_k(v)}{r_j(k)})) - \min \{G^j(\mu): \exists m \in M \text{ s.t. } m \tilde \subseteq \mu f_j\}, 0)}.
\]
Note that $\sup_{M} U(T(v,M)) \ge v$, because $\lim_{\alpha \rightarrow 1} U(T(v, \alpha \hat{\boldsymbol{\mu}})) = v$. If there is a positive-measure type set $T(v,M)$ that achieves the value $\sup_M U(T(v, M))$, then take the largest such set and call it $\hat T_{\hat{\boldsymbol{\mu}}}^{max}(v)$. 
Then the following hold:
\begin{enumerate}
    \item If there exists a set $T(v,M)$ that achieves the value $\sup_M U(T(v,M))$, then there is a unique largest set that does so, and so $\hat T_{\hat{\boldsymbol{\mu}}}^{max}(v)$ is well-defined.
    \item Whenever $\hat T_{\hat{\boldsymbol{\mu}}}^{max}(v)$ exists, there exist $\mu_l, \ldots, \mu_J$ such that $\hat T^{\max}_{\hat{\boldsymbol{\mu}}}(v) = T(v, \{\mu_l f_l, \ldots, \mu_J f_J\})$.
    \item Whenever $\hat T_{\hat{\boldsymbol{\mu}}}^{max}(v)$ exists, there exists a partial strategy $\hat \sigma: \hat T^{\max}_{\hat{\boldsymbol{\mu}}}(v) \rightarrow M = \{\mu_l f_l, \ldots, \mu_J f_J\}$ such that the payoff to any message $m \in M$ given that senders in $\hat T^{\max}_{\hat{\boldsymbol{\mu}}}(v)$ play according to $\hat \sigma$ is $\hat U_{\hat{\boldsymbol{\mu}}}(v)$.
\end{enumerate}
The first point follows from the fact that, unless the union of two such sets yields payoff at least $\hat U_{\hat{\boldsymbol{\mu}}}(v)$, then their intersection -- which corresponds to the pool of types implemented by a different message set -- yields strictly greater payoff. To see the 2nd point, simply take $\mu_k$ to be the minimum amount of data distributed $f_k$ such that the dataset still contains a message in $M$, for each $k \ge l$, and note that the resulting set of types is a subset of $T(v,M)$ that has a smaller mass of types $\theta_j$, $j <l$ but the same mass of types $\theta_k$, $k \ge l$. Since $U(T(v,M)) \ge v \ge \theta_{l-1}$, this can only improve the payoff to the pool. The last point comes from the fact that, if $\hat T^{\max}_{\hat{\boldsymbol{\mu}}}(v)$ is a maximum-payoff pool, then for each subset $S \subseteq M$, the payoff to the pool implemented by $S$ is no greater than $U(\hat T^{\max}_{\hat{\boldsymbol{\mu}}}(v))$, which is sufficient to ensure that $\hat \sigma$ exists. In addition, $U(T(v,M))$ is absolutely continuous with respect to every component of $\hat{\boldsymbol{\mu}}(v)$ and each $\mu_k$.

\begin{lemma}
If $\hat T^{\max}_{\hat{\boldsymbol{\mu}}}(v)$ exists, then Condition 1 is satisfied by the burden of proof vector $M = \{\mu_l f_l, \ldots, \mu_J f_J\}$ such that $\hat T^{\max}_{\hat{\boldsymbol{\mu}}}(v) = T(v,M)$.
\end{lemma}
\begin{proof}
Suppose not; then one of two cases is true:
\begin{enumerate}
    \item \underline{There is a collection of states $\Sigma \subset S(v)$ such that $V_{\hat{\boldsymbol{\mu}}}(\Sigma, M) > v$}.
    
    Then, since $V_{\hat{\boldsymbol{\mu}}}(\Sigma, \alpha (\mu_k f_k)_{k=l}^J)$ is continuous in $\alpha$, there is $\underline \alpha < 1$ such that $V_{\hat{\boldsymbol{\mu}}}(\Sigma, \alpha (\mu_k f_k)_{k=l}^J) > v$ for all $\alpha \in [\underline \alpha, 1]$. Consider an alternative type set, $T(M_{\underline \alpha, \Sigma}, v)$ where $M_{\underline \alpha, \Sigma}$ includes the messages $\mu_k f_k$ for $\theta_k \in S(v) \setminus \Sigma$, and the messages $\underline \alpha \mu_k f_k$ for $\theta_k \in \Sigma$. 
    
    For $\underline \alpha$ small enough, the set of types in $T(M_{\underline \alpha, \Sigma}, v) \setminus T(M, v)$ includes exactly those in frontiers $(\alpha M)_{\alpha = \underline \alpha}^1$ that find it weakly optimal to target a state in $\Sigma$. So, the expectation of the state given that the sender's type is in $T(M_{\underline \alpha, \Sigma}, v) \setminus T(M, v)$ exceeds $v$, and so $T(M_{\underline \alpha, \Sigma}, v)$ is higher-payoff than $T(M, v)$, contradicting that $T(M,v) = \hat T^{\max}_{\hat{\boldsymbol{\mu}}}(v)$.
    
    \item \underline{There is a element of the partition, $\Sigma' \subset S(v)$, such that $V_{\hat{\boldsymbol{\mu}}}(\Sigma', M) > v$}.
    
    Then WLOG let $\Sigma'$ be the lowest-value element of the partition. Similarly to the above, since $V_{\hat{\boldsymbol{\mu}}}(\Sigma, \alpha (\mu_k f_k)_{k=l}^J)$ is continuous in $\alpha$, there is $\bar \alpha > 1$ such that $V_{\hat{\boldsymbol{\mu}}}(\Sigma, \alpha (\mu_k f_k)_{k=l}^J) < v$ for all $\alpha \in [1, \bar \alpha]$. Consider an alternative type set, $T(M_{\bar \alpha, \Sigma'}, v)$ where $M_{\bar \alpha, \Sigma'}$ includes the messages $\mu_k f_k$ for $\theta_k \in S(v) \setminus \Sigma'$, and the messages $\bar \alpha \mu_k f_k$ for $\theta_k \in \Sigma'$. 
    
    For $\bar \alpha$ small enough, the set of types in $T(M,v) \setminus T(M_{\bar \alpha}, \Sigma')$ includes exactly those in frontiers $(\alpha M)_{\alpha = 1}^{\bar \alpha}$ that find it weakly optimal to target a state in $\Sigma$. Then the expectation of the state given that the sender's type is in $T(M,v) \setminus T(M_{\bar \alpha}, \Sigma')$ is less than $v$, so the expectation given that the type is in $T(M_{\bar \alpha}, \Sigma')$ exceeds $v$, contradicting that $T(M,v) = \hat T^{\max}_{\hat{\boldsymbol{\mu}}}(v)$.
\end{enumerate}
Since neither case is possible, $M$, taken as the payoff frontier corresponding to $v$, must satisfy Condition 1.
\end{proof}

The iterative algorithm to construct the equilibrium of \ref{thm:imitate} starts from the highest-potential-payoff senders and creates payoff frontiers that satisfy Condition 1. It proceeds as follows:

\begin{enumerate}
    \item Start with $l = J$ and $\hat \mu_J(\theta_J) = 1$.
    \item For each $l$, construct frontiers $\hat \mu_k(v)$ as follows:
    \begin{enumerate}
        \item Start at $v = \theta_l$ and burden-of-proof vector $\hat{\boldsymbol{\mu}}(\theta_l)$, as constructed from the previous step. For all $v > \theta_l$, let $\hat{\boldsymbol{\mu}}(v)$ be as already constructed. Define 
        \[
        \check \mu_l(\theta_l) = \max \{\mu: \exists j < l \text{ s.t. } \tilde \mu_j[\hat{\boldsymbol{\mu}}(\theta_l)] \ge \mu\},
        \]
        and rewrite $\hat{\boldsymbol{\mu}}(\theta_l) = (\check \mu_l(\theta_l), \hat \mu_{l+1}(\theta_l), \ldots, \hat \mu_J(\theta_l))$.
        Proceed as below to rewrite $\hat{\boldsymbol{\mu}}(v)$ for $v < \theta_l$:
        \item Fix $S = \{\theta_k\}_{k=l}^J$. Given the frontier $\hat{\boldsymbol{\mu}}(v)$, check if $\hat T_{\hat{\boldsymbol{\mu}}}^{max}(v)$ exists, and if so,
        find $M = \{\mu_l f_l, \ldots, \mu_J f_J\}$ that implements $\hat T_{\hat{\boldsymbol{\mu}}}^{max}(v)$ and rewrite $\hat{\boldsymbol{\mu}}(v) = M$.
        \item At $\hat{\boldsymbol{\mu}}(v)$, using the algorithm, partition $S$ into subsets of states, and calculate $v_\Sigma(\alpha, \hat{\boldsymbol{\mu}}(v))$ for all $\alpha \in [0,1]$ for each subset. Take the lowest-value frontier, $\hat{\boldsymbol{\mu}}(v')$, under putative payoffs $v_\Sigma(\alpha, \hat{\boldsymbol{\mu}}(v))$ such that the conditions of Claim \ref{claim:cond_1} are satisfied and such that $\hat T_{\hat{\boldsymbol{\mu}}}^{max}(v'')$ does not exist for any $v'' \in (v', v]$, and assign strategies according to Algorithm 2 between $\hat{\boldsymbol{\mu}}(v)$ and the new frontier $\hat{\boldsymbol{\mu}}(v')$.
        \item Set $v = v'$ and set $\hat{\boldsymbol{\mu}}(v')$ as the new frontier, and repeat the above 2 steps until $v' = 0$.
    \end{enumerate}
    \item Repeat the above steps for each $l$ in descending order until $l = 1$, and fix the resulting $\hat{\boldsymbol{\mu}}$.
\end{enumerate}

The existence of an imitation equilibrium, and the monotonicity of $\hat u_k$, follow directly from this construction. Continuity of $\hat u_k$ also follows from this construction. The value of $\hat u_k$ is defined on series of closed intervals on each of which it is continuous -- $v_\Sigma(\alpha, \mu)$ is continuous in $\alpha$, and $\hat u_k(\mu)$ is constant for $\mu f_k \in T(v,M)$. Together, these cover the domain of $u_k$, that is, $[0,1]$, and they overlap only at their endpoints, at which they coincide. 

Next, we prove the separation theorem. It has 2 parts, which we will prove as lemmas. We start by proving that upper pools are improving:

\begin{lemma} \label{lem:pool_payoff}
If $M$ is a collection of messages and $\{\tilde \mu_j(v) f_j\}_{j=1}^J$ is the frontier of types achieving a payoff of at least $v$ under $\sigma^{*}$, where $\theta_i < v \le \theta_{i+1}$, then
\[
\mathbb{E}_{q}[\theta|t \in U(\{\underline \mu_j f_j\}_{j=1}^J) \setminus U(M)] \ge v
\]
whenever $U(\{\tilde \mu_j f_j\}_{j=1}^J) \setminus U(M)$ is nonempty.
\end{lemma}

\begin{proof}[Proof of Lemma]
Denote $T(v,M) = U(\{\tilde \mu_j f_j\}_{j=1}^J) \setminus U(M)$. Let $(\bar \mu_1, \ldots, \bar \mu_i; \bar \mu_{i+1}, \ldots, \bar \mu_J)$ be the minimum masses of data distributed like $f_1, \ldots, f_i; f_{i+1}, \ldots, f_J$, respectively, necessary to send some message in $M$. Then
\[
\mathbb{E}_{q}[\theta|t \in T(v,M)] = \frac{\sum_{j=1}^J \beta_0(\theta_j) \theta_j (G^j(\bar \mu_j) - G^j(\tilde \mu_j))}{\sum_{j=1}^J \beta_0(\theta_j) (G^j(\bar \mu_j) - G^j(\tilde \mu_j))}.
\]

If $(\bar \mu_{i+1}, \ldots, \bar \mu_J) \le (\tilde \mu_{i+1}, \ldots, \tilde \mu_J)$ pointwise, then $T(v,M)$ is empty. Otherwise, let the states $j_1, \ldots, j_A$ be the maximal set such that $(\bar \mu_{j_1}, \ldots, \bar \mu_{j_A}) > (\tilde \mu_{j_1}, \ldots, \tilde \mu_{j_A})$ pointwise. 
Call the set of types that send $\mu' f_{j_a}$ with positive probability under $\sigma^{*}$ by $\tau_{\sigma^{*}}^{supp}(\mu' f_{j_a})$, and let $\theta(t)$ refer to the state corresponding to the distribution of dataset $t$. Denote by $\hat \sigma_v$ the partial strategy, restricting to types in $T(v,M)$, where those types play as they do in $\sigma^*$, and assume that the receiver knows the sender is in $T(v,M)$ and playing according to this strategy.

Let $\phi_{\sigma^*}$ be a joint density over types and messages induced by $\sigma^*$, so that for type $t = \mu f_j$ and message $m = \tilde \mu f_{\tilde j_a}$, we can define
\[
\phi(t, m) = g^j(\mu r_{j}(\tilde j)) \sigma^*(m| t) \beta_0(\theta_j) r_{\theta_j}(j')
\]
to be the density on the event that the sender is type $t$ and plays message $m$, when $t$ plays $m$ with positive probability.
In the case when payoffs under $u_{\sigma^*}$ are strictly increasing at $\mu' f_{j_a}$, each sender who plays $\mu' f_{j_a}$ is randomizing between at most a finite number of messages in their mixed strategy, one corresponding to each state that is weakly optimal for them to imitate. Thus, they play each message in the support of their strategy with strictly positive probability, rather than randomizing with some density over a continuum of messages; $\phi$ therefore fully captures the distribution of play for senders playing $\mu' f_{j_a}$. 

When payoffs are strictly increasing at $\mu' f_{j_a}$, we know that for every $\mu' f_{j_a}$ that is in $T(v,M)$ and is on-path in $\sigma^*$, the receiver's inference when they know the sender's type is in $T(v,M)$ in addition to knowing they played message $\mu' f_{j_a}$ is weakly better than if they only know $\mu'f_{j_a}$ was the message played. Formally,
\begin{equation}
\begin{split}
    \mathbb{E}_{q}[\theta|\mu' f_{j_a}] &= \frac{\sum_{t \in \tau_{\sigma^*}^{supp}(\mu' f_{j_a}) \bigcap T(v,M)} \theta(t) \phi(t, \mu' f_{j_a}))}{\sum_{t \in \tau_{\sigma^*}^{supp}(\mu' f_{j_a}) \bigcap T(v,M)} \phi(t, \mu' f_{j_a})}\\
    &\ge \frac{\sum_{t \in \tau_{\sigma^*}^{supp}(\mu' f_{j_a})} \theta(t) \phi(t, \mu' f_{j_a})}{\sum_{t \in \tau_{\sigma^*}^{supp}(\mu' f_{j_a}))} \phi(t, \mu' f_{j_a})}\\
    &\ge v
\end{split}
\end{equation}
 where the first inequality comes from the fact that $\theta_{j_a} \ge v > \theta(t)$ whenever $\theta(t) \neq \theta_{j_a}$, and $\mu' f_{j_a} \in T(v,M)$ only if all types that play it under $\sigma^*$ are also in $T(v,M)$.

Since, of course, payoffs under $u_{\sigma^*}$ may not be strictly increasing at every $\mu' f_{j_a}$ in $T(v,M)$, we have to separately consider the case in which they are constant, i.e. the case where there are positive-measure pools $T$ of senders achieving the same payoff $v' > v$ under $\sigma^*$ with $T \bigcap T(v,M)$ nonempty. Then let $M'$ be the set of messages that implements the pool, and 
\[
\mathbb{E}_{\hat \sigma_v}[\theta|m \in M'] = \mathbb{E}_{\hat \sigma_v}[\theta|t \in T \bigcap T(v,M)].
\]
The value of $T \setminus T(v,M)$ is equal to the value of $T \bigcap U(M)$, which is no more than $v'$ since $T = \hat T^{max}_{\hat{\boldsymbol{\mu}}}(v')$, and so it contains no subsets of higher value. Therefore, $\mathbb{E}_{\hat \sigma_v}[v(\theta)|t \in T \bigcap T(v,M)] \ge v' \ge v$.

Then, taking the total expectation over both cases, the expectation of $\theta$ given that the sender's type is in $T(v,M)$ is a weighted average of $\mathbb{E}_{\hat \sigma_v}[\theta|\mu' f_{j_a}]$ over on-path messages $\mu' f_{j_a}$ in $T(v,M)$ in which the payoff is strictly decreasing; and the value over positive-measure sets of equal payoff. We have shown that each component is no less than $v$, and so the weighted average is also at least $v$. \qed
\end{proof}

Next we prove that the imitation equilibrium we construct has worsening lower pools. This is relatively simple.
\begin{lemma} \label{lem:worsening_lower}
If $M$ is a collection of messages and $\{\tilde \mu_j(v) f_j\}_{j=1}^J$ is the frontier of types achieving a payoff of at least $v$ under $\sigma^{*}$, where $\theta_i < v \le \theta_{i+1}$, then
\[
\mathbb{E}_{q}[\theta|t \in U(M) \setminus U(\{\tilde \mu_j f_j\}_{j=1}^J)] < v
\]
whenever $U(M) \setminus U(\{\tilde \mu_j f_j\}_{j=1}^J)$ is nonempty.
\end{lemma}

\begin{proof}
    If there was a payoff frontier $\hat{\boldsymbol{\mu}}(v)$ that had a nonempty, weakly improving lower pool lower-bounded by messages $M$, then there is a frontier $\hat{\boldsymbol{\mu}}(w) \neq M$ for some $w \ge v$ such that
    \[
    u_{pool}(U(M) \setminus U(\hat{\boldsymbol{\mu}}(w))) = w.
    \]
    The construction algorithm rules this out, because if indeed the payoff frontiers above $\hat{\boldsymbol{\mu}}(w)$ are correctly constructed, then it would next set $\hat{\boldsymbol{\mu}}(w) = M$.
\end{proof}

Finally, we show that the constructed equilibrium outcome is the only imitation equilibrium outcome, and thus that the imitation equilibrium outcome is unique.

\begin{proof}
    Let the constructed equilibrium be $\sigma^*$, and let $\sigma$ be an alternative equilibrium, with a different outcome. We aim to show that $\sigma^*$ does not have improving upper pools, and therefore cannot be an imitation equilibrium.

    To see this, let $M$ represent the frontier of messages that are used to achieve payoff $v$ in $\sigma$. Worsening lower pools under $\sigma^*$ imply that $u_{pool}(U(M) \setminus U(\hat{\boldsymbol{\mu}}(v))) \le v$, implying that $M$ has a worsening upper pool. Since $M$ is a payoff frontier of $\sigma$, the alternative equilibrium $\sigma$ does not have improving upper pools, and is therefore not an imitation equilibrium.
\end{proof}

\section{Inclusive announcement-proofness} \label{app:refinement}
Here we discuss a way in which the truth-leaning equilibrium outcome arises from optimal behavior for the sender. The concept of optimality we use, \emph{inclusive} announcement-proofness, refines PBE by requiring that there is no self-separating set of sender types who could weakly improve their payoffs by announcing a strategy that uses some set of messages differently than they are used in the baseline equilibrium.
 
\begin{definition}
    Given an outcome $u_{\sigma^*}$, a set of types $T$ has a \emph{credible inclusive announcement} that they will play a strategy $\hat \sigma_M$ supported over message set $M$ for payoff $v$ if
    \begin{itemize}
        \item $\hat \sigma_M: M \times T \rightarrow \mathbb{R}$ is such that $\sum_{t \in T} \hat \sigma_{M}(m|t) = 1$ for all $m \in M$, $\sum_{m \in M} \hat \sigma_{M}(m|t) = 1$ for all $t \in T$, and $\mathbb{E}_{\beta_{\hat \sigma_M}(\cdot|m)}[\theta] = v$ for all $m \in M$.
        \item $T = \{t \in U(M): u_{\sigma^*}(t)\}$, and there is some $t \in T$ with $u_\sigma(t) < v$.
    \end{itemize}
\end{definition}
A very closely-related notion, that we take the name from, is the idea of a credible announcement, from \citet{Matthews91}. There is, however, a subtle difference, which is that in a credible announcement, $T = \{t \in U(M): u_{\sigma^*}(t) \le v\} \bigcup S$ where $S \subseteq \{t \in U(M): u_{\sigma^*}(t) = v\}$. Thus what we use is an ``inclusive" notion of a credible announcement in that the set of announcing types must include all who weakly prefer to participate; it is stronger to claim there exists an credible inclusive announcement than that there exists a credible announcement, and correspondingly, inclusive announcement-proofness is weaker than announcement-proofness. In fact, there may exist no announcement-proof equilibrium at all in the game we study, while there always exists exactly one inclusive announcement-proof equilibrium outcome.

\begin{claim}
    In $\mathcal{G}$, the unique inclusive announcement-proof equilibrium outcome is the imitation (equivalently, truth-leaning) equilibrium outcome.
\end{claim}
\begin{proof}
    For any equilibrium $\sigma$ with a different outcome than the imitation-equilibrium outcome $\sigma^*$, there is some $v$ such that the $v$-payoff frontier under $\sigma$ differs from that under $\sigma^*$, and such that some types that achieve a payoff of $v$ or greater under $\sigma^*$ achieve a payoff no more than $v$ under $\sigma$. Lemma \ref{lem:pool_payoff} ensures that when all such types pool, the expected value of the state is at least $v$. Then, from the continuity of $\hat u_j(\mu)$, there exists some $v' < v$ such that when the set of all types that achieve a payoff of at least $v'$ under $\sigma^*$, but a payoff of no more than $v$ under $\sigma$, is pooled, the expected value of the state is exactly $v$. Starting from equilibrium $\sigma$, this set of types has a credible inclusive announcement that yields a payoff of $v$ to each type, and so $\sigma$ is not inclusive announcement-proof.

    On the other hand, any credible announcement relative to baseline equilibrium $\sigma^*$ requires the existence of some $v$ and set of messages $M$ such that there exists a pool of types 
    \[
    T = \{t \in U(M): u_{\sigma^*}(t) \le v\}
    \]
    such that $\mathbb{E}[\theta|t\in T] = v$, with at least one type $t' \in T$ such that $u_{\sigma^*}(t') < v$. Since $T$ contains all types $t \tilde \supset t'$ with $u_{\sigma^*}(t) \le v$, we know $T$ is a set of positive measure. The construction algorithm for $\sigma^*$, however, rules out the presence of any such set $T$, since if all frontiers for payoffs in $(v, \theta_J]$ are correctly constructed, then all types in $T$ must be pooled under $\sigma^*$ and must obtain a payoff of $v$ exactly.
\end{proof}

\begin{claim}
    In any finite-data game $\mathcal{G}_N$, the unique inclusive announcement-proof equilibrium outcome is the truth-leaning equilibrium outcome.
\end{claim}

To show that the truth-leaning equilibrium outcome is inclusive announcement-proof in finite-data games, I construct it, using the algorithm from Rappoport, which I summarize here. In short, the equilibrium is constructed by iteratively choosing a frontier of types such that the set of types ``above" the frontier, in the sense of being able to imitate some frontier type, yields as favorable a belief as possible.

\noindent \textbf{Algorithm (Finite $N$).}
First, define for any type set $T$ the subset of types $T^+(M) = T \bigcap U(M)$ as the set of types in $T$ that are capable of sending some message in message set $M$, and define $u_{pool}(T)$ to be the payoff to the sender if the receiver knows only that their type must be in $T$.
\begin{enumerate}
    \item Let $T_1 = \mathcal{T}_{N}$, and find the set of messages $M_1 \subseteq T_1$ that maximizes the payoff to a pool consisting of the set of senders in $T_1$ who can send at least one message in it:
    \[
    M_1 \in \arg \max_{M \subseteq T_1} u_{pool}(T_1^+(M)).
    \]
    If there are multiple such pools, then we take their union, which is also such a pool. 
    \item For $s = 2$ onwards, restrict the set of types to $T_s = T_{s-1} \setminus T_{s-1}^+(M_{s-1})$, and find (the union of)
    \[
    M_s \in \arg \max_{M \in T_{s}}  u_{pool}(T_s^+(M)). 
    \]
    \item Continue until $T_s \setminus T_s^+(M_s) = \emptyset$. Given each set $M_s$, there always exists a mixed strategy profile $\sigma_{pool}^M$ defined over types in $T_1^+(M)$ such that each message in $M$ yields the same payoff under the receiver's induced beliefs from $\sigma_{pool}^M$.\footnote{Otherwise, the worst possible payoff to particular message in $M$ over all strategy profiles over $M_s$ is better than the best possible payoff to some other message; then there always exists $M \subset M_s$ such that $T_s^+(M) > T_s^+(M_s)$.}
    Define $\sigma^*$ by $\sigma^*(m|t) = \hat \sigma_{pool}^{M_s}(m|t)$ where $M_s$ is the pool containing $m$. 
\end{enumerate}

\begin{proof}[Proof. (Unique credible inclusive announcement-proof outcome)]
By construction, there is no credible inclusive announcement, since such an announcement would constitute a better set of types than the one constructed at some step of the algorithm; this violates the optimality of the pool of types constructed in each step. No other outcome is immune: if $u_{\sigma^*_{alt}} \neq u_{\sigma^*}$, then there exists a $v$ such that the set of pools achieving a payoff greater than $v$ is identical in $u_{\sigma^*_{alt}}$ and $u_{\sigma^*}$, but the pool of types $T$ achieving payoff $v$ under $u_{\sigma^*}$ is a strict superset of that under $u_{\sigma^*_{alt}}$. Then types in $T$ can make a credible inclusive announcement that they will play as they do in $\sigma^*$.
\end{proof}

\section{Imitation, truth-leaning, and optimality} \label{app:truth_leaning_char}

We prove that truth-leaning equilibria and imitation equilibria coincide in $\mathcal{G}$, that the imitation equilibrium outcome is unique, and that it is the optimal outcome of communication under commitment for the receiver.

\begin{claim}
    Every imitation equilibrium of $\mathcal{G}$ is a truth-leaning equilibrium of $\mathcal{G}$.
\end{claim}

\begin{proof}
    We take the 2 perturbations separately. First, perturb the likelihood of honest commitment types by a sequence with $\epsilon^k_{t|t} = \epsilon^k \rightarrow 0$. There exists an equilibrium $u_{\sigma^*_{\epsilon^k}}$ of $\mathcal{G}^{\epsilon^k}$ in which strategies of non-commitment types are identical to the imitation equilibrium strategies in a game $\tilde {\mathcal{G}}^{\epsilon^k}$ under which 
    \[
    q(\mu f_j) = \begin{cases}
        \frac{\beta_0(\theta_j) (g(\mu)-\epsilon^k)}{1-\epsilon^k \sum_i \beta_i (1-G^i(\hat \mu_i(\theta_i))}, \ \ \ \ \ \  \  \mu \ge \hat \mu_j(\theta_j)\\
        \frac{\beta_0(\theta_i) g(\mu)}{1-\epsilon^k \sum_i \beta_i (1-G^i(\hat \mu_j(\theta_i))}, \ \ \ \ \ \ \ \mu < \hat \mu_j(\theta_j).
    \end{cases}
    \]
    Under the metric induced by the L2 norm, the set of equilibrium strategies is compact, and payoffs in $\tilde {\mathcal{G}}^{\epsilon}$ are continuous in $\epsilon$, so the limit point as $k \rightarrow \infty$ of the imitation equilibria of $\tilde {\mathcal{G}}^{\epsilon^k}$ must also be an equilibrium of $\mathcal{G}$. It is easy to verify that it must also satisfy the conditions in \ref{def:cts_truth_leaning_reduction}, so it is the imitation equilibrium of $\mathcal{G}$.

    Now, for fixed $\epsilon_k$, consider in addition the perturbation of payoffs by an additional payoff bump $\nu$ to a truthful report. When $\nu < \min_{j, k} |\theta_k - \theta_j|$, there exists an equilibrium $\sigma^*_{\epsilon^k, \nu}$ that is identical to the equilibrium $u_{\sigma^*_{\epsilon^k}}$ specified above, except for types $\mu f_j$ with $u_{\sigma^*_{\epsilon^k}} \in (\theta_j, \theta_j + \nu)$, who instead play the truth with positive probability. In particular, for a given message $\mu' f_k$ that yields a payoff in  $(\theta_j, \theta_j + \nu)$ and is played by $\mu f_j$ under $\sigma^*_{\epsilon^k}$, the probability that it is played by $\mu f_j$ in the equilibrium of the further-perturbed game is $0$ if the expected state over types playing $\mu' f_k$ for whom the state is not $\theta_j$ is no greater than $\theta_j + \nu$, and otherwise, the probability that $\mu f_j$ plays $\mu f_k$ is exactly such that the payoff to playing $\mu f_k$ is $\theta_j + \nu$, so that $\mu f_j$ is indifferent between playing message $\mu' f_k$ and revealing all their data. As $\nu \rightarrow 0$, the set of affected types shrinks towards a measure-0 set, and so these equilibria converge to $u_{\sigma^*_{\epsilon^k}}$ as $\nu \rightarrow 0$.

    Finally, given the equilibria $\{\sigma^*_{\epsilon^k, \nu^j}\}$ for $\epsilon^k \rightarrow 0$, $\nu^j \rightarrow 0$, diagonalize by taking, for every $k$, some $j_k$ such that $|| \sigma^*_{\epsilon^k, \nu^{j_k}} - \sigma^*_{\epsilon^k}|| < \frac{1}{k}$, and observe that then the sequence of perturbations $(\epsilon_{t|t} = \epsilon^k \forall t, \epsilon_t = \nu^{j_k} \forall t)_{k=1}^\infty$ yields equilibria that converge to $\sigma^*$.
\end{proof}

\begin{claim}
    Every truth-leaning equilibrium in $\mathcal{G}$ is an imitation equilibrium of $\mathcal{G}$.
\end{claim}

\begin{proof}
    If $t$'s dataset is off-path then the receiver plays a best response to the belief $\mathbbm{1}_t$ upon seeing t. This suffices to show that every truth-leaning equilibrium messaging strategy $\sigma$ is a best response to $q_{\sigma}$, as defined by eq. \ref{eq:truth_leaning_beliefs}.

     For part a), note that if a message $m$ is on-path in $\sigma$, then there exists $K_1$ such that for all $k > K_1$, $m$ is on-path in $\sigma^*_{\epsilon_k}$. For every $k$, however, all on-path messages are in $\mathcal{T}$, since if $m$ is on-path and $m \not \in \mathcal{T}$, then there is a type $t = \mu f_j$ with $\theta_j > u_{\sigma^*_{\epsilon_k}}(m)$ that plays $m$, and $t$ itself is not played as a message on path by any non-commitment types. But then $\mathbb{E}_{\beta_{\sigma^*_{\epsilon^k}}(\cdot|t)}[\theta] = \mathbb{E}_{\pi(\cdot|t)}[\theta] \ge \mathbb{E}_{\beta_{\sigma^*_{\epsilon_k}}(\cdot|m)}[\theta]$, leading to a contradiction. Hence, all on-path $m$ must be in $\mathcal{T}$.

    To prove that a truth-leaning equilibrium messaging strategy satisfies c), suppose there is $t$ such that $\mathbb{E}_{\pi(\cdot|t)}[\theta] > \max_{m \tilde \subseteq t} \mathbb{E}_{\beta_{\sigma}(\cdot|m)}[\theta]$ but $\sigma(t|t) < 1$. 
    
.    We will show that there is no sequence of perturbations $\{\epsilon^k_t, \epsilon^k_{t|t}\}_{k=1}^\infty \to 0$ such that equilibria of the associated perturbed games $\mathcal{G}^k$ converge to $\sigma$. Start by supposing for the sake of contradiction that there is. First, we know $t$ must be on path in $\sigma$. If $\sigma^k$ is an equilibrium of game $\mathcal{G}^k$ with $\epsilon^k_t > 0$, there cannot $t' \neq t$ such that $\sigma^k(t|t') > 0$, otherwise $\mathbb{E}_{\beta_{\sigma^k}(\cdot|t)}[\theta] \ge \max_{t' \tilde \subset t} \mathbb{E}_{\beta_{\sigma^k}(\cdot|t')}[\theta]$ and so $\mathbb{E}_{\beta_{\sigma^k}(\cdot|t)}[\theta] + \epsilon_t^k > \max_{t' \tilde \subset t} \mathbb{E}_{\beta_{\sigma^k}(\cdot|t')}[\theta]$ and we would have to have $\sigma^k(t|t) = 1$. Then, likewise, in the limit $\sigma$, we must have $\sigma(t|t') = 0$ for all $t'$. Since $t$ is on-path in $\sigma$, it must be that $\sigma(t|t) \in (0,1)$.
    
    Take a type $t'' \neq t$ such that $\sigma(t''|t) > 0$. We know that there exists $K$ such that for all $k > K$, $\sigma^k(t''|t) > 0$ as well. Then whenever $k > K$, $\mathbb{E}_{\pi(\cdot|t)} + \epsilon^k_t = \mathbb{E}_{\beta_{\sigma^k}(\cdot|t'')}[\theta]$. Because $\sigma^k \rightarrow \sigma$, we have that 
    \[
    \lim_{k \rightarrow \infty} \mathbb{E}_{\beta_{\sigma^k}(\cdot|t'')}[\theta] = \mathbb{E}_{\beta_\sigma(\cdot|t'')}[\theta] = \max_{m \tilde \subseteq t} \mathbb{E}_{\beta_{\sigma}(\cdot|m)}[\theta].
    \]
    But this contradicts that $\mathbb{E}_{\pi(\cdot|t)}[\theta] > \max_{m \tilde \subseteq t} \mathbb{E}_{\beta_{\sigma}(\cdot|m)}[\theta]$ and
    \[
    \lim_{k \rightarrow \infty} \mathbb{E}_{\beta_{\sigma^k}(\cdot|t'')}[\theta] = \lim_{k \rightarrow \infty}\mathbb{E}_{\pi(\cdot|t)} + \epsilon^k_t = \mathbb{E}_{\pi(\cdot|t)}.
    \]

    To show that b) holds, note that for any $k$, if $t$ is on-path and played by some $t' \neq t$, then $\sigma^k(t|t) = 1$. By c), $\mathbb{E}_{\pi(\cdot|t')} \le \mathbb{E}_{\beta_{\sigma^k}(\cdot|t)}$, but if $t$ also plays $t$ and $\mathbb{E}_{\pi(\cdot|t)} < \mathbb{E}_{\beta_{\sigma^k}(\cdot|t)}$, then the receiver cannot Bayesian. On the other hand, if $t$ is on-path and only $t$ plays $t$, then we must have $\mathbb{E}_{\pi(\cdot|t)} = \mathbb{E}_{\beta_{\sigma^k}(\cdot|t)}$. \qed
\end{proof}

Finally, closely following the idea in \citet{Hart17}, we show that the imitation equilibrium outcome is the outcome of the optimal pure-strategy mechanism, that is, the best outcome the receiver can achieve when they can commit to a pure action as a response to the message the sender sends. The revelation principle shows that it suffices to look at direct mechanisms, in which the sender truthfully reports their type and the receiver commits to a deterministic response to the sender's reported type.

A mechanism under which type $t$ elicits the action $a(t)$ is implementable if it satisfies IC:
\begin{equation}
\tag{IC} t \tilde \subseteq t' \ \Rightarrow \ a(t') \ge a(t).
\end{equation}

\begin{claim} \label{claim:opt_mech}
    The imitation equilibrium outcome is the optimal outcome for the receiver under commitment to pure strategies.
\end{claim}

To prove this claim, first define $T_{\mu f_k}$ be the set of types that imitate $\mu f_k$ under $\sigma^*$, including $\mu f_k$ itself. We start with a lemma.

\begin{lemma} \label{lem:finite_im}
    There always exists an imitation equilibrium $\sigma^*$ such that $T_{\mu f_k}$ is finite for every $\mu f_k \in \mathcal{T}$.
\end{lemma}

\begin{proof}[Proof of Lemma \ref{lem:finite_im}]
    First, for any imitation equilibrium, if $\{t: u_{\sigma^*}(t) = u_{\sigma^*}(\mu f_k)\}$ is a measure-0 set, since then it is necessarily true that at most one type under each state lies in the same payoff frontier as $\mu f_k$ under $\sigma^*$, and thus at most one type under each state imitates it.

    Now consider the case in which there is a positive-measure set of senders who achieve the payoff $u^* = u_{\sigma^*}(\mu f_k)$, where we have $\theta_l \le u_{\sigma^*}(\mu f_k) < \theta_{l+1}$. We know that there exists a way to divide the types by which state they imitate, and with what probability, given by sets $S_{l+1}, \ldots, S_J$ and any imitation equilibrium $\sigma^*$, such that
    \[
    \frac{\sum_{t \in S_j} \theta(t) q(t) \int_{\hat \mu_j(u^*)}^{\inf_{v > u^*}\hat \mu_j(v)}\sigma^*(\mu f_j|t) d\mu}{\sum_{t \in S_j} q(t) \int_{\hat \mu_j(u^*)}^{\inf_{v > u^*}\hat \mu_j(v)}\sigma^*(\mu f_j|t) d\mu} = u^*
    \]
    and for all $\mu^* \in (\hat \mu_j(u^*), \inf_{v > u^*}\hat \mu_j(v))$,
    \[
    \frac{\sum_{t \in S_j: t \tilde \supseteq \mu^*} \theta(t) q(t) \int_{\hat \mu_j(u^*)}^{\inf_{v > u^*}\hat \mu_j(v)}\sigma^*(\mu f_j|t) d\mu}{\sum_{t \in S_j: t \tilde \supset \mu^*} q(t) \int_{\hat \mu_j(u^*)}^{\inf_{v > u^*}\hat \mu_j(v)}\sigma^*(\mu f_j|t) d\mu} \le u^*.
    \]
    But it is always feasible to reorder the imitation strategy to construct $\sigma^{**}$ such that $S_{l+1}, \ldots, S_J$ are unchanged, but if $\mu_1 f_{j}$ imitates $\mu_1' f_i$ and $\mu_2 f_{j}$ imitates $\mu_2' f_i$, with $\mu_1 > \mu_2$, then $\mu_1' > \mu_2'$ also. That is, conditional on imitating the same state, higher-data senders always imitate types with more data under $\sigma^{**}$. Then any type is imitated by either a single type or an interval of types under any other state; the latter is ruled out by the fact that it would result in a payoff no more than $\theta_l$ to the message. Once again, since there is a finite set of states, this ensures that each type is imitated by at most a finite set of other types.
\end{proof}

\begin{proof}[Proof of Claim \ref{claim:opt_mech}]
    Suppose $A$ to be the subset of types in $\mathcal{T}$ that are imitated under the imitation equilibrium $\sigma^*$, and suppose that $\sigma^*$ is an imitation equilibrium in which each type is imitated by a finite set of other types, which exists by the previous lemma. Given $\mu f_j \in A$, let $T_{\mu f_j}$ be the set of types that play $\mu f_j$ under $\sigma^*$, including $\mu f_j$ itself. Define a distribution over $T_{\mu f_j}$,
\[
q_{\mu f_j}(t) = \frac{q(t) \sigma(\mu f_j|t)}{\sum_{t \in T_{\mu f_j}}q(t) \sigma(\mu f_j|t)},
\]
which is the probability of type $t$ conditional on the message $\mu f_j$.
    
    Call the optimal direct mechanism $a^*$, that responds with the action $a^*(t)$ after receiving the report $t$. It must satisfy IC across any subset of types, $T \subseteq \mathcal{T}$, but let us consider instead $w$, the solution to a relaxed local problem where we impose that IC must hold only between $t, t' \in T_{\mu f_j}$  when types are distributed according to $q_{\mu f_j}$. We will show that for all $t \in T_{\mu f_j}$, we have $w(t) = \mathbb{E}_{q_{\mu f_j}}[\theta]$, and that taking this solution across all $\mu f_j \in A$ assigns a response for the receiver to all $t \in T$ while preserving global IC, and therefore gives the optimal direct mechanism.

We know that $w(\mu f_j) \le w(t)$ for all $t \in T_{\mu f_j}$. Let $S_{\mu f_j} = \{t \in T_{\mu f_j}: w(t) = w(\mu f_j)\}$. First, note that if $S_{\mu f_j} = T_{\mu f_j}$, then we optimally have $w(t) = \mathbb{E}_{q_{\mu f_j}}[\theta]$ for all $t \in T$. This leaves us to rule out that $w(t) \neq w(t’)$ for some $t, t’ \in T_{\mu f_j}$.
    
We rule out that $w(\mu f_j) \ge \mathbb{E}_{q_{\mu f_j}}[\theta]$ and $w(t) \neq w(\mu f_j)$ for some $t \in T_{\mu f_j}$, due to the fact that the receiver can then improve their payoff while preserving IC by instead responding to every type with $w(\mu f_j)$. Next, we rule out that $w(\mu f_j) < \mathbb{E}_{q_{\mu f_j}}[\theta]$ and $w(t) \neq v(\mu f_j)$ for some $t \in T_{\mu f_j}$, since then it is possible to instead respond to every $t$ such that $w(t) = w(\mu f_j)$ with $\min_{t \in T_{\mu f_j}\setminus S} w(t)$, and, by single-peakedness of the receiver’s payoff function, this improves the receiver’s payoff.

This suffices to show that $w$ corresponds exactly to the outcome of the imitation equilibrium for all $t \in T_{\mu f_j}$, regardless of the choice of $\mu f_j \in A$. As $w$ optimizes the receiver’s payoff under a weaker set of IC constraints than $a^*$, we know that the imitation equilibrium outcome is at least as good as $a^*$ for the receiver; the reverse statement is immediate since every equilibrium outcome is implementable with commitment, and so the two are identical. \qed
\end{proof}

\begin{corollary}
    The imitation equilibrium outcome is the receiver-optimal equilibrium outcome.
\end{corollary}

\begin{proof}
    In every equilibrium $\sigma$, the receiver has a unique best response to each message, given by the action
    \[
    a_r(\beta(\cdot|m))) = \mathbb{E}_{\beta(\cdot|m)}[\theta].
    \]
    Any type of the sender therefore has an optimal feasible message to send that results in a unique optimal action that they can induce the receiver to take given the receiver's inference function. Any equilibrium outcome can therefore be implemented by the receiver through a direct mechanism that responds to every type with a deterministic message, and so there is no equilibrium that increases the receiver's payoff relative to the optimal pure-strategy mechanism outcome that is also the imitation equilibrium outcome. \qed
\end{proof}

\section{Proofs of properties of imitation equilibrium} \label{app:full_info}

\subsection{Strategies under MLRP}

\begin{proof}[Proof of Claim \ref{claim:mlrp}]
    MLRP implies that
    \[
    \frac{f_k(d)}{f_j(d)} \ge \frac{f_k(d')}{f_j(d')}
    \]
    whenever $k > j$ and $d > d'$. Then $r_j(k) = \frac{f_k(D)}{f_j(D)}$ for all $k > j$. 

    For all $j, k$, we have $r_j(J) = r_j(k)r_k(J)$; then all types with states $j < k$ that can send $\mu f_k$ can send $\frac{\mu}{r_k(J)}f_k$, and vice-versa. So, $\hat u_k(\mu) = \hat u_J(\frac{\mu}{r_k(J)})$, and for every imitation equilibrium $\sigma^*$ in which $\mu f_k$ is an on-path message, there is an outcome-equivalent equilibrium ${\sigma^*}'$ such that ${\sigma^*}'(\frac{\mu}{r_k(J)} f_J| \mu f_k) = 1$ and for every type $t$ that imitates $\mu f_k$ under $\sigma$, ${\sigma^*}'(\frac{\mu}{r_k(J)} f_J|t) = \sigma(\frac{\mu}{r_k(J)} f_J|t) + \sigma(\mu f_k|t)$, and otherwise strategies are unchanged. 
    
    It is therefore possible to construct a sequence of outcome-equivalent equilibria, beginning with the imitation equilibrium, that terminates in an equilibrium in which all types play $\mu f_J$ for some $\mu$.
\end{proof}

\subsection{Convergence to full-information outcome as $Var(g) \rightarrow 0$}

\begin{proof}[Proof of Claim \ref{claim:var}]
    We show that given any infinite sequence of games with data-mass distributions $g_1, g_2, \ldots $ on $[0,1]$ with a fixed mean and variances $Var_1, Var_2, \ldots \rightarrow 0$, that are identical in the set of states and their ex-ante distribution, the payoff to a sender conditional on the state converges in probability to their full-information payoff.
    
    In order to do so, we show that for any $\delta$ and $\epsilon$, there exists $L$ such that for all $l \ge L$, the distribution $g_l$ is such that $Pr[u_{\sigma^*}(\mu, \theta_k) < \theta_k - \delta] < \epsilon$ under every state.
    
    Define the mean of $\mu$ to be $\bar \mu$, and
    \[
    B = \max_{j \neq k} \frac{1}{r_j(k)} 
    \]
    so that for any two states $j$ and $k$, the difference between the amount of the state-$k$ distribution that the mean type under state $k$ has and the amount the mean type under state $j$ has is $\bar \mu(1-B)$.
    
    Suppose that the variance of $\mu$ under density $g_L$ is less than $\Delta^2 \epsilon^2$, where $\Delta > 0$ is an arbitrary parameter. Then there can be at most a probability $\epsilon^2$ that the state is $k$ and the sender has less than $\bar \mu - \Delta$ data distributed like $f_k$.
    A sender under state $j$ has more than $\frac{\bar \mu-\Delta}{B}$ data with probability no more than $\frac{\Delta^2 \epsilon^2 B^2}{(\bar \mu(1-B)-\Delta)^2}$.

    Recall that whenever $u_{\sigma^*}(\mu, \theta) < \theta$, the type with dataset $\mu f_\theta$ is truthful in equilibrium. So, if under state $\theta_k$ we have $Pr[u_{\sigma^*}(\mu, \theta) < \theta - \delta] \ge \epsilon$, then the type with $\mu = G^{-1}(\epsilon)$ must obtain payoff less than $\theta - \delta$, and so must all types with less data, and all such types must be truthful. But the total mass of all types \emph{not} in state $k$ that can pool with types with $\mu \in [G^{-1}(\epsilon^2), G^{-1}(\epsilon)]$ cannot exceed 
    \[
    (J-1) (1-\beta_0(\theta_k)) \frac{\Delta^2 \epsilon^2 B^2}{(\bar \mu(1-B)-\Delta)^2}
    \]
    and so the payoff to type $G^{-1}(\epsilon) f_k$ cannot be less than
    \[
    \frac{\epsilon(1-\epsilon) \theta_k}{\epsilon(1-\epsilon) + (J-1) (1-\beta_0(\theta_k)) \frac{\Delta^2 \epsilon^2 B^2}{(\bar \mu(1-B)-\Delta)^2}}
    \]
    which, for small enough $\Delta$, must be at least $\theta_k - \delta$. Since there is always $L$ large enough that $Var_L < \epsilon^2 \Delta^2$, we are done.
    
    All that remains is to note that, since the ex-ante expected payoff must always be $\mathbb{E}_{\beta_0}[\theta]$, this lower bound on the probability of payoffs less than the full-information payoffs implies a corresponding upper bound on payoffs exceeding the full-information payoffs, and so we obtain convergence of the distribution of payoffs, state-by-state, to those in the outcome where the receiver knows the truth.
\end{proof}

\subsection{Comparative statics of welfare with respect to $\beta_0(\theta_j)$}

\begin{proof}[Proof of Claim \ref{claim:beta}]
    First, let $M(v)$ be the frontier of types that attain payoff $v$ under $\mathcal{G}$ and let $M'(v)$ be the frontier of types that do so under $\mathcal{G}'$. Let $q$ be the distribution of types in $\mathcal{G}$ and $q'$ be the type distribution for $\mathcal{G}'$.

    Let $v \ge \theta_j$. Suppose for the sake of contradiction that $U(M') \setminus U(M)$ is nonempty. By Lemma \ref{lem:pool_payoff}, in the game $\mathcal{G}'$, 
    \[
    \mathbb{E}_{q'}[\theta|t \in U(M') \setminus U(M)] \ge v.
    \]
    But we also have $\mathbb{E}_{q}[\theta|t \in U(M') \setminus U(M)] \ge \mathbb{E}_{q'}[\theta|t \in U(M') \setminus U(M)]$. So then $\mathbb{E}_{q'}[\theta|t \in U(M') \setminus U(M)] \ge v$, but then the construction algorithm in game $\mathcal{G}$, if it ever reached $M$, would instead set $M'$ as a frontier for payoff $v$, and so this is impossible.

    Similarly, let $v \le \theta_j$. As with the above, we observe that if $U(M) \setminus U(M')$ is nonempty, then
    \[
    \mathbb{E}_{q}[\theta|t \in U(M) \setminus U(M')] \ge v,
    \]
    but since $\mathbb{E}_{q'}[\theta|t \in U(M) \setminus U(M')] \ge \mathbb{E}_{q}[\theta|t \in U(M) \setminus U(M')]$, this implies that $\mathbb{E}_{q}[\theta|t \in U(M) \setminus U(M')] \ge v$, which is likewise impossible by the algorithm.
\end{proof}

\begin{proposition}
    Suppose that two games $\mathcal{G}$ and $\mathcal{G}'$ are identical except for their space of outcomes $\mathcal{D}$ and $\mathcal{D}'$ and the generating distributions of data under each state, $\{f_j\}_{j=1}^J$ and $\{f_j'\}_{j=1}^J$, and let $\sigma^*$ and $\sigma^{*'}$ be their respective imitation equilibria.

    If the $r_j(k) \ge r_j'(k)$ for all $j,k$, then the receiver's payoff is greater under $\sigma^*$ than under $\sigma^{*'}$.
\end{proposition}

\begin{proof}
    Under game $\mathcal{G}$, there exists a (pure-strategy) mechanism that implements the outcome of $\sigma^{*'}$. To see this, note that the outcome of $\sigma^{*'}$ is also the outcome of $v'$, the optimal mechanism for the receiver in $\mathcal{G}'$, which respects the IC constraints that can be rewritten as
    \begin{equation}
    \tag{IC-$\mathcal{G}'$} v'(\mu f_j) \ge v'(\frac{\mu}{r_j(k)} f_k) \ \forall \mu, j, k \ \text{ and } v'(\mu_1 f_j) \ge v'(\mu_2 f_j) \ \forall j, \mu_1 > \mu_2. 
    \end{equation}
    On the other hand, in order to be implementable in $\mathcal{G}$, $v'$ need only respect the IC constraints
    \begin{equation}
    \tag{IC-$\mathcal{G}$} v'(\mu f_j) \ge v'(\frac{\mu}{r_j(k)} f_k) \ \forall \mu, j, k \ \text{ and } v'(\mu_1 f_j) \ge v'(\mu_2 f_j) \ \forall j, \mu_1 > \mu_2, 
    \end{equation}
    which are weaker.

    Since $v'$ is implementable in $\mathcal{G}$, the outcome of the optimal mechanism, and therefore the imitation equilibrium, in $\mathcal{G}$ gives at least a weak improvement over $v'$ for the receiver.
\end{proof}

\section{Proof of convergence to continuum limit} \label{app:limitpf}

\begin{proof}
The proof of theorem \ref{thm:finitelimit} uses Lemma \ref{lem:pool_payoff} to establish that, for any set of messages $M$, when the set of all types in $\mathcal{T} \setminus U(M)$ that attain a payoff of at least $v$ in $\sigma^*$ is nonempty, their payoff when they form a pool is at least $v$. Using this, we show that $u_{\sigma^*}$ is a lower bound on payoffs for types in $\mathcal{T}$ in the limit. Then, Bayes plausibility implies that
\[
\lim_{N\rightarrow \infty} \sum_{j=1}^j \beta_0(\theta_j) \int_{\mu=0}^1 u_{\sigma_N}(\mu f_j) g^j(\mu) d\mu = \mathbb{E}_{q}[\mathbb{E}_{\beta(\cdot|\sigma(t))}[\theta]|t] = \mathbb{E}_{\beta_0}[\theta],
\]
which in conjunction with the lower bound implies that in the limit outcomes must coincide exactly with $u_{\sigma^*}$ for types in $\mathcal{T}$.

Before proceeding to construct bounds on payoffs in the finite games, it is helpful to define a neighborhood of $\mathcal{T}$ as the set of types in each finite game with datasets distributed similarly to the underlying distribution in some state. For $\eta \in (0,1]$ and $k \in [0,1]$, define
\[
S_{N}(\eta, k) = \{t \in \mathcal{T}_{N}: |t| \ge k \text{ and } \exists \theta \text{ s.t. } \sup_{d}|t(d) - |t| f_\theta(d)| \le \eta \}.
\]

Fix an integer $n$. Conditional on $|t| = n$ and the true state being $\theta_j$, the Glivenko-Cantelli theorem implies that there is a bound on the probability that $\sup_d |\sum_{x=1}^d t(x) - \frac{n}{N} F_j(d)| > \eta$ that decreases to $0$ for large $n$, irrespective of $N$. Because data have a discrete distribution, this implies a similar bound on the empirical probability mass function: if $|t| = n$ and $\theta_j$ is the true state, the probability that $\sup_d |t(d) - \frac{n}{N}f_j(d)| > \eta$ is at most $b_{=}(n, \eta)$, with $\lim_{n \rightarrow \infty} b_=(n, \eta) = 0$ for all $\eta > 0$. If the true state is $\theta_{j'} \neq \theta_j$ and $|t| = n$, then the probability that $\sup_d |t(d) - \frac{n}{N}f_j(d)| > \eta$ is at least $b_{\neq}(n, \eta)$, with $\lim_{\eta \rightarrow 0} \lim_{n \rightarrow \infty} b_{\neq}(n, \eta) = 1$.

When $N$ and $k$ are large, the proportion of types that lie in $S_N(\eta,k)$ is close to 1, for all $\eta$. In particular, $\lim_{k \rightarrow 0} \lim_{\eta \rightarrow 0} \lim_{N \rightarrow \infty} q_{N}(S_{N}(\eta, k)) = 1$, 
since:
    \begin{itemize}
        \item With probability decreasing to $0$ as $k \rightarrow 0$, $|t| < k$.
        \item For fixed $k$ and $\eta$, the probability that there does not exist $\theta$ such that $\sup_{d}|t(d) - |t| f_\theta(d)| \le \eta$ given that $|t|  \ge k$ decreases to $0$ as $N k \rightarrow \infty$.
    \end{itemize}
    
We may further subdivide $S_{N}(\eta, k)$ into a set of types associated with each state,
\[
S^{j}_{N}(\eta, k) = \{t \in S_{N}(\eta, k): \sup_{d}|t(d) - |t| f_j(d)| \le \eta \}.
\]
A further consequence of the convergence of empirical distributions is that, when $N k \rightarrow \infty$ and $\eta \rightarrow 0$, the sets $(S^{j}_{N}(\eta, k))_{j = 1}^J$ are disjoint.
Additionally, for all $t \in S_{N}^j (\eta, k)$, there is a uniform lower bound on the probability that the state is $\theta_j$ given that the sender is of type $t$, which we call $w(k, \eta, N)$, with $\lim_{k \rightarrow 0} \lim_{\eta \rightarrow 0} \lim_{N \rightarrow \infty} w(k, \eta, N) = 1$. 

In addition, we can lower-bound $q_{N}(\{t \in S_{N}^j (\eta, k): \underline \mu f_j \tilde \subseteq t \tilde \subseteq \bar \mu f_j \})$ for all $k < \underline \mu < \bar \mu$. Let $\Delta(N)$ be a bound on $\sup_{n, j} |(G^j_N(n)) - G^j(n)|$ that goes to $0$ as $N \rightarrow \infty$. Observe that if $\underline \mu + \eta < |t| < \bar \mu - \eta$ and $t \in S_{N}^j (\eta, k)$, then $\underline \mu f_j \tilde \subseteq t \tilde \subseteq \bar \mu f_j$, so a lower bound is 
\begin{equation} \label{eq:l_bound_q}
q_{N}(\{t \in S_{N}^j (\eta, k): \underline \mu f_j \tilde \subseteq t \tilde \subseteq \bar \mu f_j \}) \ge \beta_0(\theta_j)(1-b_{=}(Nk, \eta))(G^j(\bar \mu - D\eta) - G^j(\underline \mu + D\eta) - \Delta(N)).
\end{equation}
Similarly, there is an upper bound on $q_{N}(\{t \in S_{N}^j (\eta, k): t \not \tilde \subseteq \underline \mu f_j  \text{ and } \bar \mu f_j\not \tilde \subseteq t \})$:
\begin{equation} \label{eq:h_bound_q}
q_{N}(\{t \in S_{N}^j (\eta, k): t \not \tilde \subseteq \underline \mu f_j,  \ \  \bar \mu f_j \not \tilde \subseteq t \}) \le \beta_0(\theta_j)(G^j(\bar \mu + D \eta) - G^j(\underline \mu - D \eta) + \Delta(N)) + (1-\beta_0(\theta_j)) b_{\neq}(k N, \eta).
\end{equation}

Now we proceed to construct a lower bound for $u_{\sigma_{N}}(\hat{\boldsymbol{\mu}} f_{\hat j})$. First, recall that $u_{\sigma_N}(\mu f_j) \ge \max_{\{f \in \mathcal{T}_{N}: t \tilde \subseteq \mu f_j\}} u_{\sigma_N}(t)$. Observe that there exists a dataset $\hat t = \frac{1}{N}(\lfloor N \hat{\boldsymbol{\mu}} f_{\hat \theta}(1) \rfloor, \ldots, \lfloor N \hat{\boldsymbol{\mu}} f_{\hat \theta}(k) \rfloor)$ in $\mathcal{T}_{N}$ and that $u_{\sigma_N}(\hat{\boldsymbol{\mu}} f_{\hat j}) \ge u_{\sigma_N}(\hat t)$. 

For a given $N$, suppose $\hat t$ belongs to the $m$th upper pool under the algorithm that constructs $\sigma_{N}$. Denote by $\hat M_N(m-1)$ the set of messages that implement the upper pools in step $1, \ldots, m-1$, and fix ${\mathcal{T}}_{N, m} = \mathcal{T}^+_N(\hat M_N(m-1))$ to be the set of remaining types at the start of the $m$th step of the algorithm that constructs $\sigma_{N}$; therefore, $\hat t$ belongs to ${\mathcal{T}}_{N, m}$. 

Let $\underline M(\epsilon, N)$ be the set of on-path messages that result in a payoff of $u_\sigma((\hat{\boldsymbol{\mu}} - \epsilon) f_{\hat j})$ under infinite data. 
We see that the set of types in $\mathcal{T}_{N,m}^+(\underline M(\epsilon, N))$ includes $\hat f$ when $N$ is large enough. From Lemma \ref{lem:pool_payoff}, there is an upper pool in $\mathcal{T}_{\hat N, m}$ that achieves a payoff of at least $u(\mathcal{T}_{\hat N, m}^+(\underline M(\epsilon, N)))$, so $u_{\sigma_{N}}(\hat{\boldsymbol{\mu}} f_{\hat j})$ is lower-bounded by $u(\mathcal{T}_{\hat N, m}^+(\underline M(\epsilon, N)))$.

Let $(\underline \mu_1(\epsilon, N), \ldots, \underline \mu_J(\epsilon, N))$ be a vector that gives the minimum mass of data under distributions $f_{1}, \ldots, f_J$, respectively, such that the dataset contains some message in $\underline M(\epsilon, N)$, and let $(\bar \mu_1(N), \ldots, \bar \mu_J(N))$ be the maximum mass of data under each distribution such that there does not exist $t \in \mathcal{T}_{\hat N, m}$ such that $t \tilde \subseteq \bar \mu_j f_j$. All $t \in \mathcal{T}_{\hat N, m}^+(\underline M(\epsilon))$ satisfy $t \not \tilde \subseteq \underline \mu_j(\epsilon, N) f_j$ and $\bar \mu_j(N) f_j \not \tilde \subseteq t$, and all $t$ satisfying $\underline \mu_j(\epsilon, N) f_j \tilde \subseteq t \tilde \subseteq \bar \mu_j(N) f_j$ for some $j$ are in $\mathcal{T}_{\hat N, m}^+(\underline M(\epsilon))$.

We may rewrite
\begin{equation}
    u(\mathcal{T}_{\hat N, m}^+(\underline M(\epsilon))) = \frac{\sum_{j=1}^J \sum_{t \in \mathcal{T}_{\hat N, m}^+(\underline M(\epsilon))} q_N(t) \theta_j \pi_N(\theta_j|t)}{\sum_{t \in \mathcal{T}_{\hat N, m}^+(\underline M(\epsilon))} q_N(t)}.
\end{equation}

Let the numerator be $Q(N, \hat{\boldsymbol{\mu}} f_{\hat j}, \epsilon)$ and the denominator be $R(N, \hat{\boldsymbol{\mu}} f_{\hat j}, \epsilon)$. Analogously to eq. \ref{eq:l_bound_q}, a lower bound for $Q(N, \hat{\boldsymbol{\mu}} f_{\hat j}, \epsilon)$ is
\begin{equation}\label{eq:l_bound_q_final}
    \underline Q(N, \hat{\boldsymbol{\mu}} f_{\hat j}, \epsilon) = \sum_j \beta_0(\theta_j)\theta_j[G^j(\bar \mu_j(N) - \eta D) - G^j(\max(\underline \mu_j(\epsilon, N) + \eta D, k)) - \Delta(N)]w(k, \eta, N)(1-b_{=}(k, \eta)),
\end{equation}
and it follows from eq. \ref{eq:h_bound_q} that an upper bound for $R$ is
\begin{equation} \label{eq:h_bound_r}
\begin{split}
    \bar R(N, \hat{\boldsymbol{\mu}} f_{\hat j}, \epsilon) = &\left( \sum_j \beta_0(\theta_j) [G^j(\bar \mu_j(N) + \eta D) - G^j(\underline \mu_j(\epsilon, N) - \eta D) + \Delta(N)] \right)\\
    & + J(1-b_{\neq}(k, \eta)) + (1-q_N(S_N(\eta, k))).
\end{split}
\end{equation}
We have 
\[
\lim_{k \rightarrow 0} \lim_{\eta \rightarrow 0} \lim \inf_{N \rightarrow \infty} \underline Q \ge \lim \inf_{N \rightarrow \infty} \sum_{j=1}^J \beta_0(\theta_j) \theta_j (G^j(\bar \mu_j(N)) - G^j(\underline \mu_j(\epsilon, N)))
\]
and 
\[
\lim_{k \rightarrow 0} \lim_{\eta \rightarrow 0} \lim \inf_{N \rightarrow \infty}\bar R \le \lim \inf_{N \rightarrow \infty} \sum_{j=1}^J \beta_0(\theta_j)(G^j(\bar \mu_j(N)) - G^j(\underline \mu_j(\epsilon, N))).
\]
Both of the RHS are finite and strictly positive for all $N$ and $\epsilon > 0$; therefore,
\begin{equation}
\begin{split}
\lim_{k \rightarrow 0} \lim_{\eta \rightarrow 0} \lim \inf_{N \rightarrow \infty} \frac{\ \underline Q\ }{\ \bar R \ } &\ge \lim \inf_N \frac{\sum_{j=1}^J \beta_0(\theta_j) \theta_j (G^j(\bar \mu_j(N)) - G^j(\underline \mu_j(\epsilon, N)))}{\sum_{j=1}^J \beta_0(\theta_j) (G^j(\bar \mu_j(N)) - G^j(\underline \mu_j(\epsilon, N)))}\\
&= \lim \inf_N \mathbb{E}[\theta|t \in T(u_{\sigma}((\hat{\boldsymbol{\mu}} -\epsilon) f_{\hat j}), \hat M_N(m-1))]\\
&\ge u_{\sigma}((\hat{\boldsymbol{\mu}} -\epsilon) f_{\hat j}),
\end{split}
\end{equation}
where the last inequality follows from Lemma \ref{lem:pool_payoff}.

Because $k$ and $\eta$ are arbitrary variables used to obtain the bound, it follows from this that $\lim_{N \rightarrow \infty} u(\mathcal{T}_{\hat N, m}^+(\underline M(\epsilon))) \ge u_{\sigma}((\hat{\boldsymbol{\mu}} -\epsilon) f_{\hat j})$. Finally, because payoffs are continuous, taking a sequence of bounds as $\epsilon \rightarrow 0$ implies that $\lim \inf_{N \rightarrow \infty} u_{\sigma_N}(\hat{\boldsymbol{\mu}} f_{\hat j}) \ge \lim_{\epsilon \rightarrow 0} \lim \inf_{N \rightarrow \infty} u(\mathcal{T}_{\hat N, m}^+(\underline M(\epsilon))) \ge u_{\sigma}(\hat{\boldsymbol{\mu}} f_{\hat j})$.

The last step is to show that
\[
\lim_{N\rightarrow \infty} \sum_{j=1}^J \beta_0(\theta_j) \int_{\mu=0}^1 u_{\sigma_N}(\mu f_j) g^j(\mu) d\mu = \mathbb{E}_{\beta_0}[\theta].
\]
Since we know already that
\[
\lim_{N\rightarrow \infty} \sum_{j=1}^J \beta_0(\theta_j) \int_{\mu=0}^1 u_{\sigma}(\mu f_j) g^j(\mu) d\mu = \mathbb{E}_{\beta_0}[\theta]
\]
and $\lim \inf_{N \rightarrow \infty} u_{\sigma_N}(\mu f_j) \ge u_{\sigma}(\mu f_j)$ for all $\mu f_j \in \mathcal{T}$, this additional fact suffices to ensure that $u_{\sigma_N}(\cdot) = u_{\sigma}(\cdot)$ over $\mathcal{T}$.

The proof comes from dividing $\mu \in (k,1)$ into $X$ chunks, with the $x$th chunk given by $(\mu_{x-1}, \mu_x]$ where $\mu_x = x\frac{1-k}{X} + k$. 

Consider types $t \in S^j_N(\eta, k)$ such that $\mu_{x-1} f_j \tilde \subseteq t \tilde \subseteq \mu_{x} f_j$: their payoff under $\sigma_N$ has to be in $[u_{\sigma_N}(\mu_{x-1} f_j), u_{\sigma_N}(\mu_{x} f_j)]$. 
This implies that
\begin{equation}
\begin{split}
    \underline V_N(k,\eta,X) &= \sum_{j=1}^J \beta_0(\theta_j) \sum_{x=1}^X u_{\sigma_N}(\mu_x f_j) [G^j(\mu_{x+1} - \eta D) - G^j(\mu_x + \eta D) - \Delta(N)](1-b_{=}(k, \eta))\\
    &\le \mathbb{E}_{\beta_0}[\theta],
\end{split}
\end{equation}

since $\underline V_N(k,\eta,X)$ is a lower bound for the total probability-weighted sum of payoffs under $\sigma_N$ over $t \in \mathcal{T}_N \bigcup S_N(\eta, k)$, while $\mathbb{E}_{\beta_0}[\theta]$ is equal to the total probability-weighted sum of payoffs under $\sigma_N$ of all types in $\mathcal{T}_N$.

Finally, the difference between $\sum_{j=1}^J \beta_0(\theta_j) \int_{\mu=0}^1 u_{\sigma_N}(\mu f_j)g^j(\mu) d\mu$ and $\underline V_N(k,\eta,X)$ vanishes as $X \rightarrow \infty$, $k \rightarrow 0$, $\eta \rightarrow 0$, and $N \rightarrow \infty$. To see this, observe that if $c$ is an upper bound on $g$ (which exists because $g$ is continuous on compact interval $[0,1]$),
\begin{equation}
\begin{split}
V_N(k,\eta,X) \ge \sum_{j=1}^J \beta_0(\theta_j) \Big( \sum_{x=1}^X & u_{\sigma_N}(\mu_x f_j) [G^j(\mu_{x+1}) - G^j(\mu_x)]\\
 &- (\theta_J b_{=}(k, \eta)[G^j(\mu_{x+1}) - G^j(\mu_x)] + 2c\eta D + \Delta(N)) \Big)\\
 \ge \sum_{j=1}^J \beta_0(\theta_j) \sum_{x=1}^X & u_{\sigma_N}(\mu_x f_j) [G^j(\mu_{x+1}) - G^j(\mu_x)]\\
 - JX \theta_J (b_{=}&(k, \eta) + 2c\eta D + \Delta(N)).
\end{split}
\end{equation}
Then, for any $\epsilon$ and $j$, define $\xi_N^j(\epsilon, X)$ to be the set of values of $x$ such that $u_{\sigma_N}(\mu_{x+1} f_j) - u_{\sigma_N}(\mu_{x} f_j) > \epsilon$. The size of $\xi_N^j(\epsilon, X)$ is at most $\frac{\theta_J}{\epsilon}$. For all $x \not \in \xi_N^j(\epsilon, X)$, we have the bound $\int_{\mu_x}^{\mu_{x+1}} u_{\sigma_N}(\mu f_j) g^j(\mu) d\mu - u_{\sigma_N}(\mu_x f_j)[G^j(\mu_{x+1}) - G^j(\mu_x)] < \epsilon [G^j(\mu_{x+1}) - G^j(\mu_x)]$. So,
\begin{equation}
\begin{split}
    \sum_{j=1}^J \beta_0(\theta_j) &\int_{\mu=0}^2 u_{\sigma_N}(\mu f_j)g^j(\mu) d\mu - \underline V_N(k,\eta,X)\\
    \le & \left(\sum_{j=1}^J \beta_0(\theta_j) \sum_{x=1}^X \left(\int_{\mu_x}^{\mu_{x+1}} u_{\sigma_N}(\mu f_j) g^j(\mu) d\mu - u_{\sigma_N}(\mu_x f_j)[G^j(\mu_{x+1}) - G^j(\mu_x)]\right)\right)\\
    & + JX \theta_J ( b_{=}(k, \eta) + 2c\eta D + \Delta(N) + (1-q_N(S_N(\eta, k))))\\
    \le & \sum_{j=1}^J \left( \beta_0(\theta_j) \left( \sum_{x \not \in \xi_N^j(\epsilon, X)} \epsilon [G^j(\mu_{x+1}) - G^j(\mu_x)] \right) + \left( \sum_{x \in \xi_N^j(\epsilon, X)} \theta_J [G^j(\mu_{x+1}) - G^j(\mu_x)]\right)\right)\\
    & + JX \theta_J (b_{=}(k, \eta) + 2c\eta D + \Delta(N) +(1-q_N(S_N(\eta, k))))\\
    \le & \epsilon + J\frac{c(1-k)}{X} \frac{\theta_J^2}{\epsilon} + JX \theta_J \Big(b_{=}(k, \eta) + 2c\eta D + \Delta(N) +(1-q_N(S_N(\eta, k)))\Big)
\end{split}
\end{equation}
since $\sum_{x \in \xi_N^j(\epsilon, X)} [G^j(\mu_{x+1}) - G^j(\mu_x)] \le \frac{c(1-k)}{X} \frac{\theta_J}{\epsilon}$. Then
\[
\lim_{\epsilon \rightarrow 0} \lim_{X \rightarrow \infty} \lim_{k\rightarrow 0} \lim_{\eta \rightarrow 0} \lim_{N\rightarrow \infty}\sum_{j=1}^J \beta_0(\theta_j) \int_{\mu=0}^2 u_{\sigma_N}(\mu f_j)g^j(\mu) d\mu - \underline V_N(k,\eta,X) = \lim_{\epsilon \rightarrow 0} \lim_{X \rightarrow \infty} \epsilon + J\frac{c(1-k)}{X} \frac{\theta_J^2}{\epsilon} = 0.
\]
Again, since $\epsilon$, $X$, $k$, and $\eta$ were all constructed variables, this implies that
\[
\lim_{N\rightarrow \infty}\sum_{j=1}^J \beta_0(\theta_j) \int_{\mu=0}^2 u_{\sigma_N}(\mu f_j)g^j(\mu) d\mu = \lim_{\epsilon \rightarrow 0} \lim_{X \rightarrow \infty} \lim_{k\rightarrow 0} \lim_{\eta \rightarrow 0} \lim_{N\rightarrow \infty} \underline V_N(k,\eta,X) \le \mathbb{E}_{\beta_0}[\theta].
\]
As it is already clear from the lower bound on $u_{\sigma_N}(\mu f_j)$ that $\lim_{N\rightarrow \infty}\sum_{j=1}^J \beta_0(\theta_j) \int_{\mu=0}^2 u_{\sigma_N}(\mu f_j)g^j(\mu) d\mu \ge \mathbb{E}_{\beta_0}[\theta]$, equality obtains.
\end{proof}

\section{Strategic convergence}
\begin{proposition} \label{strategic_convergence}
    Suppose that $\{\mathcal{G}_{N}\}_{N=1}^\infty$ converge to $\mathcal{G}_\infty$. Then for all $p^*, \rho, \eta > 0$, there is $\underline{N}(p^*, \eta)$  such that for all $N > \underline{N}(p^*, \eta)$, conditional on $|u_{\sigma^*_{N}}(t) - \theta_k| > \eta$ for all $k$, there is at least  probability $1-p^*$ that $t$ sends a message with (sup norm) distance at most $\delta$ from some $t_\infty \in \mathcal{T}_\infty$ that is on-path in $\sigma^*_\infty$ and such that $|u_{\sigma^*_N}(t) - u_{\sigma^*_\infty}(t_\infty)| \le \rho$.
\end{proposition}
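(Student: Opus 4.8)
The plan is to combine concentration of the realized dataset, the outcome-convergence result (Theorem~\ref{thm:finitelimit} and the corollary following it), and the truth-leaning/imitation structure (Lemma~\ref{lem:truth_leaning_reduction}, Theorem~\ref{thm:imitate}, Corollary~\ref{thm:segment}) to pin down, for ``regular'' realized types, first the realized payoff and then the message. Fix $p^*,\rho,\eta>0$ and a target $\delta>0$; ``for $N$ large'' will mean for all $N$ above a bound depending only on these and on $\mathcal G_\infty$. First I would isolate a high-probability event $E_N$: since $G_N\to G_\infty$ and $G_\infty$ is supported on $[a,1]$ with $a>0$, for $N$ large $n\ge\tfrac a2 N$ except with probability $e^{-cN}$, and then a coordinatewise Hoeffding bound plus a union over $\mathcal D$ gives $\|t-\mu f_\theta\|_\infty\le\delta_1$ with $\mu:=n/N\in[\tfrac a2,1]$, outside an event of probability $\gamma_N\to0$; on $E_N$ posterior concentration also yields $\mathbb E_{\pi(\cdot|t)}[\theta]\to\theta_\theta$ uniformly (using $n\to\infty$). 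Next, on $E_N$ the corollary following Theorem~\ref{thm:finitelimit} sandwiches $u_{\sigma^*_N}(t)$ between $u_{\sigma^*_\infty}((\mu\mp D\delta_1)f_\theta)\pm\varepsilon_N$ with $\varepsilon_N:=\sup_{\mathcal T_\infty}|u_{\sigma^*_N}-u_{\sigma^*_\infty}|\to0$, so by continuity of $\mu\mapsto\hat u_\theta(\mu)$ (Theorem~\ref{thm:imitate}(1)) we get $u_{\sigma^*_N}(t)=\hat u_\theta(\mu)+o(1)$ uniformly as $N\to\infty$. Hence the conditioning event $\{|u_{\sigma^*_N}(t)-\theta_k|>\eta\ \forall k\}$ forces, for $N$ large, $|\hat u_\theta(\mu)-\theta_k|>\eta/2$ for all $k$; in particular $\hat u_\theta(\mu)\neq\theta_\theta$, and Corollary~\ref{thm:segment} splits the analysis into $\mu\le z^{**}_\theta$ (so $\hat u_\theta(\mu)<\theta_\theta$) and $\mu>z^*_\theta$ (so $\hat u_\theta(\mu)>\theta_\theta$).

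In the honest tier $\hat u_\theta(\mu)<\theta_\theta$, we have $\mathbb E_{\pi(\cdot|t)}[\theta]\to\theta_\theta>\lim u_{\sigma^*_N}(t)$, so for $N$ large $\mathbb E_{\pi(\cdot|t)}[\theta]>\max_{m\subseteq t}\mathbb E_{\beta_{\sigma^*_N}(\cdot|m)}[\theta]$, and Lemma~\ref{lem:truth_leaning_reduction}(c) forces every message in $\mathrm{supp}\,\sigma^*_N(\cdot|t)$ to equal $t$. Taking $t_\infty:=\mu f_\theta$, Definition~\ref{def:cts_truth_leaning_reduction}(c), applied with $\theta_\theta>\hat u_\theta(\mu)=\max_{m\subseteq\mu f_\theta}\mathbb E_{\beta^*(\cdot|m)}[\theta]$, makes type $\mu f_\theta$ truthful, hence $\mu f_\theta$ is on-path in $\sigma^*_\infty$; and $\|t-t_\infty\|_\infty\le\delta_1\le\delta$ while $|u_{\sigma^*_N}(t)-u_{\sigma^*_\infty}(t_\infty)|=|u_{\sigma^*_N}(t)-\hat u_\theta(\mu)|\le\rho$ for $N$ large.

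The imitation tier $\hat u_\theta(\mu)>\theta_\theta$ is the heart of the argument. Now $\mathbb E_{\pi(\cdot|t)}[\theta]\to\theta_\theta<\lim u_{\sigma^*_N}(t)$, so Lemma~\ref{lem:truth_leaning_reduction}(b) makes $t$'s dataset off-path, and by (a) $t$ sends a proper subset $m\subsetneq t$ with $m\in\mathcal T_N$ and $\mathbb E_{\beta_{\sigma^*_N}(\cdot|m)}[\theta]=u_{\sigma^*_N}(t)=\hat u_\theta(\mu)+o(1)$. I would then show $m$ is within $\delta$ of an on-path limit type. The truth-leaning outcome of $\mathcal G_N$ is the one produced by \citeauthor{Rappoport22}'s top-down algorithm, whose pools and separating-message sets are, by the analysis already carried out for Theorem~\ref{thm:finitelimit}, $o(1)$-discretizations of the frontiers $\hat\mu(\cdot)$ constructed for $\mathcal G_\infty$; consequently the on-path messages available to regular types with $\eta$-separated payoffs lie within $o(1)$ of on-path messages of $\sigma^*_\infty$, which by Theorem~\ref{thm:imitate}(2) are multiples $\mu'f_k$. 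Since ``as much as possible'' is weakly optimal and a subset of $t\approx\mu f_\theta$ resembling $cf_k$ exists iff $c\le\mu/r_\theta(k)+o(1)$, the message $m$ can be taken within $o(1)$ of $c_N f_k$ for the optimal target $\theta_k\in A_\theta(\mu)$ with $c_N=\mu/r_\theta(k)+o(1)$. Put $t_\infty:=\mu' f_k$ with $\mu':=\hat\mu_k(\hat u_k(\mu/r_\theta(k)))$, or, if $\mu/r_\theta(k)$ falls in a flat (``ironing'') interval of $\hat u_k$, the point of that interval nearest $\mu/r_\theta(k)$ (every point of it is an imitation target that low-state senders mix over, hence on-path). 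Then $t_\infty$ is on-path in $\sigma^*_\infty$, $u_{\sigma^*_\infty}(t_\infty)=\hat u_k(\mu/r_\theta(k))=\hat u_\theta(\mu)$, so $|u_{\sigma^*_N}(t)-u_{\sigma^*_\infty}(t_\infty)|\le\rho$ and $\|m-t_\infty\|_\infty\le\delta$ for $N$ large. For the probability bookkeeping: on $E_N$ every message of a regular $\eta$-separated type is of the required form for $N$ large, $\Pr[E_N^c]\le\gamma_N\to0$, and $\Pr[\,|u_{\sigma^*_N}(t)-\theta_k|>\eta\ \forall k\,]$ is bounded below for $N$ large (it converges, by outcome convergence and continuity of $g$, to $\Pr_{\mu,\theta}[\,|\hat u_\theta(\mu)-\theta_k|>\eta\ \forall k\,]$; if this limit is $0$ the statement is vacuous), so the conditional probability of a ``bad'' message is $\le\gamma_N/\Pr[\cdots]\le p^*$ once $N$ exceeds a bound $\underline N(p^*,\eta)$ absorbing the finitely many ``large $N$'' thresholds (with $\delta_1$ chosen $\le\delta$).

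The hard part is precisely the claim inside the imitation tier: that the lying messages of the finite truth-leaning equilibrium cluster near multiples of the data-generating distributions, and near the particular ones used by $\sigma^*_\infty$. This requires (a) ruling out ``spread-out'' on-path messages that stay bounded away from every $cf_k$ — i.e.\ showing $t$'s optimal deviation is, up to $o(1)$, a single maximal imitation of one target state — and (b) a continuity property of \citeauthor{Rappoport22}'s algorithm: the target state and imitation amount it assigns at $(\mu,\theta)$ converge to those of the limit construction. Both are refinements of the machinery already needed for Theorem~\ref{thm:finitelimit}; the one genuinely new wrinkle is that the messaging strategy is pinned down only up to outcome-equivalence, so one must verify the clustering property for \emph{every} truth-leaning messaging strategy of $\mathcal G_N$, which holds because the only remaining latitude is mixing within ironing intervals, whose messages are themselves the $cf_k$ comprising on-path limit types.
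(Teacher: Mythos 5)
Your setup (concentration of realized types near $\mathcal{T}_\infty$, outcome convergence, the split into an honest tier and an imitation tier via Lemma \ref{lem:truth_leaning_reduction}) matches the paper, and your honest-tier argument is essentially correct. But the imitation tier --- which you yourself identify as ``the heart of the argument'' --- is asserted rather than proven, and the assertion does not follow from anything established elsewhere. You claim that the on-path messages of the finite truth-leaning equilibrium are ``$o(1)$-discretizations of the frontiers $\hat\mu(\cdot)$'' by ``the analysis already carried out for Theorem \ref{thm:finitelimit}.'' That analysis establishes only convergence of \emph{payoffs}, via lower bounds on pool values and Bayes plausibility; it says nothing about where the on-path messages sit. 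Lemma \ref{lem:truth_leaning_reduction}(a) only puts on-path messages in $\mathcal{T}_N$, the set of \emph{all} possible datasets, and nothing a priori prevents an imitator's optimal message from being the dataset of a rare, atypical type far from every $\mu' f_k$. Your items (a) and (b) --- ruling out spread-out messages and a continuity property of the Rappoport algorithm --- are precisely the content of the proposition, and neither is supplied.

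The paper closes this gap by a different route that avoids any structural convergence claim about the algorithm. It slices the payoff range into thin intervals $(x, x+\Delta]$ with $x$ bounded away from every $\theta_k$ by $\eta$, and partitions the types earning payoffs in each slice into imitators $A_N$ (those with $u_{\sigma^*_N}(t) > \mathbb{E}_{\pi(\cdot|t)}[\theta]$) and imitated honest types $B_N$ (those with $u_{\sigma^*_N}(t) < \mathbb{E}_{\pi(\cdot|t)}[\theta]$, who by Lemma \ref{lem:truth_leaning_reduction}(c) send their true datasets, which concentrate near $\mathcal{T}_\infty$ and have payoffs converging to the limit payoffs). Bayesian consistency forces the posterior mean over each slice to lie in $[x, x+\Delta]$; since imitators' full-information values sit at least $\eta/3$ below $x+\Delta$ and honest types' sit above, the mass of imitators whose messages do \emph{not} land on datasets in $B_N$ is bounded by order $\Delta(\theta_J+\eta)/(\eta/3)^2$, which is made smaller than $p^*$ by taking $\Delta$ small. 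A separate argument handles the bottom pool of very-low-data types, and summing over the partition yields the proposition. This accounting argument is what actually delivers the clustering of lying messages near on-path limit types; your proposal would need it (or an equivalent) to be complete.
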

This is a partial characterization of large-$N$ equilibrium strategies, saying that among types that obtain payoff bounded away by an arbitrarily small amount from the rewards to certainty about any particular state, the likelihood of playing a message close to their optimal message under limit-game beliefs is very high when there is plentiful access to data. In other words, these types play imitation-like strategies. This follows from the convergence theorem, since in truth-leaning equilibrium a type that receives payoff less than its full-information payoff always discloses its full dataset, so types with $u_{\sigma^*_N}(t) << \mathbb{E}_{\pi(\cdot|t)}[\theta]$ tell the truth; convergence of outcomes implies that they receive payoffs similar to those obtained by nearby types in $T_\infty$ under $\beta_{\sigma^*_\infty}$, and convergence of the type distribution implies that most such senders are indeed near some type in $T_\infty$. In aggregate a similar set of imitators must pool with such senders as the set of imitators pooling with better-state senders in $\sigma^*_\infty$, which means that types with $u_{\sigma^*_N}(t) >> \mathbb{E}_{\pi(\cdot|t)}[\theta]$ play messages close to $\mathcal{T}_\infty$ with high probability. 

The caveat is that when $|\mathbb{E}_{\beta_{\sigma^*_N}(\cdot|m)}[\theta] - \theta_k|$ is small for some $k$, then there may be no significant mass of senders playing $m$ to earn a payoff much greater or much less than their full-information payoff, which makes it hard to apply the technique of matching imitators to the imitated, though we do not have a counterexample for this case. From Corollary \ref{thm:segment}, we know that there is a positive-measure set of types that receive payoffs close to their full-information payoffs in $\sigma_\infty^*$, and the proposition does not pin down the large-$N$ limit of equilibrium strategies of types close to them, but generically, besides these, the set of types excluded from the proposition is measure-0.\footnote{Genericity here can be with respect to perturbations in $\beta_0$ or $\theta_1, \ldots, \theta_J$.}

\begin{proof}[Proof of Prop. \ref{strategic_convergence}]
Define $m_{\sigma^*_N}(t)$ to be the realization of the message played when the sender's type is $t$ -- formally, $m_{\sigma^*_N}(t)$ is a random variable with outcomes in $\mathcal{M}_N$ whose distribution is  given by the equilibrium strategy $\sigma^*_N(\cdot|t)$.

Define $A_N(x, \Delta; \epsilon)$ to be the set of types $t \in \mathcal{T}_N \bigcap T(\epsilon)$ such that $u_{\sigma^*_N}(t) > \mathbb{E}_{\pi(\cdot|t)}[\theta]$, and $u_{\sigma^*_N}(t) \in (x,x+\Delta]$.

Define $B_N(x, \Delta; \epsilon)$ to be the set of types $t' \in \mathcal{T}_N \bigcap T(\epsilon)$ with $u_{\sigma^*_N}(t') < \mathbb{E}_{\pi(\cdot|t)}[\theta]$, and $u_{\sigma^*_N}(t') \in (x,x+\Delta]$.

Define $X(\eta, \xi, \omega) = \{x \in \bigg[\max_{j} \hat u_{j}(\xi) + \omega, u_{\sigma^*}(f_{\theta_J})\bigg): \min_k |u_{\sigma^*_N}(t) - \theta_k| > \eta \}$.

For small enough $\eta$, $\xi$ and $\omega$, $X(\eta, \xi,\omega)$ is nonempty. On the other hand, $A_N(x, \Delta; \epsilon)$ and $B_N(x, \Delta; \epsilon)$ may be empty, in particular for small $N$. However, for $x \in X(\eta, \xi, \omega)$, there is large-enough $\underline{N}^*$ so that they are nonempty for all $N > \underline{N}^*$. Continuity of $u_{\sigma^*}(\mu f_j)$ ensures there is positive-measure set of types in $\mathcal{T}$ with $u_{\sigma^*_{infty}}(t) \in [x, x+\Delta]$; the bound away from $\theta_k$ for all $k$ ensures that some such types have $u_{\sigma^*_{infty}}(t) - \mathbb{E}_{\pi(\cdot|t)}[\theta] \ge \eta$ and some have $u_{\sigma^*_infty}(t) \le \mathbb{E}_{\pi(\cdot|t)}[\theta] < -\eta$, and so there is a positive-measure set of types nearby with the same properties under $\sigma^*_N$ in $\mathcal{T}_N$ for large-enough $N$.

We first prove a claim. 
\begin{claim}\label{claim:strat_conv}
    If $\{\sigma^*_{N}\}_{N=1}^\infty$ are truth-leaning equilibria of games $\mathcal{G}_{N}$ that converge to limit game $\mathcal{G}$ with imitation equilibrium $\sigma^*$, then for any $\eta > 0$, $\xi > 0$, $\omega > 0$ and $p > 0$, there exists $\bar \epsilon > 0$ and $\bar \Delta > 0$ such that, for all $x \in X(\eta, \xi, \omega)$, the probability conditional on $t \in A_N(x,\Delta;\epsilon)$ that $m_{\sigma^*_N}(t) \in B_N(x, \Delta; \epsilon)$ is at least $1-p$ in the limit as $N \rightarrow 0$ for all $\epsilon < \bar \epsilon$ and $\Delta < \bar \Delta$.
\end{claim}

\begin{proof}[Proof of Claim \ref{claim:strat_conv}]
    Expanding out the realization of $m_{\sigma^*_N}(t)$, this is equivalent to saying that for given $\eta > 0$, $\xi > 0$, $\omega>0$, $p > 0$, there exists $\bar \Delta > 0$ and $\bar \epsilon > 0$ so that for all $\Delta < \bar \Delta$ and $\epsilon < \bar \epsilon$,
\[
\lim_{N \rightarrow \infty} \frac{\sum_{t \in A_N(x,\Delta;\epsilon)} \left[q_N(t) \sum_{t' \in B_N(x, \Delta; \epsilon)} \sigma^*_N(t'|t)\right]}{\sum_{t \in A_N(x,\Delta;\epsilon)} q_N(t)} \ge 1-p.
\]
for all  $x \in X(\eta, \xi, \omega)$.

We have that for any $\xi > 0$, $\omega > 0$,
\[
\lim_{\epsilon \rightarrow 0} \lim_{N \rightarrow \infty} \min_{\theta_j} \max_{t \in \mathcal{T}_N \bigcap T(\epsilon): u_{\sigma^*_N}(t) \ge \max_{j} \hat u_{j}(\xi) + \omega} |\theta_j - \mathbb{E}_{\pi(\cdot|t)}[\theta]| = 0.
\]
Thus for any $\nu > 0$, $\omega > 0$ and $\xi > 0$, there exist small-enough $\bar \epsilon(\nu, \xi, \omega) > 0$ and large-enough $\underline N(\nu, \xi, \omega, \epsilon)$ defined for $\epsilon < \bar \epsilon(\nu, \xi)$ such that $\min_{\theta_j} \max_{t \in \mathcal{T}_N \bigcap T(\epsilon): u_{\sigma^*_N}(t) \ge \max_{j} \hat u_{j}(\xi) + \omega} |\theta_j - \mathbb{E}_{\pi(\cdot|t)}| < \nu$ for all $\epsilon < \bar \epsilon(\nu, \xi, \omega)$, $N > \underline{N}(\nu, \xi, \omega, \epsilon)$.

If we take $\Delta < \eta/3$ and $\nu < \eta/3$, then whenever $|\theta_k - \mathbb{E}_{\pi(\cdot|t)}| < \nu$ for some $k$ and $u_{\sigma^*_N}(t) \in [x, x+\Delta]$ for $x$ in $X(\eta, \xi, \omega)$, we have that $|u_{\sigma^*_N}(t) - \mathbb{E}_{\pi(\cdot|t)}| > \eta/3$. In particular, $u_{\sigma^*_N}(t) \neq \mathbb{E}_{\pi(\cdot|t)}$, so, for any $x \in X(\eta, \xi, \omega)$, and for any $\Delta, \nu < \eta/3$ and any $\xi, \omega$ and $\epsilon < \bar \epsilon(\nu, \xi, \omega)$, $N > \underline N(\nu, \xi, \omega, \epsilon)$,
\[
A_N(x, \Delta; \epsilon) \bigcup B_N(x, \Delta; \epsilon) = \{t \in T(\epsilon) \bigcap \mathcal{T}_N: u_{\sigma^*_N}(t) \in (x,x+\Delta]\}.
\]

In addition, the uniform convergence of outcomes on $\mathcal{T}$ and of the type distribution conditional on each state ensures that for all $\epsilon, \xi, \Delta$, and for all $x \ge \max_j \hat u_j(\xi) + \omega$,
\begin{equation}
    \lim_{N \rightarrow \infty} \sum_{t \in A_N(x, \Delta; \epsilon)} q(t) \mathbb{E}_{\pi(\cdot|t)}[\theta] = \sum_{\theta_j < \theta_k} \theta_j \beta_0(\theta_j)[G^j(\hat \mu_j(x+\Delta)) -G^j(\hat \mu_j(x))]
\end{equation}
and
\begin{equation}
    \lim_{N \rightarrow \infty} \sum_{t \in A_N(x, \Delta; \epsilon)} q(t)  = \sum_{\theta_j < \theta_k} \beta_0(\theta_j)[G^j(\hat \mu_j(x+\Delta)) -G^j(\hat \mu_j(x))]
\end{equation}
and likewise
\begin{equation}
    \lim_{N \rightarrow \infty} \sum_{t \in B_N(x, \Delta; \epsilon)} q(t) \mathbb{E}_{\pi(\cdot|t)}[\theta] = \sum_{\theta_j > \theta_k} \theta_j \beta_0(\theta_j)[G^j(\hat \mu_j(x+\Delta)) -G^j(\hat \mu_j(x))]
\end{equation}
and
\begin{equation}
    \lim_{N \rightarrow \infty} \sum_{t \in B_N(x, \Delta; \epsilon)} q(t) = \sum_{\theta_j > \theta_k} \beta_0(\theta_j)[G^j(\hat \mu_j(x+\Delta)) -G^j(\hat \mu_j(x))].
\end{equation}
Supposing that $x \in X(\eta, \xi, \omega)$ for some $\theta_k$, and $\epsilon < \bar \epsilon(\nu, \xi, \omega)$,  and $\Delta, \nu < \eta/3$, we know from the above that
\begin{equation}
\begin{split}
x & \le \frac{\sum_{\theta_j} \theta_j \beta_0(\theta_j)[G^j(\hat \mu_j(x+\Delta)) -G^j(\hat \mu_j(x))]}{\sum_{\theta_j} \beta_0(\theta_j)[G^j(\hat \mu_j(x+\Delta)) -G^j(\hat \mu_j(x))]}\\
& = \lim_{N \rightarrow \infty}\frac{\sum_{t \in B_N(x, \Delta; \epsilon)} q(t) \mathbb{E}_{\pi(\cdot|t)}[\theta] + \sum_{t \in A_N(x, \Delta; \epsilon)} q(t) \mathbb{E}_{\pi(\cdot|t)}[\theta]}{\sum_{t \in B_N(x, \Delta; \epsilon)} q(t) + \sum_{t \in A_N(x, \Delta; \epsilon)} q(t)}
\end{split}
\end{equation}

On the other hand, 
\begin{equation}
\begin{split}
 & x+\Delta \\
 & \ge \lim_{N \rightarrow \infty}\frac{\sum_{t \in B_N(x, \Delta; \epsilon)} q(t) \mathbb{E}_{\pi(\cdot|t)}[\theta] + \sum_{t \in A_N(x, \Delta; \epsilon)} \left[q(t) \mathbb{E}_{\pi(\cdot|t)}[\theta] \sum_{t' \in B_N(x, \Delta; \epsilon)}\sigma_N^*(t'|t)\right]}{\sum_{t \in B_N(x, \Delta; \epsilon)} q(t) + \sum_{t \in A_N(x, \Delta; \epsilon)} \left[q(t) \sum_{t' \in B_N(x, \Delta; \epsilon)}\sigma_N^*(t'|t)\right]}\\
 & = \lim_{N \rightarrow \infty} \frac{\sum_{\theta_j} \theta_j \beta_0(\theta_j)[G^j(\hat \mu_j(x+\Delta)) -G^j(\hat \mu_j(x))] -  \sum_{t \in A_N(x, \Delta;\epsilon)}q(t) \mathbb{E}_{\pi(\cdot|t)}[\theta]\sum_{t' \in B_N(x, \Delta;\epsilon)}(1-\sigma_N^*(t|t'))}{\sum_{\theta_j} \beta_0(\theta_j)[G^j(\hat \mu_j(x+\Delta)) -G^j(\hat \mu_j(x))] - \sum_{t \in A_N(x, \Delta;\epsilon)}q(t) \sum_{t' \in B_N(x, \Delta;\epsilon)}(1-\sigma_N^*(t|t'))}.
\end{split}
\end{equation}
But, we know that $\mathbb{E}_{\pi(\cdot|t)}[\theta] < u_{\sigma^*_N}(t) -  \eta/3 \le x+\Delta - \eta/3$ for any $t \in A_N(x, \Delta; \epsilon)$. Then, combining, we have
\begin{equation} \label{eq:delta_bound}
    \begin{split}
        0 \le & (x + \Delta) \left(\sum_{\theta_j} \beta_0(\theta_j)[G^j(\hat \mu_j(x+\Delta)) -G^j(\hat \mu_j(x))] - \lim_{N \rightarrow \infty} \sum_{t \in A_N(x, \Delta;\epsilon)}q(t) \sum_{t' \in B_N(x, \Delta;\epsilon)}(1-\sigma_N^*(t|t')) \right)\\
        & - x\left(\sum_{\theta_j}  \beta_0(\theta_j)[G^j(\hat \mu_j(x+\Delta)) -G^j(\hat \mu_j(x))]\right) -  \lim_{N \rightarrow \infty} \sum_{t \in A_N(x, \Delta;\epsilon)}q(t) \mathbb{E}_{\pi(\cdot|t)}[\theta]\sum_{t' \in B_N(x, \Delta;\epsilon)}(1-\sigma_N^*(t|t'))\\
        \le & \Delta \left(\sum_{\theta_j}  \beta_0(\theta_j)[G^j(\hat \mu_j(x+\Delta)) -G^j(\hat \mu_j(x))]\right)\\
        & \ \ \ \ \ \ \ \ - [(x+\Delta)-(x+\Delta - \eta/3)] \lim_{N \rightarrow \infty} \sum_{t \in A_N(x, \Delta;\epsilon)}q(t) \sum_{t' \in B_N(x, \Delta;\epsilon)}(1-\sigma_N^*(t|t'))\\
        = & \lim_{N \rightarrow \infty} \Delta  \left(\sum_{t \in A_N(x, \Delta; \epsilon) \bigcup B_N(x, \Delta; \epsilon)} q(t)\right) - \frac{\eta}{3} \lim_{N \rightarrow \infty} \sum_{t \in A_N(x, \Delta;\epsilon)}q(t) \sum_{t' \in B_N(x, \Delta;\epsilon)}(1-\sigma_N^*(t|t')).
    \end{split}
\end{equation}

Finally, since 
\[
\lim_{N\rightarrow \infty}\frac{\sum_{t \in A_N(x, \Delta; \epsilon) \bigcup B_N(x, \Delta; \epsilon)} \mathbb{E}_{\pi(\cdot|t)}[\theta]q(t)}{\sum_{t \in A_N(x, \Delta; \epsilon) \bigcup B_N(x, \Delta; \epsilon)} q(t)} \le x+ \Delta,
\]
we have
\[
\sum_{t \in B_N(x, \Delta; \epsilon)} (\theta_{k+1} - x - \Delta)q(t) \le \sum_{t \in A_N(x, \Delta; \epsilon)} (x + \Delta - \theta_{1})q(t)
\]
and so
\[
\frac{\sum_{t \in A_N(x, \Delta; \epsilon)} q(t)}{\sum_{t \in B_N(x, \Delta; \epsilon)} q(t)} \ge \frac{\eta/3}{\theta_J}.
\]
From the above and eq. \ref{eq:delta_bound}, we have 
\[
\lim_{N \rightarrow \infty} \frac{\sum_{t \in A_N(x,\Delta;\epsilon)} \left[q_N(t) \sum_{t' \in B_N(x, \Delta; \epsilon)} (1-\sigma^*_N(t'|t))\right]}{\sum_{t \in A_N(x,\Delta;\epsilon)} q_N(t)} \le \frac{\Delta(\theta_J + \eta)}{(\eta/3)^2}.
\]
Then for given $p$, $\xi$, $\omega$, $\eta$, and $\nu < \eta/3$ and $\epsilon < \bar \epsilon(\nu, \xi, \omega)$, as long as $\Delta \le \frac{p \eta^2}{9 (\theta_J + \eta)}$, we have 
\[
\lim_{N \rightarrow \infty} \frac{\sum_{t \in A_N(x,\Delta;\epsilon, \xi)} \left[q_N(t) \sum_{t' \in B_N(x, \Delta; \epsilon, \xi)} \sigma^*_N(t'|t)\right]}{\sum_{t \in A_N(x,\Delta;\epsilon, \xi)} q_N(t)} \ge 1-p
\]
and this bound is independent of $x$. So, letting $\bar \epsilon = \bar \epsilon(\eta/3, \xi, \omega)$ and $\bar \Delta = \frac{p \eta^2}{9 (\theta_J + \eta)}$, we have proven the claim.
\end{proof}

Next, let $\underline{u} = \hat u_j(0)$, where the choice of $j$ for the definition does not matter. Suppose the following condition holds for some positive $\eta$:
\begin{condition} \label{cond:u_lbar}
$\min_k|\theta_k - \underline{u}| > \eta$ and there is $\tilde \xi > 0$ such that $G^j(\tilde \xi) > 0$ for all $j$, and $\hat u_j(\tilde \xi) = \underline{u}$ for some $j$.
\end{condition}
This says that a sender with a positive amount $\tilde \xi$ of distribution $f_j$ gets the same payoff as the sender with no data, and that that payoff is bounded away from any $\theta_j$ by $\eta$. When showing that senders must play similarly under $\sigma^*_N$ in the limit as the average dataset becomes large, we consider separately the small fraction of senders that, by chance, receive very little data, i.e. those with $|t| \le \xi$, and in this case
\[
\lim_{\xi \rightarrow 0} \lim_{\omega \rightarrow 0} \min\{\mu: \exists j \text{ s.t. } \hat u_j(\mu) > \xi + \omega\} > 0,
\]
and there is a positive-measure set of types that may be pooled with those low-data senders.

Let us prove a similar claim to the previous one.
\begin{claim}\label{claim:lower_set}
    If there exists $\eta$ such that condition \ref{cond:u_lbar} holds, then for given $p >0$, there exists $\bar \epsilon > 0$ such that for $\epsilon < \bar \epsilon$, if we define
\[
S(\xi, \omega, \epsilon) = \{t \in \mathcal{T}_N \bigcap T(\epsilon): |t| \le \xi \text{ and } u_{\sigma^*_N}(t) \in (x,x+\Delta] \},
\]
then letting $a_N(\xi, \omega, \epsilon) = A_N(\underline{u} - \omega, 2 \omega; \epsilon)\setminus S(\xi,\omega,\epsilon)$ and $b_N(\xi, \omega, \epsilon) = B_N(\underline{u} - \omega, 2 \omega; \epsilon)\bigcup S(\xi,\omega,\epsilon)$, we have
\[
\lim_{\xi \rightarrow 0} \lim_{\omega \rightarrow 0} \lim_{N \rightarrow \infty} \frac{\sum_{t \in a_N(\xi, \omega, \epsilon)} \left[q_N(t) \sum_{t' \in b_N(\xi, \omega, \epsilon)} \sigma^*_N(t'|t)\right]}{\sum_{t \in a_N(\xi, \omega, \epsilon)} q_N(t)} \ge 1-p.
\]
\end{claim}

\begin{proof}[Proof of Claim \ref{claim:lower_set}]
    To start, note that in this case, we have for all $\xi > 0$ that
\[
\lim_{\epsilon \rightarrow 0} \lim_{N \rightarrow 0} \min_{\theta_j}\max_{t: |t| \le \xi} |\theta_j - \mathbb{E}_{\pi(\cdot|t)}[\theta]| = 0.
\]
Then for all $t \in a_N(\xi, \omega, \epsilon)$, there is some $\bar \epsilon'(\eta, \xi)$ and $\underline{N}'(\eta,\xi, \epsilon)$ so that for all $\epsilon < \bar \epsilon'(\eta, \xi)$ and $N > \underline{N}'(\eta,\xi,\epsilon)$, we have $u_{\sigma^*_N}(t) - \mathbb{E}_{\pi(\cdot|t)}[\theta] \ge \eta/3$.

We know that 
\begin{equation}
\begin{split}
\underline{u} - \omega &\le \lim_{N \rightarrow \infty}\frac{\sum_{t \in b_N(\xi, \omega, \epsilon)} q(t) \mathbb{E}_{\pi(\cdot|t)}[\theta] + \sum_{t \in a_N(\xi, \omega, \epsilon)} q(t) \mathbb{E}_{\pi(\cdot|t)}[\theta]}{\sum_{t \in b_N(\xi, \omega, \epsilon)} q(t) + \sum_{t \in a_N(\xi, \omega, \epsilon)} q(t)}\\
& = \frac{\sum_{\theta_j} \theta_j \beta_0(\theta_j)G^j(\hat \mu_j(\underline{u} + \omega))}{\sum_{\theta_j} \beta_0(\theta_j)G^j(\hat \mu_j(\underline{u} + \omega))}
\end{split}
\end{equation}
and
\begin{equation}
\begin{split}
\underline{u} + \omega & \ge \lim_{N \rightarrow \infty}\frac{\sum_{t \in b_N(\xi, \omega, \epsilon)} q(t) \mathbb{E}_{\pi(\cdot|t)}[\theta] + \sum_{t \in a_N(\xi, \omega, \epsilon)} \left[q(t) \mathbb{E}_{\pi(\cdot|t)}[\theta] \sum_{t' \in b_N(\xi, \omega, \epsilon)}\sigma_N^*(t'|t)\right]}{\sum_{t \in b_N(\xi, \omega, \epsilon)} q(t) + \sum_{t \in a_N(\xi, \omega, \epsilon)} \left[q(t) \sum_{t' \in b_N(\xi, \omega, \epsilon)}\sigma_N^*(t'|t)\right]}\\
 & = \lim_{N \rightarrow \infty} \frac{\sum_{\theta_j} \theta_j \beta_0(\theta_j)G^j(\hat \mu_j(\underline{u} + \omega)) -  \sum_{t \in a_N(\xi, \omega, \epsilon)}q(t) \mathbb{E}_{\pi(\cdot|t)}[\theta]\sum_{t' \in b_N(\xi, \omega, \epsilon)}(1-\sigma_N^*(t|t'))}{\sum_{\theta_j} \beta_0(\theta_j)G^j(\hat \mu_j(\underline{u} + \omega)) - \sum_{t \in a_N(\xi, \omega, \epsilon)}q(t) \sum_{t' \in b_N(\xi, \omega, \epsilon)}(1-\sigma_N^*(t|t'))}.
\end{split}
\end{equation}
Then, just as in eq. \ref{eq:delta_bound}, we have when $\epsilon < \bar \epsilon'(\eta, \xi, \omega)$ that
\[
\lim_{N \rightarrow \infty} 2 \omega \left(\sum_{t \in a_N(\xi, \omega, \epsilon) \bigcup b_N(\xi, \omega, \epsilon)} q(t)\right) - \frac{\eta}{3} \lim_{N \rightarrow \infty} \sum_{t \in a_N(x, \Delta; \epsilon)} q(t) \sum_{t' \in B_N(x, \Delta; \epsilon)}(1-\sigma^*_N(t|t')) \ge 0.
\]
Since there is some $j$ such that $\hat u_j(\tilde \xi)  = \underline{u}$, we have the bound
\[
\lim_{N \rightarrow \infty}\sum_{t \in a_N(\xi, \omega, \epsilon)} q_N(t) \ge \beta_0(\theta_j) [G^j(\tilde \xi) - G^j(\xi)].
\]
So, we have that 
\[
\lim_{N \rightarrow \infty} \frac{\sum_{t \in a_N(\xi,\omega, \epsilon)} \left[q_N(t) \sum_{t' \in b_N(\xi, \omega, \epsilon)} (1-\sigma^*_N(t'|t))\right]}{\sum_{t \in a_N(\xi,\omega, \epsilon)} q_N(t)} \le \frac{2 \omega (1 + \beta_0(\theta_j) [G^j(\tilde \xi) - G^j(\xi)])}{\beta_0(\theta_j) [G^j(\tilde \xi) - G^j(\xi)] \eta/3}.
\]
This implies the claim.
\end{proof}

Finally, we use these claims to prove the proposition. 

For any $\delta$ and $\rho$, there are $\xi^*$, $\epsilon^* > 0$ and $N^*$ such that for all $\xi < \xi^*$, $\epsilon < \epsilon^*,$  $N > N^*(\epsilon, \xi)$, and $\omega > 0$, any $t'$ is in either $B_N(x, \Delta; \epsilon)$ for some $\Delta$ and $x > \xi + \omega$, or in $b_N(\xi, \omega, \epsilon)$ if it is at most a distance $\delta$ away from some $t \in \mathcal{T}$ with $|u_{\sigma^*}(t) - u_{\sigma^*_N}(t')| \le \rho$.

In particular, find $l$ such that $\theta_l \le \bar u < \theta_{l+1}$, and then for any $K$ we can construct the collection of sets
\[
\left\{B_N\left(\xi + \omega + k\frac{\theta_l - (\xi + \omega)}{K}, \frac{\theta_l - (\xi + \omega)}{K}; \epsilon \right)\right\}_{k = 0}^{K-1}
\]
and
\[
\left\{B_N\left(\theta_{j-1} + \eta + k\frac{\theta_{j} - \eta - (\theta_{j-1} + \eta)}{K}, \frac{\theta_{j} - \eta - (\theta_{j-1} + \eta)}{K}; \epsilon \right)\right\}_{k=0}^K, \ \ \  \text{for all } j > l,
\]
essentially partitioning the imitated senders by the payoffs they receive, into intervals that are disjoint, cover all attained payoffs except $[0, \xi + \omega]$ and the intervals $[\theta_j - \eta, \theta_j + \eta)$ and are arbitrarily small as $K \rightarrow \infty$.

Let $C(N, K, \xi, \omega, \eta; \epsilon)$ be the collection that is the union of these collections, and also includes, if condition \ref{cond:u_lbar} holds for $\eta$, the set $b_N(\xi, \omega, \epsilon)$. Call the elements of $C(N, K, \xi, \omega, \eta; \epsilon)$ by $C_1(N, K, \xi, \omega, \eta; \epsilon), \ldots, C_I(N, K, \xi, \omega, \eta; \epsilon)$.

Likewise, we can construct the collection of sets
\[
\left\{A_N\left(\xi + \omega + k\frac{\theta_l - (\xi + \omega)}{K}, \frac{\theta_l - (\xi + \omega)}{K}; \epsilon \right)\right\}_{k = 0}^{K-1}
\]
and
\[
\left\{A_N\left(\theta_{j-1} + \eta + k\frac{\theta_{j} - \eta - (\theta_{j-1} + \eta)}{K}, \frac{\theta_{j} - \eta - (\theta_{j-1} + \eta)}{K}; \epsilon \right)\right\}_{k=0}^K, \ \ \  \text{for all } j > l,
\]
which are corresponding sets of imitating types; let $D(N, K, \xi, \omega, \eta; \epsilon)$ be the collection containing these as well as $a_N(\xi, \omega, \epsilon)$ if condition \ref{cond:u_lbar} holds for $\eta$. Call the elements of $D(N, K, \xi, \omega, \eta; \epsilon)$ by $D_1(N, K, \xi, \omega, \eta; \epsilon), \ldots, D_I(N, K, \xi, \omega, \eta; \epsilon)$.

The proposition follows from proving that the ex-ante probability that the sender imitates some $t'$ that is in an element of $C(N, K, \xi, \omega, \eta; \epsilon)$ converges to $1$ in the large $N$ limit and in the limit as $K \rightarrow \infty$ and $\epsilon, \omega, \xi, \eta \rightarrow 0$.

To see this, first observe that, from the above two claims, if we define
\[
LB(N, K, \xi, \omega, \eta; \epsilon) = \min_{i} Pr(m_{\sigma^*_N}(t) \in C_i(N, K, \xi, \omega, \eta; \epsilon)| t \in D_i(N, K, \xi, \omega, \eta; \epsilon)),
\]
then $\lim_{\xi \rightarrow 0} \lim_{\omega \rightarrow 0}\lim_{K \rightarrow \infty}\lim_{\epsilon \rightarrow 0}\lim_{N \rightarrow \infty} UB(N, K, \xi, \omega, \eta; \epsilon) = 1$; note that this is a uniform bound over all $i$.

Then, letting $T$ denote a set of types that is an element of $C(N, K, \xi, \omega, \eta; \epsilon)$, we have
\begin{equation}
    \begin{split}
        &Pr\left(m_{\sigma^*_N}(t) \in [\bigcup_{C_N(K, \xi, \omega, \eta; \epsilon)} T] \bigg| \min_k |u_{\sigma^*_N}(t) - \theta_k| > \eta \right) \\
        \ge &\sum_{i = 1}^I \Bigg[ Pr\left(t \in D_i(N, K, \xi, \omega, \eta; \epsilon)| \min_k |u_{\sigma^*_N}(t) - \theta_k| > \eta \right)\\
        & \ \ \ \ \ \ \ \ \ \cdot Pr(m_{\sigma^*_N}(t) \in C_i(N, K, \xi, \omega, \eta; \epsilon)| t \in D_i(N, K, \xi, \omega, \eta; \epsilon))\\
        & \ \ + Pr(m_{\sigma^*_N}(t) \in C_i(N, K, \xi, \omega, \eta; \epsilon)\bigg| \min_k |u_{\sigma^*_N}(t) - \theta_k| > \eta)\Bigg]\\
        & \ge \frac{Pr(t \in \bigcup_i D_i(N, K, \xi, \omega, \eta; \epsilon)) LB(N, K, \xi, \omega, \eta; \epsilon) + Pr(t \in \bigcup_i C_i(N, K, \xi, \omega, \eta; \epsilon))}{Pr(\min_k |u_{\sigma^*_N}(t) - \theta_k| > \eta)}.
    \end{split}
\end{equation}
Since $\lim_{\xi \rightarrow 0} \lim_{\omega \rightarrow 0}\lim_{K \rightarrow \infty}\lim_{\epsilon \rightarrow 0}\lim_{N \rightarrow \infty} PR(t \in \bigcup_i [D_i(N, K, \xi, \omega, \eta; \epsilon) \bigcup D_i(N, K, \xi, \omega, \eta; \epsilon)]) = Pr(\min_k |u_{\sigma^*_N}(t) - \theta_k > \eta)$, we have
\begin{equation}
    \begin{split}
&\lim_{\xi \rightarrow 0} \lim_{\omega \rightarrow 0}\lim_{K \rightarrow \infty}\lim_{\epsilon \rightarrow 0}\lim_{N \rightarrow \infty}\\
& \ \ \ \frac{Pr(t \in \bigcup_i D_i(N, K, \xi, \omega, \eta; \epsilon)) LB(N, K, \xi, \omega, \eta; \epsilon) + Pr(t \in \bigcup_i C_i(N, K, \xi, \omega, \eta; \epsilon))}{Pr(\min_k |u_{\sigma^*_N}(t) - \theta_k| > \eta)}\\
& = 1
\end{split}
\end{equation}
for all $\eta$, thus proving the proposition.

\end{proof}

\end{document}